\author{Geri Gokaj}{Karlsruhe Institute of Technology, Karlsruhe, Germany}{geri.gokaj@kit.edu}{https://orcid.org/0009-0002-7500-6848}{}
\title{Completeness Theorems for $k$-SUM and Geometric Friends: Deciding Fragments of Linear Integer Arithmetic} 
\titlerunning{Completeness Theorems for $k$-SUM and Geometric Friends}
\author{Marvin Künnemann}{Karlsruhe Institute of Technology, Karlsruhe, Germany}{marvin.kuennemann@kit.edu}{}{}
\authorrunning{G. Gokaj and M. Künnemann} 
\keywords{fine-grained complexity theory, descriptive complexity, presburger arithmetic, completeness results, k-SUM} 
\newcommand{\FOPZ}{\mathsf{FOP}_{\mathbb{Z}}}
\newcommand{\PTO}{\mathsf{PTO}}
\newcommand{\tOh}{\tilde{O}}
\date{}
\begin{document}

\maketitle

\begin{abstract}
In the last three decades, the $k$-SUM hypothesis has emerged as a satisfying explanation of long-standing time barriers for a variety of algorithmic problems. Yet to this day, the literature knows of only few proven consequences of a refutation of this hypothesis. Taking a descriptive complexity viewpoint, we ask: What is the largest logically defined class of problems \emph{captured} by the $k$-SUM problem?

	To this end, we introduce a class $\mathsf{FOP}_{\mathbb{Z}}$ of problems corresponding to deciding sentences in Presburger arithmetic/linear integer arithmetic over finite subsets of integers.
	We establish two large fragments for which the $k$-SUM problem is complete under fine-grained reductions:
	\begin{enumerate}
		\item The $k$-SUM problem is complete for deciding the sentences with $k$ existential quantifiers.
		\item The $3$-SUM problem is complete for all $3$-quantifier sentences of $\mathsf{FOP}_{\mathbb{Z}}$ expressible using at most $3$ linear inequalities.
	\end{enumerate}
	Specifically, a faster-than-$n^{\lceil k/2 \rceil \pm o(1)}$ algorithm for $k$-SUM (or faster-than-$n^{2 \pm o(1)}$ algorithm for $3$-SUM, respectively) directly translate to polynomial speedups of a general algorithm for \emph{all} sentences in the respective fragment.
	
	Observing a barrier for proving completeness of $3$-SUM for the entire class $\FOPZ$, we turn to the question which other -- seemingly more general -- problems are complete for $\FOPZ$. In this direction, we establish $\FOPZ$-completeness of the \emph{problem pair} of Pareto Sum Verification and Hausdorff Distance under $n$ Translations under the $L_\infty$/$L_1$ norm in $\mathbb{Z}^d$. In particular, our results invite to investigate Pareto Sum Verification as a high-dimensional generalization of 3-SUM.

\end{abstract}

\section{Introduction}
Consider a basic question in complexity theory: How can we determine for which problems an essentially quadratic-time algorithm is best possible? If a given problem $A$ admits an algorithm running in $n^{2+o(1)}$ time, and it is known that $A$ cannot be solved in time $O(n^{2-\epsilon})$ for any $\epsilon>0$, then clearly the $n^{2+o(1)}$ algorithm has \emph{optimal} runtime, up to subpolynomial factors.
This question can be asked more generally for any $k\geq 1$ and time $n^{k \pm o(1)}$.
To this day, the theoretical computer science community is far from able to resolve this question unconditionally. However, a surge of results over recent years uses
conditional lower bounds based on plausible hardness assumptions to shed some light on why some problems seemingly cannot be solved in time $O(n^{k-\epsilon})$ for any $\epsilon >0$.
Most notably, reductions from $k$-OV, $k$-SUM and the weighted $k$-clique problem have been used to establish $n^{k-o(1)}$-time conditional lower bounds, often matching known algorithms; see~\cite{williams2018some} for a detailed survey.

In this context, the $3$-SUM hypothesis is arguably the first -- and particularly central -- 
hardness assumption for conditional lower bounds. Initially introduced to explain various quadratic-time barriers
observed in computational geometry~\cite{DBLP:journals/comgeo/GajentaanO95}, it has since been used to 
show quadratic-time hardness for a wealth of problems from various 
fields~\cite{DBLP:journals/siamcomp/WilliamsW13, DBLP:conf/stoc/Patrascu10,DBLP:conf/focs/AbboudW14, DBLP:conf/soda/KopelowitzPP16, DBLP:conf/stoc/0001GS20, DBLP:conf/stoc/AbboudBKZ22, DBLP:conf/stoc/ChanWX22}.
Its generalization, the $k$-SUM\footnote{The $k$-SUM problem asks, given sets $A_1,\dots ,A_k$ of $n$ numbers, whether there exist $a_1 \in A_1 ,\dots ,a_k \in A_k$ such that $\sum_{i=1}^{k}a_i=0$.
The $k$-SUM hypothesis states that for no $\epsilon>0$ there exists a $O(n^{\lceil k/2 \rceil-\epsilon})$ time algorithm that solves $k$-SUM.} hypothesis, has led to further conditional lower bounds beyond the quadratic-time
regime~\cite{DBLP:journals/siamcomp/Erickson99, DBLP:conf/icalp/AbboudL13, DBLP:conf/focs/AbboudBBK17, DBLP:conf/nips/AbboudBBK20, DBLP:conf/focs/Kunnemann22}. For a more comprehensive overview, we refer to~\cite{williams2018some}.

The centrality of the $3$-SUM hypothesis for understanding quadratic-time barriers begs an interesting question: Does $3$-SUM fully capture quadratic-time solvability, in the sense that it is hard for the entire class $\mathsf{DTIME}(n^2)$? Alas, Bloch, Buss, and Goldsmith~\cite{bloch1994hard} give evidence that we are unlikely to prove this: If $3$-SUM is hard for $\mathsf{DTIME}(n^2)$ under quasilinear reductions, then $\mathsf{P} \ne \mathsf{NP}$.
Thus, to understand precisely the role of $3$-SUM to understand quadratic-time computation, the more reasonable question to ask is: 

\begin{center}
\em What is the largest class $\mathcal{C}$ of problems such that $3$-SUM is $\mathcal{C}$-hard?\footnote{Note that there are different reasonable notions of reductions to consider. Rather than the quasilinear reductions used by Bloch et al., we will consider the currently more commonly used notion of fine-grained reductions; see Section~\ref{sec:contributions} for details on the notion of completeness that we will use.}
\end{center}

Finding a large class $\mathcal{C}$ for which $3$-SUM is hard may be seen as giving evidence for the $3$-SUM hypothesis. Furthermore, such a result may clarify the true expressive power of the $3$-SUM hypothesis, much like the NP-completeness of 3-SAT highlights its central role for polynomial intractability.

\subsection{Our approach}
We approach our central question from a descriptive complexity perspective. This line of research
has been initiated by Gao et al.~\cite{GaoIKW19}, who establish the sparse OV problem as complete for the class of model checking first-order properties.
One can interpret this result as showing that the OV problem expresses relational database queries in the sense that a truly subquadratic algorithm for OV would improve the fine-grained data 
complexity of such queries (see~\cite{GaoIKW19} for details). Related works further delineate the fine-grained hardness of model checking first-order properties and related 
problem classes~\cite{DBLP:conf/csl/Williams14, DBLP:conf/coco/BringmannFK19, DBLP:conf/approx/BringmannCFK21, DBLP:journals/algorithmica/AnGIJKN22, DBLP:conf/icalp/BringmannCFK22,FischerKR24}, see Section~\ref{sec:relatedWork} for more discussion. 

Towards continuing the line of research on fine-grained completeness theorems,
we introduce a class of problems corresponding to deciding formulas in linear integer arithmetic over finite sets of integers. Specifically, consider the vectors $$x_1 =(x_1[1],\dots, x_1[d_1]) , \dots, x_k=(x_k[1],\dots, x_k[d_k])$$ as quantified variables, and let $t_1,\dots ,t_l$ be free variables.
Moreover, let 
\[X:=\{x_1[1],\dots, x_1[d_1], \dots x_k[1], \dots, x_k[d_k], t_1,\dots ,t_l\},\]
and let $\psi$ be a quantifier-free linear arithmetic formula over variables in $X$.
We consider the model-checking problem for formulas $\phi$ in the prenex normal form 
\[ \phi:=Q_1 x_1 \dots Q_k x_k: \psi, \]
where the quantifiers $Q_1, \dots, Q_k \in \{ \exists, \forall \}$ are arbitrary. Formally, for such a $\phi$, we define the model checking problem $\mathsf{FOP}_{\mathbb{Z}}(\phi)$ as follows\footnote{Below, we use the notation $\psi[(t_1,\dots, t_l) \backslash (\hat{t_1},\dots, \hat{t_l})]$ to denote the substitution of the variables $t_1,\dots,t_l$ by $\hat{t_1},\dots,\hat{t_l}$ respectively.}
\begin{align}
	\mathsf{FOP}_{\mathbb{Z}}(\phi):  & \label{model_check}\\
\textbf{Input:}& \text{ Finite sets }A_1 \subseteq \mathbb{Z}^{d_1} ,\dots, A_k \subseteq  \mathbb{Z}^{d_k}\text{ and }\hat{t_1},\dots, \hat{t_l} \in \mathbb{Z}. \notag \\
\textbf{Problem:} &\text{ Does } Q_1 x_1 \in A_1 \dots Q_k x_k \in A_k : \psi[(t_1,\dots, t_l) \backslash (\hat{t_1},\dots, \hat{t_l})] \text{ hold?} \notag
\end{align}
We let $n := \max_i\{|A_i|\}$ denote the input size and will assume throughout the paper that all input numbers (i.e., the coordinates of the vectors in $A_1,\dots, A_k$ and the values $\hat{t_1},\dots, \hat{t_l}$) are chosen from a polynomially sized universe, i.e., $\{-U, \dots, U\}$ with $U\le n^{c}$ for some~$c$.
Let $\mathsf{FOP}_{\mathbb{Z}}$ be the union of all $\mathsf{FOP}_{\mathbb{Z}}(\phi)$ problems, where $\phi$ has at least 3 quantifiers.\footnote{It is not too difficult to see that all formulas with 2 quantifiers can be model-checked in near-linear time; see the full version for details.}
Besides $3$-SUM, a variety of interesting problems is contained in $\mathsf{FOP}_{\mathbb{Z}}$; we discuss a few notable examples below and in the full version.

Frequently, we will distinguish formulas in $\mathsf{FOP}_{\mathbb{Z}}$ using their quantifier structure; e.g., $\mathsf{FOP}_{\mathbb{Z}}(\exists \exists \forall)$ describes
the class of model checking problems $\mathsf{FOP}_{\mathbb{Z}} (\phi)$ where in $\phi$ we have $Q_1=Q_2=\exists $ and $Q_3=\forall.$ 
Furthermore, we let $\mathsf{FOP}_{\mathbb{Z}}^k$ be the union of all $\mathsf{FOP}_{\mathbb{Z}}(\phi)$ problems, where $\phi$ consists of precisely $k$ quantifiers, regardless of their quantifier structure.
For a quantifier $Q \in \{\exists,\forall\}$, we write $Q^k$ for the repetition $\underbrace{Q \dots Q}_{k \text{ times}}$.
Finally, we remark that a small subset of $\mathsf{FOP}_{\mathbb{Z}}$ has already been studied by An et al. \cite{DBLP:journals/algorithmica/AnGIJKN22}, for a discussion see Section \ref{sec:relatedWork}.

\subsection{Our Contributions}\label{sec:contributions}

We seek to determine completeness results for the class $\mathsf{FOP}_{\mathbb{Z}}$.
In particular: What are the largest fragments of this class for which 3-SUM (or more generally, $k$-SUM) is complete? Is there a problem that is complete for the entire class?

Intuitively, we say that a $T_A(n)$-time solvable problem $A$ is \emph{(fine-grained) complete} for a $T_{\mathcal{C}}(n)$-time solvable class of problems $\mathcal{C}$, if the existence of an $O(T_A(n)^{1-\epsilon})$-time algorithm for~$A$ with $\epsilon>0$ implies that for \emph{all} problems $C$ in $\mathcal{C}$ there exists $\delta > 0$ such that $C$ can be solved in time $O(T_{\mathcal{C}}(n)^{1-\delta})$. We extend this notion to completeness of a \emph{family} of problems, since strictly speaking, any (geometric) problem over $\mathbb{Z}^d$ expressible in linear integer arithmetic corresponds to a family of formulas $\mathsf{FOP}_{\mathbb{Z}}$ (one for each $d\in \mathbb{N}$).
Formally, consider a family of problems $\mathcal{P}$ with an associated time bound $T_{\mathcal{P}}(n)$ and a class of problems~$\mathcal{C}$ with an associated time bound $T_{\mathcal{C}}(n)$; usually $T_{\mathcal{P}}(n), T_{\mathcal{C}}(n)$ denote the running time of the fastest known algorithm solving all problems in $\mathcal{P}$ or $\mathcal{C}$, respectively (often, we omit these time bounds, as they are clear from context).\footnote{Here, we use \emph{family} and \emph{class} as a purely semantic and intuitive distinction: A family consists of a small set of similar problems, and a class consists of a large and diverse variety of problems.}
We say that $\mathcal{P}$ is \emph{(fine-grained) complete} for $\mathcal{C}$, if
\begin{enumerate}
\item the family $\mathcal{P}$ is a subset of the class $\mathcal{C}$, and
\item if for all problems $P$ in $\mathcal{P}$ there exists $\epsilon>0$ such that $P$ can be solved in time $O(T_{\mathcal{P}}(n)^{1-\epsilon})$, then for all problems $C$ in $\mathcal{C}$ there exists some $\delta>0$ such that we can solve $C$ in time $O(T_{\mathcal{C}}(n)^{1-\delta})$.
\end{enumerate}
That is, a polynomial-factor improvement for solving the problems in $\mathcal{P}$ would lead
to a polynomial-factor improvement in solving \emph{all} problems in $\mathcal{C}$.
If a singleton family $\mathcal{P} = \{P\}$ is fine-grained complete for $\mathcal{C}$, then we also say that $P$ is fine-grained complete for $\mathcal{C}$.
We work with standard hypotheses and problems encountered in fine-grained complexity; for detailed definitions of these, we refer to the full version of this article.

\subsubsection{$k$-SUM is complete for the existential fragment of $\mathsf{FOP}_{\mathbb{Z}}$}
Consider first the existential fragment of $\FOPZ$, i.e., formulas exhibiting only existential quantifiers.
Any $\mathsf{FOP}_{\mathbb{Z}}$ formula with $k$ existential quantifiers can be decided using a standard meet-in-the-middle approach, augmented by orthogonal range search, in time $\Tilde{O}(n^ {\lceil k/2 \rceil})$\footnote{We use the notation $\Tilde{O}(T):=T \log^{O(1)}(T)$ to hide polylogarithmic factors.}, see the full version of the paper for details. Since $k$-SUM is a member of $\FOPZ(\exists^k)$, this running time is optimal up to subpolynomial factors, assuming the $k$-SUM Hypothesis.
As our first contribution, we provide a converse reduction. Specifically, we show that a polynomially improved $k$-SUM algorithm would give a polynomially improved algorithm for solving the entire class. In our language, we show that $k$-SUM is fine-grained complete for formulas of $\mathsf{FOP}_{\mathbb{Z}}$ with $k$ existential quantifiers.

\begin{restatable}[$k$-SUM is $\mathsf{FOP}_{\mathbb{Z}}(\exists^k)$-complete]{theorem}{threesumcompl}
	Let $k\geq 3$ and assume that $k$-SUM can be solved in time $T_{k\mathrm{SUM}}(n)$. For any problem $P$ in $\mathsf{FOP}_{\mathbb{Z}}(\exists^k )$, there exists some $c$ such that $P$ can be solved in time $O(T_{k\mathrm{SUM}}(n) \log^c n)$. 
\label{existential_complete}
\end{restatable}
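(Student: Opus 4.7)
My plan is to reduce any $\FOPZ(\exists^k)$ problem to $O(\log^c n)$ instances of standard $k$-SUM on sets of size $O(n)$, for some constant $c$ depending on the fixed formula $\phi$. First, I convert $\psi$ to disjunctive normal form; because all quantifiers are existential, they distribute over disjunction: $\exists \vec x\,:\, \bigvee_\ell \psi_\ell \equiv \bigvee_\ell \exists \vec x\,:\, \psi_\ell$. Since $|\psi|$ is constant, there are $O(1)$ conjunctive clauses, each handled independently. Within each clause, I expand disequalities $L(\vec x) \neq b$ as $L(\vec x) < b \vee L(\vec x) > b$ and distribute once more; the clause count stays $O(1)$, and each clause now contains only equalities and (non-strict) inequalities, since strict inequalities over the integers can be rewritten by shifting by $\pm 1$.

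Next, in each resulting conjunctive clause, I combine all equalities into a single equality via the $M$-trick: because the universe is polynomial, we can pick $M$ larger than every intermediate value so that $\bigwedge_i L_i(\vec x) = b_i$ becomes $\sum_i M^{i-1} L_i(\vec x) = \sum_i M^{i-1} b_i$. Since each $L_i$ is linear, the combined equality still decomposes additively as $E(\vec x) = \sum_j \tilde L_j(x_j) = B$ for suitable linear forms $\tilde L_j$. So each clause reduces to a conjunction of a single equality and $m = O(1)$ inequalities $\sum_j L_{i,j}(x_j) \le b_i$, with every involved linear form decomposing as a sum of per-variable contributions.

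The heart of the reduction is handling this residual conjunction in $O(\log^c n)$ instances of standard $k$-SUM. Associating to each $x_j \in A_j$ the $(m+1)$-dimensional vector
\[ v_j(x_j) := \bigl(\tilde L_j(x_j),\, L_{1,j}(x_j),\, \dots,\, L_{m,j}(x_j)\bigr), \]
the clause is satisfied iff $\sum_j v_j(x_j)$ lies in the box-like region $\{B\} \times \prod_{i=1}^m (-\infty, b_i]$. Using the polynomial-universe assumption to bound each $(-\infty, b_i]$ by an interval of length $n^{O(1)}$, I dyadically decompose each such interval into $O(\log n)$ canonical pieces. For each tuple of canonical pieces across the $m$ inequality dimensions, I further bucket each set $V_j := \{v_j(x_j) : x_j \in A_j\}$ dyadically and perform a recursive ``contained/disjoint/straddle'' analysis: contained sub-problems lose all box constraints and reduce to a single-equality $k$-SUM instance (by folding the bucket information into a fresh equality via another $M$-trick), disjoint sub-problems are discarded, and straddling sub-problems recurse on finer buckets. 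At the bottom of the recursion one solves a standard $k$-SUM instance on a subset of the original input of size $O(n)$.

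The main obstacle is bounding the total number of $k$-SUM sub-instances produced. For each of the $m$ box constraints, the dyadic decomposition of a half-line interacts with the bucket structure on each $V_j$ to produce straddling events, and a naive accounting could blow this up polynomially in $n$ rather than polylogarithmically. Resolving this requires a charging argument showing that only $O(\log^{c'} n)$ nontrivial leaves arise, analogous to bounds for orthogonal range search in fixed dimension — the key point being that in a dyadic decomposition of a half-line, any particular query value straddles only $O(\log n)$ canonical intervals, so the total straddle budget is dominated by $\log^{O(mk)} n$. Since $m$ and the formula size are constants depending only on $\phi$, the exponent $c$ in the final running time $O(T_{k\mathrm{SUM}}(n) \log^c n)$ is a constant, yielding the theorem.
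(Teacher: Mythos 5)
Your setup (DNF conversion, distributing the existentials over the disjunction, combining equalities via the $M$-trick, and finishing with the standard Vector $k$-SUM $\to$ $k$-SUM folding) matches the paper. The gap is in the one step that is actually the crux: reducing the residual conjunction of $m$ inequalities to polylogarithmically many equality instances. Your ``contained/disjoint/straddle'' recursion does not achieve this. The sub-problems in that recursion are indexed by \emph{tuples} of input buckets $(u_1,\dots,u_k)$, one bucket per quantified set, and a tuple straddles the threshold $b_i$ whenever the Minkowski sum of its bucket intervals contains $b_i$, i.e.\ whenever $\sum_j u_j$ lies in a window of width $O(k)$. At recursion depth $t$ there are $\Theta(2^{t(k-1)})$ such tuples, so the recursion tree has polynomially many straddling leaves, not $O(\log^{c'} n)$. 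Your charging argument bounds the wrong quantity: it is true that the single query value $b_i$ lies in only $O(\log n)$ canonical intervals of the universe, but the blowup comes from the number of bucket tuples whose \emph{sum} straddles $b_i$, which is governed by the number of lattice points on an anti-diagonal staircase, not by the dyadic depth. (Equivalently: the halfspace $\{y:\sum_j y_j\le b_i\}$ cannot be covered by polylogarithmically many products of canonical intervals, because a diagonal strip of width $2^\ell$ needs $\Theta(U/2^\ell)$ such boxes.) Even under an optimistic balanced-bucket accounting, the resulting self-reduction already fails to give the claimed bound for $k\ge 4$, and in any case it does not produce $O(\log^c n)$ instances of $k$-SUM on sets of size $O(n)$, which is what the theorem statement requires for an arbitrary $T_{k\mathrm{SUM}}$.

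The missing idea is the bit-level trick of Vassilevska Williams and Williams (Lemma~\ref{Ineq_to_eq} in the paper): for each inequality $\sum_j L_{i,j}(x_j)\ge S_i$, guess the most significant bit position $\ell_i$ at which the two sides first differ together with a carry value $W_i\in\{1,\dots,k\}$ arising from truncating each summand individually; for each such guess the inequality becomes an exact linear \emph{equality} on $O(1)$-time computable transforms of the $x_j$, and the guess satisfying it is unique. This turns the conjunction of $m$ inequalities into a disjunction over $O((k\log U)^m)=O(\log^m n)$ conjunctions of $m$ equalities, i.e.\ $O(\log^m n)$ Vector $k$-SUM instances of size $n$, after which your final folding step applies. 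Replacing your recursive decomposition by this guessing step closes the gap.
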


Thus, if there are $k\geq 3$ and $\epsilon>0$ such that we can solve $k$-SUM in time $O(n^{\lceil k/2 \rceil-\epsilon})$, then we can solve all problems in $\mathsf{FOP}_{\mathbb{Z}}(\exists^k )$ in time  $O(n^{\lceil k/2 \rceil-\epsilon'})$ for any $0 < \epsilon' < \epsilon$.
By a simple negation argument, we conclude that $k$-SUM is also complete for the class of problems $\mathsf{FOP}_{\mathbb{Z}}(\forall^k)$.

The above theorem generalizes and unifies previous reductions from problems expressible as $\FOPZ(\exists^k)$ formulas to 3-SUM, using different proof ideas: Jafargholi and Viola~\cite[Lemma 4]{DBLP:journals/algorithmica/JafargholiV16} give a simple randomized linear-time reduction from triangle detection in sparse graphs to 3-SUM, and a derandomization via certain combinatorial designs.
Dudek, Gawrychowski, and Starikovskaya~\cite{DBLP:conf/stoc/0001GS20} study the family of 3-linear degeneracy testing (3-LDT), which constitutes a large and interesting subset of $\FOPZ(\exists \exists \exists)$: This family includes, for any $\alpha_1,\alpha_2,\alpha_3, t\in \mathbb{Z}$, the \emph{3-partite} formula $\exists a_1 \in A_1 \exists a_2\in A_2 \exists a_3\in A_3: \alpha_1 a_1 + \alpha_2 a_2 + \alpha_3 a_3 = t$ and the \emph{1-partite} formula $\exists \alpha_1, \alpha_2, \alpha_3 \in A: \alpha_1 a_1 + \alpha_2 a_2 + \alpha_3 a_3 = t \wedge a_1\ne a_2 \wedge a_2\ne a_3 \wedge a_1\ne a_3$.
The authors show that each such formula is either trivial or subquadratic \emph{equivalent} to $3$-SUM.
For 3-partite formulas, a reduction to $3$-SUM is essentially straightforward.
For 1-partite formulas, Dudek et al.~\cite{DBLP:conf/stoc/0001GS20} use color coding.\footnote{We remark that the reverse direction, i.e., $3$-SUM-hardness of non-trivial formulas, is technically much more involved and can be regarded as the main technical contribution of~\cite{DBLP:conf/stoc/0001GS20}.} 

As further examples for reductions from $\FOPZ$ problems to $k$-SUM, we highlight a reduction from Vector $k$-SUM to $k$-SUM~\cite{DBLP:journals/corr/AbboudLW13} as well as a reduction from $(\min,+)$-convolution to $3$-SUM (see~\cite{BackursIS17,DBLP:journals/talg/CyganMWW19}) based on a well-known bit-level trick due to Vassilevska Williams and Williams~\cite{DBLP:journals/siamcomp/WilliamsW13}, which allows us to reduce inequalities to equalities. 

Perhaps surprisingly in light of its generality and applicability, Theorem~\ref{existential_complete} is obtained via a very simple, deterministic reduction that combines the tricks from~\cite{DBLP:journals/corr/AbboudLW13,DBLP:journals/siamcomp/WilliamsW13}. This generality comes at the cost of polylogarithmic factors (which we do not optimize), which depend on the number of inequalities occurring in the considered formula; for the details see Section \ref{sec:existentialpa} and the full version of the paper.

\subsubsection{Completeness for counting witnesses}
We provide a certain extension of the above completeness result to the problem class of \emph{counting} witnesses to existential $\FOPZ$ formulas\footnote{A witness for a $\FOPZ(\exists^k)$ formula $\exists a_1 \in A_1 \dots \exists a_k \in A_k: \varphi$ with $\hat{t_1}, \dots, \hat{t_l} \in \mathbb{Z}$ is a tuple $(a_1,\dots, a_k)\in A_1\times \cdots \times A_k$ that satisfies the formula $\varphi[(t_1,\dots,t_l) \backslash (\hat{t_1},\dots,\hat{t_l})]$.}. Counting witnesses is an important task particularly in database applications (usually referred to as model counting). Furthermore, we will make use of witness counting to \emph{decide} certain quantified formulas in subsequent results detailed below. In Section \ref{sec:countwitn}, we will obtain the following result. 

\begin{restatable}{theorem}{countComplete}
	Let $k\geq3$ be odd. If there is $\epsilon > 0$ such that we can count the number of witnesses for $k$-SUM in time $O(n^{\lceil k/2 \rceil -\epsilon})$, then for all problem $P$ in $\FOPZ(\exists^k)$, there is some $\epsilon'> 0$ such that we can count the number of witnesses for $P$ in time $O(n^{\lceil k/2 \rceil -\epsilon'})$.
	\label{thm:counting-witnesses}
\end{restatable}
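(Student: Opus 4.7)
The plan is to adapt the reduction proving Theorem~\ref{existential_complete} into a count-preserving one. Let $\phi = \exists a_1 \in A_1 \dots \exists a_k \in A_k : \psi$ be an $\FOPZ(\exists^k)$ formula with $\psi$ quantifier-free.

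First, I would rewrite $\psi$ in disjunctive normal form $\bigvee_{j=1}^{m} C_j$, each $C_j$ being a conjunction of atomic constraints (linear (in)equalities and their negations). Since $m$ depends only on $\psi$, a constant-size inclusion-exclusion reduces the witness count for $\psi$ to a signed sum over witness counts of the $2^m-1$ conjunctions $\bigwedge_{j \in S} C_j$ for $\emptyset \ne S \subseteq [m]$. A further inclusion-exclusion absorbs each disequality $f(\vec a)\ne 0$ as the difference between the all-tuples count and the count of $f(\vec a)=0$ witnesses, leaving only $O(1)$ counting tasks over conjunctions of equalities and (strict or non-strict) inequalities.

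Second, a pure-equality conjunction is encoded as a vector $k$-SUM instance (each equality contributing one coordinate of a target zero-vector), followed by the Abboud-Lewi-Williams well-separated scaling that collapses vector $k$-SUM to scalar $k$-SUM. Both reductions are bijective on the witness set, so the $k$-SUM counting oracle returns the desired count directly, in time $O(n^{\lceil k/2 \rceil - \epsilon})$.

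Third, the main obstacle is to handle inequality atoms without overcounting. The proof of Theorem~\ref{existential_complete} converts inequalities to equalities via the Vassilevska Williams--Williams bit-trick; for deciding witness existence it suffices that every original witness survives in at least one of the produced equality instances, but for counting we additionally need these instances to form a \emph{partition} of the original witnesses, so that summing the counts over polylog-many oracle calls yields the correct total. The restriction to odd $k$ appears essential here: the asymmetric meet-in-the-middle structure ($\lceil k/2 \rceil$ vs.\ $\lfloor k/2 \rfloor$) can be aligned with the bit-level decomposition of each inequality so that every witness lands in exactly one of the produced instances. Composing all ingredients yields a total running time of $O(n^{\lceil k/2 \rceil - \epsilon})$ times polylogarithmic factors from the bit-trick and inclusion-exclusion, which is $O(n^{\lceil k/2 \rceil - \epsilon'})$ for any $0 < \epsilon' < \epsilon$.
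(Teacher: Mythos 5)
Your proposal misses the key technical obstacle that the paper isolates and solves, and your explanation for why $k$ must be odd is incorrect. Lemma~\ref{Ineq_to_eq} in the paper already guarantees that each surviving witness corresponds to a \emph{unique} choice of bit-positions $(\ell, W)$, so the produced instances already partition the original witness set — that part is not the difficulty, and it has nothing to do with aligning the meet-in-the-middle split with the bit-level decomposition. The actual problem is the other direction of the map: the encoding sends $a_j \in A_j$ to the vector of linear functionals $f_j^{\ell,\psi}(a_j)$, and this map need not be injective on $A_j$ (the formula may only access $a_j$ through projections $c_{i,j}^T a_j$ that fail to separate distinct elements of $A_j$). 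Consequently the resulting $k$-SUM instance is a \emph{multiset}, while the assumed $\#k$-SUM oracle only counts witnesses over sets of distinct numbers. Your claim that "both reductions are bijective on the witness set, so the $k$-SUM counting oracle returns the desired count directly" is therefore false in general.

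The paper closes this gap by proving a fine-grained equivalence between Multiset $\#k$-SUM and $\#k$-SUM (Lemma~\ref{multisetToSet}) via a heavy-light argument: elements with multiplicity at most $n^{\epsilon/c}$ are handled by a blow-up reduction to $\#k$-SUM (Lemma~\ref{reduction_t}), while the at most $n^{1-\epsilon/c}$ high-multiplicity elements are brute-forced, each triggering a $(k-1)$-SUM computation in time $O(n^{(k-1)/2})$. This is precisely where odd $k$ is used: $k-1$ is then even, so $(k-1)$-SUM runs in $O(n^{(k-1)/2})$ time and the high-multiplicity part totals $O(n^{1-\epsilon/c+(k-1)/2}) = O(n^{\lceil k/2 \rceil - \epsilon''})$; for even $k$ this exponent would exceed $\lceil k/2 \rceil$. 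Without the multiset equivalence your reduction does not establish the theorem, so the gap is essential rather than cosmetic. The inclusion-exclusion framing of the DNF and disequality handling in your first step is fine and matches the paper in spirit, but it is not where the difficulty lies.
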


Leveraging the recent breakthrough by~\cite{DBLP:journals/corr/abs-2303-14572} that 3-SUM is subquadratic equivalent to counting witnesses of 3-SUM, we obtain the corollary that \emph{3-SUM is hard even for counting witnesses of $\FOPZ(\exists^3)$}.

\begin{restatable}{corollary}{threecountcomplete}
	For all problems $P$ in $\mathsf{FOP}_{\mathbb{Z}}(\exists^3)$, there is some $\epsilon_P >0$ such that we can count the number of witnesses for $P$ in randomized time $O(n^{2-\epsilon_P})$ if and only if there is some $\epsilon' > 0$ such that
    $3$-SUM can be solved in randomized time $O(n^{2-\epsilon'})$. 
\label{Count3COMP}
\end{restatable}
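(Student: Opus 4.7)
The plan is to derive the corollary by combining Theorem~\ref{thm:counting-witnesses} (instantiated with $k=3$) with the recent subquadratic equivalence between deciding $3$-SUM and counting witnesses of $3$-SUM shown in~\cite{DBLP:journals/corr/abs-2303-14572}. Both directions reduce to short applications of these two results.

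For the forward direction, I would observe that $3$-SUM is itself a problem in $\FOPZ(\exists^3)$: the formula
\[
\exists a_1 \in A_1\, \exists a_2 \in A_2\, \exists a_3 \in A_3 : a_1 + a_2 + a_3 = 0
\]
lies in this class. Hence an $O(n^{2-\epsilon_P})$-time algorithm for witness counting of every $P \in \FOPZ(\exists^3)$ in particular yields such an algorithm for counting $3$-SUM witnesses, and deciding $3$-SUM then amounts to testing whether this count is nonzero, giving a randomized $O(n^{2-\epsilon'})$-time algorithm for $3$-SUM with $\epsilon' = \epsilon_{3\text{-SUM}}$.

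For the backward direction, I would start from a randomized $O(n^{2-\epsilon'})$-time algorithm for $3$-SUM and apply the main result of~\cite{DBLP:journals/corr/abs-2303-14572}, which shows that counting the witnesses of $3$-SUM is randomized subquadratic equivalent to deciding $3$-SUM. This yields an $O(n^{2-\epsilon''})$-time randomized algorithm for counting $3$-SUM witnesses for some $\epsilon''>0$. Plugging this into Theorem~\ref{thm:counting-witnesses} with $k=3$ (noting that $3$ is odd, so the hypothesis applies) lifts this speedup to a randomized $O(n^{2-\epsilon_P})$-time witness-counting algorithm for every $P \in \FOPZ(\exists^3)$.

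The main (very mild) obstacle is simply to check that Theorem~\ref{thm:counting-witnesses} is stated in a way that its reduction preserves randomization and polynomial speedups when composed with the equivalence of~\cite{DBLP:journals/corr/abs-2303-14572}; since both are fine-grained reductions with at most polylogarithmic overhead, the composition loses at most a subpolynomial factor and the existence of some $\epsilon_P > 0$ is preserved. No further ingredients are needed.
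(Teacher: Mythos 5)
Your proposal is correct and matches the paper's own derivation: the paper obtains Corollary~\ref{Count3COMP} exactly by combining Theorem~\ref{thm:counting-witnesses} (for $k=3$) with the subquadratic equivalence of deciding $3$-SUM and counting $3$-SUM witnesses from~\cite{DBLP:journals/corr/abs-2303-14572}, with the forward direction being the trivial membership of $3$-SUM in $\FOPZ(\exists^3)$. No issues.
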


\subsubsection{Completeness for general quantifier structures of $\mathsf{FOP}_{\mathbb{Z}}$ }

In light of our first completeness result, one might wonder whether $k$-SUM is complete for deciding all $k$-quantifier formulas in $\mathsf{FOP}_{\mathbb{Z}}$,
regardless of the quantifier structure of the formulas.
Note that for these general quantifier structures, a baseline algorithm with running time $\Tilde{O}(n^{k-1})$ can be achieved with a combination of brute-force and orthogonal range queries; see the full version for details. 

However, by \cite[Theorem 15]{DBLP:journals/algorithmica/AnGIJKN22} there exists a $\FOPZ(\exists^{k-1} \forall)$-formula $\phi$ that cannot be solved in time $O(n^{k-1-\epsilon})$-time unless the 3-uniform hyperclique hypothesis is false (see the discussion in Section~\ref{sec:relatedWork}). Thus, proving that 3-SUM is complete for all 3-quantifier formulas would establish that the 3-uniform hyperclique hypothesis implies the $3$-SUM hypothesis -- this would be a novel tight reduction among important problems/hypotheses in fine-grained complexity theory. For $k\ge 4$, it becomes even more intricate: the conditionally optimal running time of $n^{k-1\pm o(1)}$ for $\FOPZ(\exists^k \forall)$ formulas exceeds the conditionally optimal running time of $n^{\lceil \frac{k}{2} \rceil \pm o(1)}$ for $\FOPZ(\exists^k)$ formulas.  

We are nevertheless able to obtain a completeness result for general quantifier structures: Specifically, we show that if two geometric problems over $\mathbb{Z}^d$ can be solved in time $O(n^{2-\epsilon_d})$ where $\epsilon_d > 0$ for all $d$, then each $k$-quantifier formula in $\FOPZ$ can be decided in time $O(n^{k-1-\epsilon})$ for some $\epsilon > 0$. These problems are (1) a variation of the Hausdorff distance that we call \emph{Hausdorff distance under $n$ Translations} and (2) the Pareto Sum problem; the details are covered in Section \ref{sec:GeneralQuantifier}.

\paragraph*{Hausdorff Distance under $n$ Translations}
Among the most common translation-invariant distance measures for given point sets $B$ and $C$ is the Hausdorff Distance under Translation~\cite{chew1992improvements, DBLP:conf/compgeom/BringmannN21, DBLP:conf/compgeom/Chan23, DBLP:journals/dcg/ChewDEK99, Nusser21, HuttenlocherK90}. 
To define it, we denote the directed Hausdorff distance under the $L_{\infty}$ metric by $\delta_{\overrightarrow{H}}(B,C):= \max_{ b \in B} \min_{c \in C} \|b-c\|_{\infty}$.\footnote{Since we will exclusively consider the \emph{directed} Hausdorff distance under Translation, we will drop ``directed'' throughout the paper.} The Hausdorff distance under translation $\delta_{\overrightarrow{H}}^T(B,C)$ is defined as the minimum Hausdorff distance of $B$ and an arbitrary translation of $C$, i.e.,
\[ \delta_{\overrightarrow{H}}^T(B,C) \coloneqq \min_{\tau \in \mathbb{R}^d} \delta_{\overrightarrow{H}}(B,C+\{\tau\}) = \min_{\tau \in \mathbb{R}^d} \max_{b\in B} \min_{c\in C} \|b-(c+\tau)\|_{\infty}.\]
For $d=2$, Bringmann et al.~\cite{DBLP:conf/compgeom/BringmannN21} were able to show a $(|B||C|)^{1-o(1)}$ time lower bound based on the orthogonal vector hypothesis, and there exists a matching $\Tilde{O}(|B||C|)$ upper bound by Chew et al.~\cite{ChewK92}.

We shall establish that restricting the translation vector to be among a set of $m$ candidate vectors yields a central problem in $\FOPZ$. Specifically, we define the Hausdorff distance under Translation in $A$, denoted as $\delta_{\overrightarrow{H}}^{T(A)}(B,C)$, by
 \[\delta_{\overrightarrow{H}}^{T(A)}(B,C) \coloneqq \min_{\tau \in A} \delta_{\overrightarrow{H}}(B,C+\{\tau\}) = \min_{\tau \in A} \max_{b\in B} \min_{c\in C} \|b-(c+\tau)\|_{\infty}.\]
Correspondingly, we define the problem \emph{Hausdorff distance under $m$ Translations} as: Given $A, B, C\subseteq \mathbb{Z}^d$ with $|A|\le m$, $|B|,|C| \le n$ and a distance value $\gamma \in \mathbb{N}$, determine whether $\delta_{\overrightarrow{H}}^{T(A)}(B,C) \le \gamma$.
Note that this can be rewritten as a $\FOPZ(\exists\forall\exists)$-formula, see the full version of the paper for details.

The Hausdorff distance under $m$ Translations occurs naturally when approximating the Hausdorff distance under translation: Specifically, common algorithms compute a set~$A$ of $|A|= f(\epsilon)$ translations such that $\delta_{\overrightarrow{H}}^{T(A)}(B,C) \le (1+\epsilon)\delta_{\overrightarrow{H}}^{T}(B,C)$.
Generally, this problem is then solved by performing $|A|$ computations of the Hausdorff distance, which yields $\tOh(|A|n)=\tOh(f(\epsilon)n)$-time algorithms \cite{Wenk03}. Improving over the $\tOh(m n)$-time baseline for Hausdorff Distance under $m$ Translations would thus lead to immediate improvements for approximating the Hausdorff Distance under Translation.
Our results will establish additional consequences of fast algorithms for this problem: an $O(n^{2-\epsilon_d})$-time algorithm with $\epsilon_d > 0$ for Hausdorff distance under $n$ Translations would give an algorithmic improvement for the classes of $\FOPZ(\exists \forall \exists)$- and $\FOPZ(\forall \exists \forall)$-formulas.

\paragraph*{Verification of Pareto Sums}

Our second geometric problem is a verification version of computing \emph{Pareto sums}: Given point sets $A,B \subseteq \mathbb{Z}^{d}$, the Pareto sum $C$ of $A,B$ is defined as the Pareto front of their sumset $A+B=\{a+b\mid a\in A,b\in B\}$. Put differently, the Pareto sum of $A,B$ is a set of points $C$ satisfying (1) $C \subseteq A+B$,  (2) for every $a \in A$ and $b \in B$, the vector $a+b$ is dominated\footnote{We consider the usual domination notion: A vector $u\in \mathbb{Z}^d$ is dominated by some vector $v\in \mathbb{Z}^d$ (written $u \leq v$) if and only if in all dimensions $i\in [d]$ it holds that $u[i]\leq v[i]$.} by some $c \in C$ and (3) there are no distinct $c,c' \in C$ such that $c'$ dominates $c$. 
The task of computing Pareto sums appears in various multicriteria optimization settings~\cite{artigues2013state,schulze2019multi,ehrgott2000survey,lust2014variable}; fast output-sensitive algorithms (both in theory and in practice) have recently been investigated by Hespe, Sanders, Storandt, and Truschel~\cite{DBLP:conf/esa/Hespe0ST23}. 

We consider the following problem as \emph{Pareto Sum Verification}: Given $A,B,C\subseteq \mathbb{Z}^d$, determine whether
\begin{align*}
\forall a \in A \forall b \in B \exists c \in C: a+b \le c.
\end{align*}
The complexity of Pareto Sum Verification\footnote{We remark that our problem definition only checks a single of the three given conditions, specifically, condition~(2). However, in Section~\ref{sec:ParetoSum}, we will establish that the verifying \emph{all three} conditions reduces to verifying this single condition. More specifically, for sets $A,B,C$ of size at most $n$, we obtain that if we can solve Pareto Sum Verification in time $T(n)$, then we can check whether $C$ is the Pareto sum of $A,B$ in time $O(T(n))$.} is tightly connected to output-sensitive algorithms for Pareto Sum. Specifically, solving Pareto Sum Verification reduces to \emph{computing} the Pareto sum $C$ when given inputs $A,B$ of size at most $n$ with the promise that $|C| =\Theta(n)$; see Section~\ref{sec:ParetoSum} for details. The work of Hespe et al.~\cite{DBLP:conf/esa/Hespe0ST23} gives a practically fast $O(n^2)$-time algorithm in this case for $d=2$; note that for $d\ge 3$, we still obtain an $\tOh(n^2)$-time algorithm via our Baseline Algorithm, which is described in the full version of the paper.

\paragraph{A problem pair that is complete for $\FOPZ$}
As a pair, these two geometric problems turn out to be fine-grained complete for the class $\mathsf{FOP}_{\mathbb{Z}}$.
  \begin{restatable}{theorem}{completenessclass}
	There is a function $\epsilon(d)>0$ such that 
	both of the following problems can be solved in time $O(n^{2-\epsilon(d)})$
	\begin{itemize}
	\item Pareto Sum Verification,
	\item Hausdorff distance under $n$ Translations,
	\end{itemize}
	if and only if for each problem $P$ in $\mathsf{FOP}_{\mathbb{Z}}^k$ with $k\geq3$ there exists an $\epsilon_P>0$ such that $P$ can be solved in time $O(n^{k-1-\epsilon_P})$. 
	\label{completenesswholeFOP3}
	\end{restatable}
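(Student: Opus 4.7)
The proof splits into two directions. The ``if'' direction is immediate by membership: Pareto Sum Verification has the $\FOPZ(\forall\forall\exists)$-formulation $\forall a \in A\, \forall b \in B\, \exists c \in C: \bigwedge_{i=1}^d (a[i]+b[i] \le c[i])$, and Hausdorff Distance under $n$ Translations is captured by the $\FOPZ(\exists\forall\exists)$-formula $\exists \tau \in A\, \forall b \in B\, \exists c \in C: \bigwedge_{i=1}^d (|b[i] - c[i] - \tau[i]| \le \gamma)$. Both thus lie in $\FOPZ^3$, so an assumed $O(n^{k-1-\epsilon_P})$-time algorithm for all of $\FOPZ^k$ at $k=3$ immediately yields subquadratic algorithms for both; we set $\epsilon(d)$ as the minimum of the two $\epsilon_P$'s.

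The hard direction assumes $O(n^{2-\epsilon(d)})$-time algorithms for both geometric problems and seeks to decide every $\FOPZ^k$ formula in time $O(n^{k-1-\epsilon})$. The plan has two phases. In \emph{Phase~A}, given $k > 3$ and a formula $Q_1 x_1 \cdots Q_k x_k : \psi$, I would simply enumerate all $n^{k-3}$ assignments of $x_1, \ldots, x_{k-3}$ and then invoke the 3-quantifier subroutine of Phase~B, obtaining total running time $O(n^{k-3} \cdot n^{2-\delta}) = O(n^{k-1-\delta})$. In \emph{Phase~B}, I handle the 3-quantifier formulas pattern-by-pattern; the eight patterns pair up under negation (which swaps $\forall \leftrightarrow \exists$ and complements $\psi$), leaving four representatives.

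The four representatives would be treated as follows. For $\forall\forall\exists$, I would convert $\psi$ to disjunctive normal form; each conjunction of linear inequalities over $a, b, c$ becomes, after a suitable linear change of coordinates and lifting to a dimension that depends only on the number of inequalities in $\psi$, a coordinate-wise domination condition $a' + b' \le c'$. This becomes one call to Pareto Sum Verification per DNF-disjunct, with an OR-postprocessing. For $\exists\forall\exists$, the analogous DNF-decomposition, combined with the $\min \max \min$ structure of the Hausdorff distance and an encoding of linear constraints as $L_\infty$-boxes of suitable width $\gamma$, yields a reduction to Hausdorff Distance under $n$ Translations. For $\exists^3$, I would invoke Theorem~\ref{existential_complete} to reduce to 3-SUM, and then reduce 3-SUM to Pareto Sum Verification: encode the equality $a+b+c = 0$ as the two coordinate-wise inequalities arising from the two-dimensional embedding $(a,-a), (b,-b), (-c,c)$, then use a Hadamard-style scheme to turn the ``$\exists$'' witness into a ``$\forall\forall$''-violation test (i.e., failure of Pareto dominance indicates a 3-SUM witness). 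Finally, for $\exists\forall\forall$, the dual $\forall\exists\exists$ is a batched 2-quantifier problem; using the inequality-to-equality bit-level trick of Vassilevska Williams--Williams~\cite{DBLP:journals/siamcomp/WilliamsW13} (as already leveraged in Theorem~\ref{existential_complete}), I would reduce it to a batched 3-SUM-style problem and again invoke the 3-SUM-to-geometric-problem reduction.

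The main obstacle I anticipate is precisely the reduction from 3-SUM (and more generally the $\exists^3$-formulas) to one of the two geometric problems, since the quantifier structures do not naturally match: extracting an ``exists'' witness from a $\forall\forall\exists$ Pareto Sum Verification query (or an $\exists\forall\exists$ Hausdorff query) requires a careful encoding in which the \emph{absence} of a Pareto dominator encodes the \emph{presence} of a 3-SUM witness, paid for by a bounded increase in dimension $d$ (acceptable because $\epsilon(d)>0$ for every $d$). A secondary technical obstacle is bookkeeping: ensuring that the DNF expansion and the coordinate lifts in the $\forall\forall\exists$ and $\exists\forall\exists$ reductions produce only a constant number of subcalls in dimensions depending only on $\psi$, so that the dependence of $\epsilon(d)$ on the formula is well-defined and the polylogarithmic overheads do not erode the polynomial savings.
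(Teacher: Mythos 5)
Your overall architecture coincides with the paper's: brute-force the leading $k-3$ quantifiers and reduce the trailing three-quantifier patterns, up to negation, to the two geometric problems (Pareto Sum Verification for the suffixes $\exists\exists\exists$, $\forall\exists\exists$, $\forall\forall\exists$ and their duals; Hausdorff under $n$ Translations for $\exists\forall\exists$ and $\forall\exists\forall$). The membership direction is fine. However, the proposal has a genuine logical error and, more importantly, omits the theorem's central technical content. The error: for a $\forall\forall\exists$ formula with $\psi$ in DNF, ``one call to Pareto Sum Verification per DNF-disjunct, with an OR-postprocessing'' is incorrect, since $\forall a\,\forall b\,\exists c:\bigvee_i V_i$ is not equivalent to $\bigvee_i\bigl(\forall a\,\forall b\,\exists c: V_i\bigr)$ --- different pairs $(a,b)$ may be served by different disjuncts. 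Because the existential is innermost, the disjuncts must instead be merged into a \emph{single} instance by taking the union over $i$ of the lifted target points (this is what the paper's normal-form lemma does, yielding $|A_3'|=O(n)$); the same remark applies to your $\exists\forall\exists$ treatment.

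The missing idea is the quantifier change from $\exists^3$ to $\forall\forall\exists$, i.e., the reduction from $3$-SUM to Pareto Sum Verification, which you acknowledge as ``the main obstacle'' but do not resolve; a ``Hadamard-style scheme'' is not an argument. Negating $\exists a\,\exists b\,\exists c: a+b=c$ gives a $\forall\forall\forall$ statement; to recast $(A+B)\cap C=\emptyset$ as a $\forall\forall\exists$ domination/covering condition one must cover the \emph{complement} of $C$ inside the universe by $O(n)$ boxes, which is hopeless for general $3$-SUM instances whose universe has size $n^{c}$. The paper circumvents this by first invoking P\u{a}tra\c{s}cu's subquadratic equivalence of $3$-SUM with Convolution $3$-SUM, whose index structure makes the complement representable with few auxiliary points; this yields the hardness of Additive Sumset Approximation (Theorem~\ref{sumsetapproxTHM}), which is the engine behind Lemma~\ref{verif-complete-three}. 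Likewise, your ``batched'' treatment of $\forall\exists\exists$ silently needs the subquadratic equivalence of $3$-SUM with its All-ints version (Lemma~\ref{fopaee}); without it, the per-$a$ invocations cost $n\cdot n^{2-\epsilon}$ and give nothing. Until these quantifier-change reductions are supplied, the hard direction of the theorem is not proved.
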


The above theorem shows that a single pair of natural problems captures the fine-grained complexity of the expressive and diverse class $\FOPZ$. As an illustration just how expressive this class is, we observe the following barriers:\footnote{The first three statements follow from $\FOPZ$ generalizing the class $PTO$ studied in~\cite{DBLP:journals/algorithmica/AnGIJKN22}, see Section~\ref{sec:relatedWork}. The remaining statements rely on the additive structure of $\FOPZ$.}
\begin{enumerate}
	\item If there is some $\epsilon >0$ such that all problems in $\FOPZ(\exists \exists \forall)$ (or $\FOPZ(\forall \forall \exists)$) can be solved in time $O(n^{2-\epsilon})$, then OVH (and thus SETH) is false \cite[Theorem 16]{DBLP:journals/algorithmica/AnGIJKN22}.
	\item If there is some $\epsilon >0$ such that all problems in $\FOPZ(\exists \forall \exists)$ (or $\FOPZ(\forall \exists \forall)$) can be solved in time $O(n^{2-\epsilon})$, then the Hitting Set Hypothesis is false \cite[Theorem 12]{DBLP:journals/algorithmica/AnGIJKN22}.
	\item If for all problems $P$ in $\FOPZ(\exists \exists \forall)$ (or $\FOPZ(\forall \forall \exists)$), there exists some $\epsilon>0$ such that we can solve $P$ in  $O(n^{2-\epsilon})$, then the 3-uniform Hyperclique Hypothesis is false \cite[Theorem 15]{DBLP:journals/algorithmica/AnGIJKN22}.
	\item If for all problems $P$ in $\FOPZ(\exists \exists \exists)$ ($\FOPZ(\forall \forall \forall), \FOPZ(\forall \forall \exists)$, or $\FOPZ(\exists \exists \forall)$), there exists some $\epsilon>0$ such that we can solve $P$ in time $O(n^{2-\epsilon})$, then the $3$-SUM Hypothesis is false (Theorem \ref{existential_complete} with Lemma \ref{verif-complete-three}).
\end{enumerate}
Theorem \ref{completenesswholeFOP3} raises the question whether for any constant dimension $d$, the Hausdorff distance under $n$ Translations admits a subquadratic reduction to Pareto Sum Verification. A positive answer would establish Pareto Sum Verification as complete for the \emph{entire} class $\FOPZ$. We elaborate on this in Section~\ref{sec::ende}.

\subsubsection{$3$-SUM is complete for $\mathsf{FOP}_{\mathbb{Z}}$ formulas of low inequality dimension}

Returning to our motivating question, we ask: Since it appears unlikely to prove completeness of 3-SUM for all $\FOPZ$ formulas (as this requires a tight 3-uniform hyperclique lower bound for 3-SUM), can we at least identify a large fragment of $\FOPZ$ for which $3$-SUM is complete? In particular, can we extend our first result of Theorem~\ref{existential_complete} from existentially quantified formulas to substantially different problems in $\FOPZ$, displaying other quantifier structures?
 
Surprisingly, we are able to show that $3$-SUM is complete for \emph{low-dimensional} $\FOPZ$ formulas, \emph{independent of their quantifier structure}.
To formalize this, we introduce the \emph{inequality dimension} of a $\FOPZ$ formula as the smallest number of linear inequalities required to model it. More formally, consider a $\FOPZ$ formula $\phi = Q_1 x_1\in A_1, \dots, Q_k x_k\in A_k: \psi$ with $A_i\subseteq \mathbb{Z}^{d_i}$. The \emph{inequality dimension} of $\phi$ is the smallest number $s$ such that there exists a Boolean function $\psi' :\{0,1\}^s \to \{0,1\}$ and (strict or non-strict) linear inequalities $L_1, \dots, L_s$ in the variables $\{x_i[j] : i\in \{1,\dots,k\} ,j\in \{1,\dots,d_i\} \}$ and the free variables such that $\psi(x_1,\dots, x_k)$ is equivalent to $\psi'(L_1,\dots, L_s)$. As an example, the 3-SUM formula $\exists a\in A \exists b\in B\exists c\in C: a+b=c$ has inequality dimension 2, as $a+b=c$ can be modelled as conjunction of the two linear inequalities $a+b \le c$ and $a+b \ge c$, whereas no single linear inequality can model $a+b=c$.

We show that $3$-SUM is fine-grained complete for model-checking $\mathsf{FOP}^3_{\mathbb{Z}}$ formulas with inequality dimension at most $3$.
This result is our perhaps most interesting technical contribution and intuitively combines our result that $3$-SUM is hard for counting $\FOPZ$ witnesses (Corollary~\ref{Count3COMP}) with a geometric argument, specifically, that the union of $n$ unit cubes in $\mathbb{R}^3$ can be decomposed into the union of $O(n)$ pairwise interior- and exterior-disjoint axis-parallel boxes. To this end, we extend a result from \cite{DBLP:journals/dcg/ChewDEK99}, which constructs pairwise interior-disjoint axis-parallel boxes, to also achieve exterior-disjointness. For more details, see the Technical Overview below and Section~\ref{sec:IneqDimension}. 
\begin{restatable}[]{theorem}{threesumineq}
	There is an algorithm deciding $3$-SUM in randomized time $O(n^{2-\epsilon})$ for an $\epsilon>0$,
	if and only if for each problem $P$ in $\mathsf{FOP}^{k}_{\mathbb{Z}}$ with $k\geq 3$ and inequality dimension at most $3$, there exists some $\epsilon > 0$ such that we can solve $P$ in randomized time $O(n^{k-1-\epsilon})$.
	\label{three-sum-Completeness-all-quantifer}
\end{restatable}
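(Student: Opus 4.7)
The ``if'' direction is immediate: $3$-SUM itself belongs to $\FOPZ^3$ with inequality dimension $2$, since $a+b=c$ can be written as $a+b\le c\wedge a+b\ge c$. The substance of the theorem is the ``only if'' direction: assuming a randomized $O(n^{2-\epsilon})$ algorithm for $3$-SUM, every $\FOPZ^k$ formula $\phi$ with $k\ge 3$ and inequality dimension at most $3$ can be decided in randomized $O(n^{k-1-\epsilon_\phi})$ time. A direct brute-force enumeration over the outermost $k-3$ quantified variables contributes a factor $O(n^{k-3})$ and reduces the task to the case $k=3$ (with the residual formula still having inequality dimension at most $3$, since substituting constants into linear forms preserves the underlying inequality structure). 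Hence I would concentrate on $3$-quantifier formulas.

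\paragraph*{Reducing to box conditions.}
Inequality dimension at most $3$ means $\psi(x_1,x_2,x_3)$ is equivalent to $\psi'(L_1,L_2,L_3)$ for three linear inequalities on linear forms $L_1,L_2,L_3$ and a Boolean function $\psi':\{0,1\}^3\to\{0,1\}$. Viewed in coordinates $(L_1,L_2,L_3)\in\mathbb{R}^3$, the truth region is a Boolean combination of three halfspaces, which a standard constant-size decomposition rewrites as a disjoint union of $O(1)$ axis-parallel boxes $\bigwedge_{i=1}^{3}(L_i\,\square_i\,c_i)$ (possibly with unbounded sides). It therefore suffices to decide, for each such box, the formula $Q_1 x_1 Q_2 x_2 Q_3 x_3:\bigwedge_{i=1}^3(L_i\,\square_i\,c_i)$, and combine the $O(1)$ outcomes according to the original Boolean structure.

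\paragraph*{Removing $\forall$ via witness counting.}
By duality (negating a formula preserves inequality dimension and flips all quantifiers), we need only handle $\exists\exists\exists$ (already given by Theorem~\ref{existential_complete}), $\exists\exists\forall$, $\exists\forall\exists$, and $\exists\forall\forall$. Each innermost $\forall$ can be replaced by a witness-count equality: for example, $\exists x_1\exists x_2\forall x_3:\psi$ asks whether some $(x_1,x_2)$ has $|\{x_3:\neg\psi(x_1,x_2,x_3)\}|=0$, and $\exists x_1\forall x_2\exists x_3:\psi$ asks whether some $x_1$ has $|\{x_2:\exists x_3:\psi\}|=|A_2|$. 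I would reduce each parametrized counting task to counting witnesses of an auxiliary $\FOPZ(\exists^3)$ formula on the box condition and invoke Corollary~\ref{Count3COMP} to obtain a randomized $O(n^{2-\epsilon'})$-time algorithm from subquadratic $3$-SUM.

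\paragraph*{Main obstacle: a stronger geometric decomposition.}
For the counting reduction to be sound, every lattice point of the 3D truth region must be charged to exactly one piece of the decomposition, so that per-box witness counts aggregate correctly. The main technical step is therefore a geometric lemma extending Chew et al.~\cite{DBLP:journals/dcg/ChewDEK99}: the union of $n$ axis-parallel unit cubes in $\mathbb{R}^3$ can be written as a union of $O(n)$ pairwise interior- \emph{and} exterior-disjoint axis-parallel boxes. Interior-disjointness alone suffices for measure-theoretic computations but permits double-counting along shared faces; enforcing exterior-disjointness as well produces a true set-theoretic partition of the union under a consistent half-open boundary convention. I would obtain this through a three-axis plane-sweep with a uniform tie-breaking rule that assigns ownership of each shared face to exactly one adjacent box, verifying by induction on the sweep that the total box count remains $O(n)$. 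This is the step most likely to require the greatest care and is where the bulk of the technical effort would go.
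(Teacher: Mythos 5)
Your high-level plan — handle $k=3$ and brute-force the outer $k-3$ quantifiers, reduce by duality to quantifier strings starting with $\exists$, express the matrix in terms of $L_1,L_2,L_3$, replace inner $\forall$'s with witness counts via Corollary~\ref{Count3COMP}, and prove a sharpened version of Chew et al.'s decomposition giving interior- \emph{and} exterior-disjoint boxes — captures all the ingredients of the paper's proof and is in the right order. The paper does indeed prove exactly the geometric Lemma~\ref{Cubes_in_space} you anticipated, and $\exists\exists\exists$ and $\exists\forall\forall$ (dually $\forall\exists\exists$) are dispatched by Theorem~\ref{existential_complete} and Lemma~\ref{fopaee} just as you suggest.

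However, there is a genuine logical gap in the ``Reducing to box conditions'' step. You write that it ``suffices to decide, for each such box, the formula $Q_1 x_1 Q_2 x_2 Q_3 x_3:\bigwedge_{i=1}^3(L_i\,\square_i\,c_i)$, and combine the $O(1)$ outcomes according to the original Boolean structure.'' This distribution is unsound once a $\forall$ is present. Take $\forall a\,\forall b\,\exists c:\psi_1\vee\psi_2$ with $\psi_1,\psi_2$ disjoint box conditions: different $(a,b)$ pairs may need to route through different disjuncts, so the truth value is \emph{not} a Boolean function of the truth values of $\forall a\,\forall b\,\exists c:\psi_1$ and $\forall a\,\forall b\,\exists c:\psi_2$. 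The very cases where your geometric decomposition is needed ($\forall\forall\exists$, $\exists\forall\exists$, and their duals) are precisely the cases where this step fails. The paper sidesteps this by \emph{not} splitting the disjunction outside the quantifiers: it rewrites the DNF so that the disjunction over co-clauses $V_i$ is absorbed into the inner $\exists c$, yielding a single existential over $O(1)\cdot|C|=O(n)$ orthants (truncated to cubes via a large constant $M$), and \emph{only then} applies Lemma~\ref{Cubes_in_space} to partition the union of these $O(n)$ cubes into $O(n)$ disjoint boxes, after which the counting argument is sound. Your later paragraphs do speak of decomposing ``$n$ unit cubes,'' which hints that you eventually arrive at the same merged collection — but the per-box-then-combine step as written breaks the argument, and the fix is to merge the co-clause disjunction with the innermost existential quantifier \emph{before} any decomposition.
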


Note that this fragment of $\FOPZ$ contains a variety of interesting problems. A general example is given by comparisons of sets defined using the sumset arithmetic\footnote{The sumset arithmetic uses the sumset operator $X+Y$ to denote the sumset $\{x+y\mid x\in X, y\in Y\}$ and $\lambda X$ to denote $\{\lambda x \mid x\in X\}$.}, which correspond to formulas of inequality dimension at most 2: E.g., checking, given sets $A,B,C\subseteq \mathbb{Z}$ and $t\in \mathbb{Z}$, whether $C$ is an additive $t$-approximation of the sumset $A+B$ is equivalent to verifying the conjunction of the $\FOPZ(\forall \forall \exists)$ problem\footnote{Note that the corresponding formula is $\forall a\in A\forall b\in B \exists c\in C: (c\le a+b) \wedge (a+b \le c+t)$, which clearly has inequality dimension at most 2.} $A+B \subseteq C+\{0,\dots,t\}$ and (2) the $\FOPZ(\forall \exists \exists)$ problem\footnote{Note that the corresponding formula is $\forall c\in C\exists a\in A\exists b\in B: a+b = c$, which clearly has inequality dimension at most 2.} $C\subseteq A+B$. Likewise, this extends to $\lambda$-multiplicative approximations of sumsets. Furthermore, the problems corresponding to general sumset comparisons like $\alpha_1 A_1 + \cdots + \alpha_i A_i \subseteq \alpha_{i+1} A_{i+1} + \cdots + \alpha_k A_{k} + \{-\ell, \dots, u\}$ have inequality dimension at most~$2$ as well.

Our results of Theorems~\ref{completenesswholeFOP3} and~\ref{three-sum-Completeness-all-quantifer} suggests to view Pareto Sum Verification as a geometric, high-dimensional generalization of $3$-SUM. Furthermore, it remains an interesting problem to establish the highest $d$ such that 3-SUM is complete for $\FOPZ$ formulas of inequality dimension at most $d$; for a discussion see Section \ref{sec::ende}.

\paragraph*{Further Applications}
As an immediate application of our first completeness theorem, we obtain a simple proof of a $n^{4/3-o(1)}$ lower bound for the 4-SUM problem based on the the $3$-uniform hyperclique hypothesis; see the full version of the paper for details. Specifically, by Theorem~\ref{existential_complete}, it suffices to model the 3-uniform 4-hyperclique problem as a problem in $\mathsf{FOP}_{\mathbb{Z}}(\exists \exists \exists \exists)$. The resulting conditional lower bound is implicitly known in the literature, as it can alternatively be obtained by combining a 3-uniform hyperclique lower bound for $4$-cycle given in~\cite{DBLP:conf/soda/LincolnWW18} with a folklore reduction from $4$-cycle to $4$-SUM (see~\cite{DBLP:journals/algorithmica/JafargholiV16} for a deterministic reduction from $3$-cycle to $3$-SUM).
\begin{restatable}{theorem}{lowerbound} \label{thm:4SUMlb}
    If there is some $\epsilon > 0$ such that 4-SUM can be solved in time $O(n^{\frac{4}{3}-\epsilon})$, then the 3-uniform hyperclique hypothesis fails. 
\end{restatable}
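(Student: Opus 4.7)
The plan is to combine Theorem~\ref{existential_complete} with a size-sensitive encoding of 3-uniform 4-hyperclique as a $\FOPZ(\exists^4)$ formula of input size $N=\Theta(n^3)$. Given a 3-uniform hypergraph $H=([n],E)$, I quantify not over the four \emph{vertices} of the sought clique (which would give input size $n$ and be too weak) but over its four \emph{hyperedges}. Fix an ordering $w_1<w_2<w_3<w_4$ of the clique vertices and let $e_i$ denote the sorted triple representing the hyperedge missing $w_i$. Setting
\[
A_1=A_2=A_3=A_4:=\bigl\{(a,b,c)\in\mathbb{Z}^3 : a<b<c,\ \{a,b,c\}\in E\bigr\},
\]
the relations $e_1=(w_2,w_3,w_4)$, $e_2=(w_1,w_3,w_4)$, $e_3=(w_1,w_2,w_4)$, $e_4=(w_1,w_2,w_3)$ translate into the conjunction $\psi$ of the eight linear equalities $e_2[1]=e_3[1]=e_4[1]$, $e_1[1]=e_3[2]=e_4[2]$, $e_1[2]=e_2[2]=e_4[3]$, and $e_1[3]=e_2[3]=e_3[3]$. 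Each $A_i$ has $|A_i|=|E|=O(n^3)$.

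One readily checks that $\phi:=\exists e_1\in A_1\cdots\exists e_4\in A_4:\psi$ holds iff $H$ contains a 4-hyperclique: the equalities define $w_1,\dots,w_4$ as the common values, the sortedness of each $e_i$ forces $w_1<w_2<w_3<w_4$ (and hence four distinct vertices), and the memberships $e_i\in A_i$ guarantee that all four triples of $\{w_1,\dots,w_4\}$ lie in $E$; the converse is obtained by taking the four sorted triples of a given hyperclique. The formula $\psi$ is a quantifier-free linear-arithmetic formula over constant-dimensional vectors with entries in $\{1,\dots,n\}$, so $\phi$ is a bona fide $\FOPZ(\exists^4)$ instance of size $N=O(n^3)$ over a polynomial universe.

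The remainder is immediate: invoking Theorem~\ref{existential_complete} with $k=4$, a hypothesized $O(N^{4/3-\epsilon})$-time algorithm for 4-SUM yields an $O(N^{4/3-\epsilon}\log^{c} N)=O(n^{4-3\epsilon+o(1)})$-time algorithm for 3-uniform 4-hyperclique, contradicting the 3-uniform hyperclique hypothesis for any $\epsilon'\in(0,3\epsilon)$. The only real design step is calibrating the input size: the $N^{4/3}$ exponent of 4-SUM matches the $n^4$ hyperclique baseline precisely when $N=\Theta(n^3)$, and the edge-based encoding hits this sweet spot naturally while keeping the consistency conditions linear. Everything else is automatic from the completeness theorem, so there is no substantial technical obstacle beyond designing the encoding.
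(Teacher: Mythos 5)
Your proposal is correct and matches the paper's intended approach: the paper states that the proof consists of modeling 3-uniform 4-hyperclique as a $\mathsf{FOP}_{\mathbb{Z}}(\exists\exists\exists\exists)$ problem and then invoking Theorem~\ref{existential_complete}, which is exactly what you do by quantifying over the four hyperedges (giving input size $N=\Theta(n^3)$) and pinning down consistency via linear equalities. The encoding, the sortedness argument forcing four distinct vertices $w_1<w_2<w_3<w_4$, and the exponent bookkeeping $N^{4/3-\epsilon}=n^{4-3\epsilon}$ are all sound.
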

 Similarly, we can also give a simple proof for a known lower bound for $3$-SUM.

Another application of our results is to establish class-based conditional bounds. As a case in point, consider the problem of computing the Pareto sum of $A,B\subseteq \mathbb{Z}^d$: Clearly, this problem can be solved in time $\tOh(n^2)$ by explicitly computing the sumset $A+B$ and computing the Pareto front using any algorithm running in near-linear time in its input, e.g.~\cite{DBLP:conf/stoc/GabowBT84}. We prove the following conditional optimality results already in the case when the desired output (the Pareto sum of $A,B$) has size $\Theta(n)$. 
\begin{restatable}[Pareto Sum Computation Lower Bound]{theorem}{lowerboundPS}
	The following conditional lower bounds hold for output-sensitive Pareto sum computation:
	\begin{enumerate}
		\item If there is $\epsilon > 0$ such that we can compute the Pareto sum $C$ of $A,B\subseteq \mathbb{Z}^2$, whenever $C$ is of size $\Theta(n)$, in time $O(n^{2-\epsilon})$, then the $3$-SUM hypothesis fails (thus, for any $\FOPZ^k$ formula $\phi$ of inequality dimension at most 3, there is $\epsilon'>0$ such that $\phi$ can be decided in time $O(n^{k-1-\epsilon'})$). 
		\item If for all $d\ge 2$, there is $\epsilon > 0$ such that we can compute the Pareto sum $C$ of $A,B\subseteq \mathbb{Z}^d$, whenever $C$ is of size $\Theta(n)$, in time $O(n^{2-\epsilon})$, then there is some $\epsilon' > 0$ such that we can decide all $\FOPZ$ formulas with $k$ quantifiers not ending in $\exists \forall \exists$ or $\forall \exists \forall$ in time $O(n^{k-1-\epsilon'})$. 
	\end{enumerate}
	\end{restatable}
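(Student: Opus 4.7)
The plan is to derive both parts of the theorem by reducing Pareto Sum Verification to output-sensitive Pareto Sum Computation and then invoking the prior completeness results (Theorems~\ref{three-sum-Completeness-all-quantifer} and~\ref{completenesswholeFOP3}). Concretely, given a Pareto Sum Verification instance $(A, B, C) \subseteq \mathbb{Z}^d$ with $|A|, |B|, |C| \le n$, I would first compute the Pareto front $C^*$ of $C$ in $\tilde{O}(n)$ time, reducing the task to deciding whether $A + B$ is contained in the downset of $C^*$, call it $D(C^*)$. Next I would construct augmented sets $A', B' \subseteq \mathbb{Z}^{d+1}$ by (a)~lifting $A$ and $B$ into an extra separator coordinate, (b)~adding $O(n)$ \emph{maskers} whose sums dominate exactly $D(C^*)$ (one masker per box of a decomposition of $D(C^*)$ into $O(n)$ axis-aligned boxes, in the spirit of Section~\ref{sec:IneqDimension}), and (c)~adding $\Theta(n)$ padding points whose contributions survive and force the Pareto sum of $A' + B'$ to have size $\Theta(n)$. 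If verification holds, the Pareto sum consists solely of padding and maskers; if it fails, the lifted image of a bad witness $a + b$ appears and is detected by a linear-time scan of the output. Applying the assumed output-sensitive Pareto Sum Computation algorithm to $(A', B')$ thus decides Pareto Sum Verification in $O(n^{2 - \epsilon})$ time.

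For Part~(1), observe that Pareto Sum Verification in $\mathbb{Z}^2$ is a $\FOPZ(\forall \forall \exists)$ formula of inequality dimension $2$, since the inequality ``$a + b \le c$'' in two dimensions unfolds into the conjunction of two scalar inequalities, one per coordinate. The reduction above thus yields a subquadratic algorithm for such a formula, and Theorem~\ref{three-sum-Completeness-all-quantifer} then implies that the 3-SUM hypothesis fails; the parenthetical ``thus'' clause is a direct reapplication of Theorem~\ref{three-sum-Completeness-all-quantifer}. For Part~(2), applying the same reduction in every constant dimension $d \ge 2$ yields fast Pareto Sum Verification for all such $d$. The proof of Theorem~\ref{completenesswholeFOP3} separately reduces the $\FOPZ$ formulas whose quantifier structures do not end in $\exists \forall \exists$ or $\forall \exists \forall$ to Pareto Sum Verification, reserving the remaining two quantifier endings for Hausdorff distance under $n$ Translations; hence speeding up Pareto Sum Verification alone handles exactly the subset claimed in Part~(2).

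The main obstacle is the masker construction, which must simultaneously (a)~dominate every element of $A + B$ lying in $D(C^*)$, (b)~dominate no element outside $D(C^*)$, and (c)~control the Pareto sum size to $\Theta(n)$ so that the output-sensitivity promise holds. Conditions (a) and (b) together require a decomposition of $D(C^*)$ into $O(n)$ interior- and exterior-disjoint axis-aligned boxes, which allows each masker to be lifted in the separator coordinate so that it strictly dominates only its designated box. In $\mathbb{Z}^2$ this decomposition is straightforward using the $O(n)$ staircase steps of $C^*$; in higher dimensions, I would extend the geometric decomposition from Section~\ref{sec:IneqDimension} to arbitrary downsets in $\mathbb{R}^d$, at the cost of a dimension-dependent polynomial blow-up that is absorbed into the $\epsilon(d)$ appearing in the hypothesis of Part~(2). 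A secondary subtlety is handling the no-case when the Pareto sum of $A + B$ restricted to the bad region is itself of size $\omega(n)$: here I plan to run the output-sensitive algorithm under a strict output budget of $\Theta(n)$, interpreting any budget overrun as a definitive ``verification fails'' signal.
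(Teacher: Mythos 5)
Your high-level plan -- reduce Pareto Sum Verification to output-sensitive Pareto Sum Computation and then plug into the completeness machinery (Lemma~\ref{verif-complete-three} for Part~(2)) -- is indeed the paper's route (via Lemma~\ref{annoying_construction} and Lemma~\ref{Verif to Comp}). However, the execution has several genuine gaps. First, your reduction lifts a $d$-dimensional verification instance to a $(d+1)$-dimensional computation instance. This is fatal for Part~(1): a subquadratic algorithm for computing Pareto sums in $\mathbb{Z}^2$ would then only yield fast \emph{one}-dimensional verification, which is trivial and carries no hardness. The paper's construction is dimension-preserving: it pads $A,B,C$ within $\mathbb{Z}^d$ so that inclusion and minimality hold automatically, whence the computed Pareto sum equals $\tilde{C}$ iff dominance holds; no extra separator coordinate is needed. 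Second, your derivation of 3-SUM hardness in Part~(1) invokes Theorem~\ref{three-sum-Completeness-all-quantifer} in the wrong direction: that theorem reduces the whole class \emph{to} 3-SUM, so a fast algorithm for one member (2D Pareto Sum Verification) of a class for which 3-SUM is complete implies nothing about 3-SUM. What is needed is a reduction \emph{from} 3-SUM \emph{to} 2D Pareto Sum Verification; this is supplied by Theorem~\ref{sumsetapproxTHM}, since Additive Sumset Approximation ($c \le a+b \wedge a+b \le c+t$) is exactly a two-dimensional dominance condition. The parenthetical clause of Part~(1) is the only place where Theorem~\ref{three-sum-Completeness-all-quantifer} enters.

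Third, the masker construction is both unnecessary and, as proposed, unsound in high dimensions. A point's dominance region is an unbounded orthant, never a bounded box, so no lifting makes a masker ``dominate only its designated box''; but more importantly you do not need a box decomposition at all, since the union of the orthants $D(c^*)$ over $c^*\in C^*$ \emph{is} $D(C^*)$ -- one masker per point of $C^*$ suffices. If one did insist on a disjoint box decomposition of a union of $n$ orthants in $\mathbb{R}^d$, its worst-case complexity is $\Theta(n^{\lfloor d/2\rfloor})$ (as the paper notes after Theorem~\ref{ineq3}), which is a polynomial blow-up of the \emph{instance size}; it cannot be ``absorbed into $\epsilon(d)$'', and for $d\ge 4$ the resulting running time exceeds $n^2$, destroying the reduction and hence Part~(2). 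Your budget-overrun trick for the no-case is fine and is essentially what the promise ``$C$ of size $\Theta(n)$'' requires, but the construction feeding into it needs to be replaced by the dimension-preserving padding of Lemma~\ref{annoying_construction}.
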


Our lower bound for 2D strengthens a quadratic-time lower bound found by Funke et al.~\cite{HespeSST24-PC} based on the $(\min,+)$-convolution hypothesis to hold already under the weaker (i.e., more believable) $3$-SUM hypothesis. For higher dimensions, we furthermore strengthen the conditional lower bound via its connection to $\FOPZ$.

We conclude with remaining open questions in Section \ref{sec::ende}.

\subsection{Further Related Work}\label{sec:relatedWork}
To our knowledge, the first investigation of the connection between classes of model-checking problems and central problems in fine-grained complexity was given by Williams~\cite{DBLP:conf/csl/Williams14}, who shows that the $k$-clique problem is complete for the class of existentially-quantified first order graph properties, among other results. As important follow-up work, Gao et al.~\cite{GaoIKW19} establish OV as complete problem for model-checking any first-order property.

Subsequent results include classification results for $\exists^k \forall$-quantified first-order graph properties~\cite{DBLP:conf/coco/BringmannFK19}, fine-grained upper and lower bounds for counting witnesses of first-order properties~\cite{DBLP:conf/icalp/DellRW19}, completeness theorems for multidimensional ordering properties~\cite{DBLP:journals/algorithmica/AnGIJKN22} (discussed below), completeness and classification results for optimization classes~\cite{DBLP:conf/approx/BringmannCFK21,DBLP:conf/icalp/BringmannCFK22} as well as an investigation of sparsity for monochromatic graph properties~\cite{FischerKR24}.

We remark that An et al.~\cite{DBLP:journals/algorithmica/AnGIJKN22} study completeness results for a strict subset of $\FOPZ$ formulas: Specifically, they introduce a class $\PTO_{k,d}$ of $k$-quantifier first-order sentences over inputs $\mathbb{N}^d$ (or, without loss of generality $\{1,\dots, n\}^d$) that may only use \emph{comparisons} of coordinates (and constants).
Note that such sentences lack additive structure, and indeed the fine-grained complexity differs decisively from $\FOPZ$: E.g., for $\PTO(\exists \exists \exists)$ formulas, they establish the sparse triangle detection problem as complete,  establishing a conditionally tight running time of $m^{2\omega/(\omega+1)\pm o(1)}$.
This is in stark contrast to $\FOPZ(\exists \exists \exists)$ formulas, for which we establish $3$-SUM as complete problem, yielding a conditionally optimal running time of $n^{2\pm o(1)}$. In particular, for each 3-quantifier structure $Q_1Q_2Q_3$, a $O(n^{2-\epsilon})$-time algorithm for all $\FOPZ(Q_1Q_2Q_3)$ problems would break a corresponding hardness barrier\footnote{Specifically, an $O(n^{2-\epsilon})$ time algorithm for problems in $\mathsf{FOP}_{\mathbb{Z}}(\exists \exists \exists), \FOPZ(\forall \forall \forall), \mathsf{FOP}_{\mathbb{Z}}(\forall \forall \exists)$, or $\FOPZ(\exists \exists \forall)$ with $\epsilon>0$ would
refute the $3$-SUM hypothesis.
Furthermore, an $O(n^{2-\epsilon})$ time algorithm for problems in $\mathsf{FOP}_{\mathbb{Z}}(\forall \exists \exists)$, $\FOPZ(\exists \forall \forall)$, $\mathsf{FOP}_{\mathbb{Z}}(\exists \forall \exists)$, or $\FOPZ(\forall \exists \forall)$ with $\epsilon>0$ would
immediately yield an improvement for the MaxConv lower bound problem \cite{DBLP:journals/talg/CyganMWW19}; for details see the full version of the paper.}.

Since any $\PTO_{k,d}$ formula is also a $\FOPZ$ formula with the same quantifier structure, any hardness result in~\cite{DBLP:journals/algorithmica/AnGIJKN22} for $\PTO(Q_1, \dots, Q_k)$ carries over to $\FOPZ(Q_1,\dots, Q_k)$. On the other hand, any of our algorithmic results for $\FOPZ(Q_1,\dots, Q_k)$ transfers to its subclass $\PTO(Q_1,\dots, Q_k)$.

\section{Technical Overview}
\label{sec:TechnicalOverview}

In this section, we sketch the main ideas behind our proofs.
\paragraph*{Completeness of $k$-SUM for $\mathsf{FOP_{\mathbb{Z}}}(\exists^k)$}
With the right ingredients, proving that $k$-SUM is complete for $\FOPZ$ formulas with $k$ existential quantifiers (Theorem~\ref{existential_complete}) is possible via a simple approach: We observe that any $\FOPZ(\exists^k)$ formula $\phi$ can be rewritten such that we may assume that $\phi$  is a conjunction of $m$ inequalities. We then use a slight generalization of a bit-level trick of~\cite{DBLP:journals/siamcomp/WilliamsW13} to reduce each inequality to an equality, incurring only $O(\log n)$ overhead per inequality (intuitively, we need to guess the most significant bit position at which the left-hand side and the right-hand side differ).
Thus, we obtain $O(\log^m n)$ conjunctions of $m$ equalities; each such conjunction can be regarded as an instance of Vector $k$-SUM. Using a straightforward approach for reducing Vector $k$-SUM to $k$-SUM given in~\cite{DBLP:journals/corr/AbboudLW13}, the reduction to $k$-SUM follows. We give all details in Section~\ref{sec:existentialpa} and the full version of the paper.

\paragraph*{Counting witnesses and handling multisets}
While the reduction underlying Theorem \ref{existential_complete} preserves the existence of solutions, it fails 
to preserve the number of solutions. The challenge is that when applying the bit-level trick to reduce inequalities to equalities, we need to make sure that for each witness of a $\FOPZ(\exists^k)$ formula $\phi$, there is a unique witness in the $k$-SUM instances produced by the reduction. While it is straightforward to ensure that we do not produce multiple witnesses, the subtle issue arises that distinct witnesses for $\phi$ may be mapped to the same witness in the $k$-SUM instances. It turns out that it suffices to solve a \emph{multiset} version of \#$k$-SUM, i.e., to count all witnesses in a $k$-SUM instance in which each input number may occur multiple times. 

Thus, to obtain Theorem~\ref{thm:counting-witnesses}, we show a fine-grained equivalence of Multiset \#$k$-SUM and \#$k$-SUM, for all odd $k\ge 3$. This fine-grained equivalence, which we prove via a heavy-light approach, might be of independent interest.\footnote{We remark that it is plausible that the proof of the subquadratic equivalence of $3$-SUM and \#$3$-SUM due to Chan et al.~\cite{DBLP:journals/corr/abs-2303-14572} could be extended to establish subquadratic equivalence with Multiset \#$3$-SUM as well. Note, however, that a fine-grained equivalence of \#$k$-SUM and $k$-SUM is not known for any $k\ge 4$.}
Combining this equivalence with an inclusion-exclusion argument, we may thus lift Theorem~\ref{existential_complete} to a counting version for all odd $k \ge 3$.

In the reductions below, we will make crucial use of the immediate corollary of Theorem~\ref{thm:counting-witnesses} and~\cite{DBLP:journals/corr/abs-2303-14572} that for each $\FOPZ(\exists \exists \exists)$ formula $\phi$, there exists a subquadratic reduction from counting witnesses for $\phi$ to $3$-SUM (Corollary~\ref{Count3COMP}).

\paragraph*{On general quantifier structures}
We perform a systematic study on the different quantifier structures for $k=3$. 
Due to simple negation arguments, we only have to perform a systematic study on the classes of problems
$\mathsf{FOP}_{\mathbb{Z}}(\exists \exists \exists)$,
$\mathsf{FOP}_{\mathbb{Z}}(\forall \exists \exists)$,
$\mathsf{FOP}_{\mathbb{Z}}(\forall \forall \exists)$,
$\mathsf{FOP}_{\mathbb{Z}}(\exists \forall \exists)$.

First, we state a simple lemma establishing syntactic complete problems for the classes above.
\begin{lemma}[Syntactic Complete problems (Informal Version)]
Let $Q_1,Q_2 \in \{\exists,\forall\}$. We can reduce every formula of the class $\mathsf{FOP}_{\mathbb{Z}}(Q_1Q_2\exists)$ to the formula  
$$Q_1 \Tilde{a_1} \in \Tilde{A_1} Q_2 \Tilde{a_2}  \in \Tilde{A_2} \exists \Tilde{a_3}  \in \Tilde{A_3}: \Tilde{a_1} +\Tilde{a_2}  \leq \Tilde{a_3}.  $$
\end{lemma}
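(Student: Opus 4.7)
The plan is to reduce an arbitrary $\FOPZ(Q_1 Q_2 \exists)$ formula to the canonical form in three steps: normalize the quantifier-free body into disjunctive normal form, absorb the resulting outer disjunction into the innermost existential quantifier via a branch-selector tag on $A_3$, and finally pack the remaining conjunction of linear inequalities into a single componentwise vector inequality $\Tilde{a_1}+\Tilde{a_2}\leq \Tilde{a_3}$.

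First, I would rewrite the body $\psi(x_1,x_2,x_3)$ as
\[ \psi \equiv \bigvee_{i=1}^{r}\bigwedge_{j=1}^{s}\bigl(L_{i,j}(x_1,x_2,x_3)\leq R_{i,j}(x_1,x_2,x_3)\bigr), \]
where equalities split as two $\leq$-constraints and strict inequalities become $L+1\leq R$ (using integrality), and $r,s$ are constants depending only on the fixed $\phi$. Second, since $\exists$ commutes with $\bigvee$, I would absorb the disjunction by introducing the augmented domain $A_3':=[r]\times A_3$ of size $r|A_3|=O(n)$, turning the formula into
\[ Q_1 x_1\in A_1\; Q_2 x_2\in A_2\; \exists (i^*,x_3)\in A_3' : \bigwedge_{j=1}^{s}\bigl(L_{i^*,j}\leq R_{i^*,j}\bigr). \]
Note that the branch selector $i^*$ is chosen by the innermost $\exists$ and hence may depend on $x_1,x_2$, which is essential in the mixed-quantifier cases $Q_1 Q_2\in\{\forall\exists,\exists\forall,\forall\forall\}$.

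Third, I would collapse all $r\cdot s$ linear inequalities into a single componentwise comparison. Writing $R_{i,j}-L_{i,j}=\alpha_{i,j}^{(1)}\cdot x_1+\alpha_{i,j}^{(2)}\cdot x_2+\alpha_{i,j}^{(3)}\cdot x_3+\beta_{i,j}$ and indexing coordinates by $(i,j)\in[r]\times[s]$, I set
\[ [\Tilde{a_1}(x_1)]_{i,j}:=\alpha_{i,j}^{(1)}\cdot x_1, \quad [\Tilde{a_2}(x_2)]_{i,j}:=\alpha_{i,j}^{(2)}\cdot x_2, \]
and
\[ [\Tilde{a_3}(i^*,x_3)]_{i,j}:=\begin{cases} \alpha_{i,j}^{(3)}\cdot x_3+\beta_{i,j} & \text{if } i=i^*, \\ M & \text{otherwise,} \end{cases} \]
with $M$ chosen larger than any possible value of the corresponding left-hand side (still polynomial in $n$, since the universe is polynomial). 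Then $\Tilde{a_1}+\Tilde{a_2}\leq \Tilde{a_3}$ componentwise is trivially satisfied on coordinates with $i\neq i^*$ and reduces to $L_{i^*,j}\leq R_{i^*,j}$ on coordinates with $i=i^*$, so existentially quantifying $(i^*,x_3)$ over $A_3'$ exactly recovers the original $\exists x_3:\bigvee_i\bigwedge_j(L_{i,j}\leq R_{i,j})$.

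The hard part is the second step: arranging the disjunction-absorption so that the final collapse in step three genuinely produces a \emph{single} componentwise inequality rather than a disjunction of vector inequalities. The ``slack-$M$'' trick resolves this by folding the selector's job of disabling unchosen branches into the very same componentwise comparison used for packing, avoiding any Boolean gadgetry. Chained together, the three steps give a deterministic reduction with only constant blow-up in the input size (depending on the fixed $\phi$) that preserves the $Q_1 Q_2 \exists$ quantifier structure.
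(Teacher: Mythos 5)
Your three-step plan (DNF-normalize the body; fold the outer disjunction into the innermost $\exists$ by tagging $A_3$ with a branch selector; pack the conjunction of linear inequalities into one componentwise vector comparison, with a large slack constant $M$ disabling the unchosen clauses) is exactly the right construction, and it preserves size, running time, and quantifier structure as required. One small slip to fix: with $R_{i,j}-L_{i,j}=\alpha^{(1)}_{i,j}\cdot x_1+\alpha^{(2)}_{i,j}\cdot x_2+\alpha^{(3)}_{i,j}\cdot x_3+\beta_{i,j}$, the condition $L_{i,j}\leq R_{i,j}$ is $-\alpha^{(1)}_{i,j}\cdot x_1-\alpha^{(2)}_{i,j}\cdot x_2\leq \alpha^{(3)}_{i,j}\cdot x_3+\beta_{i,j}$, so you need $[\Tilde{a_1}]_{i,j}:=-\alpha^{(1)}_{i,j}\cdot x_1$ and $[\Tilde{a_2}]_{i,j}:=-\alpha^{(2)}_{i,j}\cdot x_2$ (or equivalently expand $L_{i,j}-R_{i,j}$ and flip the sign on $\Tilde{a_3}$); as written, the componentwise inequality does not encode $L_{i,j}\leq R_{i,j}$. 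With that sign corrected the argument is sound and agrees with the paper's approach.
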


\paragraph*{On the quantifier change $\mathsf{FOP}_{\mathbb{Z}}(\forall \exists \exists) \to \mathsf{FOP}_{\mathbb{Z}}(\exists \exists \exists) $.}
We rely on the subquadratic equivalence between $3$-SUM and a functional version of $3$-SUM called All-ints $3$-SUM, which asks to determine 
for every $a \in A$  whether there is a solution involving $a$. A randomized subquadratic equivalence was given in~\cite{DBLP:conf/focs/WilliamsW10}, which can be turned deterministic~\cite{DBLP:conf/icalp/LincolnWWW16}.

This equivalence allows us to use the bit-level trick to turn inequalities to equalities, despite it seemingly not interacting well with the quantifier structure $\forall \exists \exists$ at first sight.
This results in a proof of the following hardness result.
\begin{restatable}{lemma}{allintshard}
	If $3$-SUM can be solved in time $O(n^{2-\epsilon})$ for an $\epsilon>0$,
	then all problems $P$ of $\mathsf{FOP}_{\mathbb{Z}}(\forall \exists \exists)$
	can be solved in time $O(n^{2-\epsilon_{P}})$ for an $\epsilon_{P}>0$. 
	\label{fopaee}
	\end{restatable}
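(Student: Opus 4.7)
The plan is to combine the syntactic completeness for $\mathsf{FOP}_{\mathbb{Z}}(\forall\exists\exists)$ with the bit-level trick of Vassilevska Williams and Williams and the subquadratic equivalence of $3$-SUM and All-ints $3$-SUM. By the syntactic-complete-problems lemma it suffices to give, for some $\epsilon'>0$, an $O(n^{2-\epsilon'})$-time algorithm for the canonical instance
\[
\forall \tilde{a}_1 \in \tilde{A}_1\, \exists \tilde{a}_2 \in \tilde{A}_2\, \exists \tilde{a}_3 \in \tilde{A}_3 : \tilde{a}_1 + \tilde{a}_2 \le \tilde{a}_3,
\]
where $\tilde{A}_1,\tilde{A}_2,\tilde{A}_3 \subseteq \mathbb{Z}^d$ for some constant $d$. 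Equivalently, the task is to decide whether the set $S \subseteq \tilde A_1$ of elements admitting such partners $\tilde a_2,\tilde a_3$ equals all of $\tilde A_1$, which is naturally an All-ints task over the outer quantifier.

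The second step is to replace the single inequality by a disjunction of $O(\log U) = O(\log n)$ equality conditions via the bit-level trick. Concretely, $\tilde a_1 + \tilde a_2 \le \tilde a_3$ holds iff there exists a most-significant bit position $i$ at which the binary representations of $\tilde a_1 + \tilde a_2$ and $\tilde a_3$ first differ in the correct direction; fixing $i$ allows us to rewrite the inequality as an equality $f_i(\tilde a_1) + g_i(\tilde a_2) = h_i(\tilde a_3)$ for suitable encodings $f_i,g_i,h_i$. Since $\exists$ distributes over $\vee$, we can push the new disjunction outside the inner existentials to obtain
\[
\forall \tilde a_1 \in \tilde A_1 : \bigvee_{i} \exists \tilde a_2 \in \tilde A_2\, \exists \tilde a_3 \in \tilde A_3 : f_i(\tilde a_1) + g_i(\tilde a_2) = h_i(\tilde a_3).
\]

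For each fixed $i$, the inner problem is an instance of All-ints Vector $3$-SUM: determine for each $\tilde a_1 \in \tilde A_1$ whether it participates in a witness. Using the standard Vector $k$-SUM to $k$-SUM reduction of~\cite{DBLP:journals/corr/AbboudLW13}, this reduces to All-ints $3$-SUM over a polynomial universe, which by the equivalence of~\cite{DBLP:conf/focs/WilliamsW10, DBLP:conf/icalp/LincolnWWW16} can be solved in time $O(n^{2-\epsilon_0})$ whenever $3$-SUM admits an $O(n^{2-\epsilon})$ algorithm. Running this for all $O(\log n)$ values of $i$ and computing the obtained witness-sets $S_i \subseteq \tilde A_1$, we answer YES iff $\bigcup_i S_i = \tilde A_1$, within total time $O(n^{2-\epsilon_0}\log n) = O(n^{2-\epsilon_{P}})$ for some $\epsilon_P > 0$.

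The main obstacle I anticipate is ensuring that the bit-level trick cooperates with the All-ints interface: one must verify that after the case split on $i$, each of the $O(\log n)$ resulting equality problems is a genuine All-ints (Vector) $3$-SUM instance \emph{sharing the same universally-quantified element} $\tilde a_1$, so that the witness sets $S_i$ combine correctly under the outer disjunction. Care is also needed in the vector-to-scalar reduction to ensure that the identity of $\tilde a_1$ is preserved on both the query and the solution side; otherwise we could lose the ability to aggregate the per-$i$ witness sets across the disjunction.
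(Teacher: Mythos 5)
Your proposal follows essentially the same route the paper sketches in its technical overview: reduce to the syntactic normal form, apply the bit-level trick of Vassilevska Williams and Williams to turn the inequality into a polylogarithmic-size disjunction of equality tests, push the inner existentials through that disjunction, solve each resulting case as an All-ints (Vector) $3$-SUM instance via the subquadratic equivalence between $3$-SUM and All-ints $3$-SUM, and accept iff the union of the per-case witness sets covers all of $\tilde A_1$. One small technical nuance: the normal form yields a \emph{componentwise} inequality over $\mathbb{Z}^d$, so the bit-level trick must be applied coordinate by coordinate, giving $O(\log^{d} n)$ rather than $O(\log n)$ cases; since $d$ is a constant fixed by the formula this is still polylogarithmic and the conclusion is unchanged. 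Your flagged concern about preserving the identity of $\tilde a_1$ through the vector-to-scalar reduction is legitimate but manageable, because the encoding of~\cite{DBLP:journals/corr/AbboudLW13} is injective and the same map $f_1^{\ell,\psi}$ is applied to all of $\tilde A_1$ in each case, so the correspondence between original elements and encoded scalars can be tracked explicitly.
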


\paragraph*{On the quantifier change $\mathsf{FOP}_{\mathbb{Z}}(\exists \exists \exists) \to \mathsf{FOP}_{\mathbb{Z}}(\forall \forall \exists) $.}
As a first result for the class $\mathsf{FOP}_{\mathbb{Z}}(\forall \forall \exists)$, 
we are able to show equivalence to $3$-SUM for a specific problem in this class,
thus introducing a $3$-SUM equivalent problem
with a different quantifier structure in comparison to $3$-SUM.
Specifically, we consider the problem of verifying additive $t$-approximation of sumsets.
We are able to precisely characterize the fine-grained complexity depending on $t$.

Formally, we show the following theorem.
\begin{restatable}{theorem}{sumsetapproxchar}
	Consider the Additive Sumset Approximation problem of deciding, given $A,B,C\subseteq \mathbb{Z}, t\in \mathbb{Z}$, whether
	$$A+B \subseteq C +\{0,\dots,t\}.$$
	This problem is
	\begin{itemize}
	\item solvable in time $O(n^{2-\delta})$ with $\delta>0$, whenever $t=O(n^{1-\epsilon})$ for any $\epsilon>0,$
	\item not solvable in time $O(n^{2-\epsilon})$, whenever $t =\Omega(n)$ assuming the Strong $3$-SUM hypothesis.
	\end{itemize}
	  Furthermore, subquadratic hardness holds under the standard 3-SUM Hypothesis if no restriction on $t$ is made.
	  \label{sumsetapproxTHM}
	\end{restatable}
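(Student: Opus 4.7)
The plan is to address the three parts of the theorem separately: an algorithm for small $t$, and two hardness results sharing a common scaling-based reduction from $3$-SUM.

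For Part~1 (the upper bound under $t = O(n^{1-\epsilon})$), I would leverage the size bound $|C+\{0,\dots,t\}| \le (t+1)|C| = O(nt) = O(n^{2-\epsilon})$. My approach is to compute $A+B$ via an output-sensitive sparse convolution on $A$ and $B$, aborting as soon as the accumulated sumset exceeds the threshold $(t+1)n$. If an abort occurs, then $|A+B| > (t+1)n \ge |C+\{0,\dots,t\}|$, so we may immediately output \emph{no}. Otherwise, I materialize $C+\{0,\dots,t\}$ explicitly in $O(nt)$ time, sort both sets, and verify containment via a merging scan in $\tilde O(nt)$ time. This yields a total running time of $\tilde O(nt) = \tilde O(n^{2-\epsilon})$, implying the claimed $O(n^{2-\delta})$ algorithm for some $\delta>0$.

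For the lower bounds (Parts~2 and 3), I plan a unified scaling-based reduction from $3$-SUM. Given a $3$-SUM instance $(A_0, B_0, C_0)$ with $A_0 \cup B_0 \cup C_0 \subseteq \{0,\dots,N\}$, set $M := N+1$ and define $A := M \cdot A_0$, $B := M \cdot B_0$, $C := M \cdot (\{0,\dots,2N\} \setminus C_0)$, and $t := M-1$. The key observation is that $A+B$ consists only of multiples of $M$, while each covering interval $[Mj, Mj+t] = [Mj, Mj+M-1]$ with $Mj \in C$ contains exactly one multiple of $M$, namely $Mj$ itself. Hence $A+B \subseteq C+\{0,\dots,t\}$ holds iff for every $k \in A_0+B_0 \subseteq \{0,\dots,2N\}$, $Mk$ lies in some $[Mj, Mj+t]$ with $j \in \{0,\dots,2N\}\setminus C_0$, which is equivalent to $(A_0+B_0) \cap C_0 = \emptyset$, i.e., the $3$-SUM instance is \emph{no}.

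To obtain Part~2, I invoke the Strong $3$-SUM hypothesis, under which $N = \Theta(n)$: the reduction then produces an Additive Sumset Approximation instance of input size $O(n)$ with $t = M-1 = \Theta(n) = \Omega(n)$, so an $O(n^{2-\epsilon})$ algorithm would refute Strong $3$-SUM. For Part~3, I apply the same reduction to standard $3$-SUM in polynomial universe $N = n^{O(1)}$; the resulting instance has $t$ polynomial in $n$ (no restriction on $t$), and a subquadratic algorithm in its own input size would contradict standard $3$-SUM in the corresponding parameter regime. I anticipate the principal technical subtlety to lie in Part~1, where the output-sensitive sparse convolution must gracefully detect exceeding the size threshold while maintaining only polylogarithmic overhead; this is achievable with standard sparse convolution machinery. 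For the lower bounds, correctness reduces to a clean multiples-of-$M$ bookkeeping argument, with the essential design choice $t = M-1$ ensuring each covering interval captures exactly one multiple of $M$, so that the structure of $C+\{0,\dots,t\}$ faithfully encodes the 3-SUM target set.
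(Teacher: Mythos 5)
Your Part~1 is essentially the paper's argument: bound $|C+\{0,\dots,t\}|$ by $O(nt)$, run an output-sensitive sparse convolution for $A+B$ with an early abort once the sumset provably exceeds that threshold, and compare explicitly. That part is fine.

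The lower-bound reduction, however, has a genuine gap: the set $C := M\cdot(\{0,\dots,2N\}\setminus C_0)$ has size $\Theta(N)$, where $N$ is the \emph{universe} size of the $3$-SUM instance, not its cardinality $n$. Your claim that the Strong $3$-SUM hypothesis lets you take $N=\Theta(n)$ is incorrect -- $3$-SUM over a universe of size $O(n)$ (indeed $O(n^{2-\epsilon})$) is solvable in subquadratic time by FFT, so no credible hypothesis asserts hardness there; the Strong $3$-SUM hypothesis concerns universe $\Theta(n^2)$. Consequently your construction produces an Additive Sumset Approximation instance of size $m=\Theta(n^2)$ (or $m=n^{O(1)}$ in the standard setting), and an $O(m^{2-\epsilon})$ algorithm for it yields only an $O(n^{4-2\epsilon})$-time (respectively, much worse) algorithm for $3$-SUM, which refutes nothing. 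The covering logic itself (each interval $[Mj,Mj+M-1]$ capturing exactly one multiple of $M$) is sound, but a fine-grained reduction must encode the complement of the target set using only $O(n)$ interval endpoints, and the complement of $C_0$ inside a quadratically (or polynomially) large universe is simply too big for that. This is precisely the obstruction the paper flags: it writes that covering $\bar C$ by intervals ``appears impossible for $3$-SUM'' directly, and instead first passes to the Convolution $3$-SUM problem via Pătraşcu's subquadratic equivalence. There the forbidden set is $\{(k,c[k])\}_{k}$, i.e., one value per index, so its complement within each relevant column is just two半 intervals per index -- $O(n)$ intervals plus a few auxiliary points in total -- which is what makes the reduction go through. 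Your proposal is missing this detour, and without it Parts~2 and~3 do not follow.
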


The above theorem is essentially enabling a quantifier change transforming the $\exists \exists \exists$ quantifier structure for which $3$-SUM is complete 
into a subquadratic equivalent problem with a quantifier structure $\forall \forall \exists$.
Moreover, the $3$-SUM hardness is a witness to the hardness of the class $\mathsf{FOP}_{\mathbb{Z}}(\forall \forall \exists)$.

Let us remark a few interesting aspects: The algorithmic part follows from sparse convolution techniques
going back to Cole and Hariharan~\cite{DBLP:conf/stoc/ColeH02}, see~\cite{DBLP:journals/corr/abs-2107-07625} for a recent account and also \cite{DBLP:conf/stoc/ChanL15,DBLP:conf/icalp/BringmannN21,DBLP:conf/stoc/BringmannFN21}. Specifically, whenever $t = O(n^{1-\epsilon})$, it holds that $|C + \{0, \dots, t\}| = O(n^{2-\epsilon})$ and intuitively,
we can use an output-sensitive convolution algorithm to compute $A+B$ and compare it to $C+\{0, \dots, t\}$.\footnote{The argument is slightly more subtle, since we need to avoid computing $A+B$ if its size exceeds $O(n^{2-\epsilon})$.}
Our result indicates that an explicit construction of $C+\{0, \dots, t\}$ is required, since once it may get as large as $\Omega(n^2)$, we obtain a $n^{2-o(1)}$-time lower
bound assuming the Strong $3$-SUM Hypothesis.  

The lower bound follows from describing the $3$-SUM problem alternatively as $(A+B) \cap C \neq \emptyset$, which is equivalent to the negation of $(A+B)\subseteq \Bar{C}$, where $\Bar{C}$ denotes the complement of $C$.
Thus, we aim to cover the complement of $C$ by intervals of length $t$. While this appears impossible for $3$-SUM, we employ the subquadratic equivalence of $3$-SUM and its convolutional version due to Patrascu \cite{DBLP:conf/stoc/Patrascu10}. This problem will deliver us the necessary structure to represent this complement with the addition of few auxilliary points.

The reverse reduction from Additive Sumset Approximation to $3$-SUM follows from Theorem~\ref{three-sum-Completeness-all-quantifer} (as Additive Sumset Approximation has inequality dimension $2$).

\paragraph*{On completeness results for $\mathsf{FOP}_{\mathbb{Z}}^k$}
The above ingredients establish our completeness theorems by exhaustive search over remaining quantifiers. Specifically, by a combination of Theorem~\ref{sumsetapproxTHM}, which shows that Additive Sumset Approximation is $3$-SUM hard, and a combination of Lemma~\ref{fopaee} and Theorem~\ref{existential_complete},
we get:
\begin{restatable}[]{lemma}{verifhard}
There is a function $\epsilon(d)>0$ such that 
the Verification of Pareto Sum problem can be solved in time $O(n^{2-\epsilon(d)})$
if and only if all problems $P$ in the classes 
\begin{itemize}
\item $\mathsf{FOP}_{\mathbb{Z}}(Q_1\dots Q_{k-3}\exists \exists \exists)$,$\mathsf{FOP}_{\mathbb{Z}}(Q_1\dots Q_{k-3}\forall \forall \forall),$
\item $\mathsf{FOP}_{\mathbb{Z}}(Q_1\dots Q_{k-3}\forall \exists \exists)$,$\mathsf{FOP}_{\mathbb{Z}}(Q_1\dots Q_{k-3}\exists \forall \forall),$
\item $\mathsf{FOP}_{\mathbb{Z}}(Q_1\dots Q_{k-3}\forall \forall \exists)$,$\mathsf{FOP}_{\mathbb{Z}}(Q_1\dots Q_{k-3}\exists \exists \forall),$
\end{itemize}
where $Q_1, \dots Q_{k-3} \in \{ \exists, \forall \}$ and $k\geq 3$,
can be solved in time $O(n^{k-1-\epsilon_P})$ for an $\epsilon_P>0$.
\label{verif-complete-three}
\end{restatable}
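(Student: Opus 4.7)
The plan is to argue both implications separately. The ``if'' direction is immediate, as Pareto Sum Verification in $\mathbb{Z}^d$ is itself a $\FOPZ(\forall\forall\exists)$ formula (the $k=3$ instance of the listed class $\FOPZ(Q_1\dots Q_{k-3}\forall\forall\exists)$), so uniform subquadratic algorithms for the listed classes directly yield a subquadratic algorithm for PSV. For the ``only if'' direction, I assume PSV is solvable in time $O(n^{2-\epsilon(d)})$ for every $d$; I would then reduce an arbitrary $k$-quantifier instance first to the base case $k=3$, and then handle the six three-quantifier patterns using the auxiliary completeness results stated earlier in the paper.

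To reduce to $k=3$, I would brute-force over the outer quantifiers: iterate through all at-most-$n^{k-3}$ assignments $(\hat a_1,\dots,\hat a_{k-3})\in A_1\times\cdots\times A_{k-3}$, treat these values as additional free parameters of the inner $3$-quantifier formula $\phi'$, solve each parameterised $\phi'$ in time $O(n^{2-\epsilon})$, and aggregate the Boolean results from innermost to outermost according to the $\forall/\exists$ pattern of $Q_1,\dots,Q_{k-3}$. The total running time is $O(n^{k-3}\cdot n^{2-\epsilon})=O(n^{k-1-\epsilon})$.

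For the base case $k=3$, I would dispose of the six quantifier structures via: (i) the informal ``syntactic complete problems'' lemma, which shows that any $\FOPZ(\forall\forall\exists)$ formula reduces to $\forall \tilde a_1\in \tilde A_1 \forall \tilde a_2\in\tilde A_2 \exists \tilde a_3\in\tilde A_3 : \tilde a_1+\tilde a_2\le \tilde a_3$ over vectors in some $\mathbb{Z}^d$, which is exactly PSV in $\mathbb{Z}^d$; (ii) Theorem~\ref{existential_complete} for $\FOPZ(\exists\exists\exists)$ and Lemma~\ref{fopaee} for $\FOPZ(\forall\exists\exists)$, both of which reduce their class to $3$-SUM; and (iii) negation of the quantifier-free body, which swaps all quantifiers and thus maps $\FOPZ(\exists\exists\forall)$, $\FOPZ(\exists\forall\forall)$, $\FOPZ(\forall\forall\forall)$ to $\FOPZ(\forall\forall\exists)$, $\FOPZ(\forall\exists\exists)$, $\FOPZ(\exists\exists\exists)$, respectively, each of which is handled above.

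It remains to close the chain by showing that PSV subquadratic implies $3$-SUM subquadratic. For this I would observe that Additive Sumset Approximation $A+B\subseteq C+\{0,\dots,t\}$ lies in $\FOPZ(\forall\forall\exists)$ with inequality dimension $2$, and reduces to PSV in $\mathbb{Z}^2$ via the dimension-doubling map $a\mapsto(a,-a)$, $b\mapsto(b,-b)$, $c\mapsto(c,-c+t)$, under which $(a,-a)+(b,-b)\le(c,-c+t)$ is equivalent to $a+b\le c \wedge a+b\ge c-t$. Combined with Theorem~\ref{sumsetapproxTHM}, which gives a reduction from $3$-SUM to Additive Sumset Approximation, this delivers $3$-SUM subquadratic. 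The main technical care point is that after brute-forcing outer quantifiers, the inner $3$-quantifier formula carries $k-3$ additional free variables, so one must verify that Theorem~\ref{existential_complete}, Lemma~\ref{fopaee}, and the syntactic-complete-problems lemma all accept formulas with free variables without blowing up the input size. Since free variables merely enrich the quantifier-free body $\psi$ without altering the quantifier structure or the sets $A_i$, this is handled uniformly by the existing results.
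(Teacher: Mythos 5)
Your proposal is correct and follows essentially the same route as the paper: brute-force over the outer $k-3$ quantifiers, dispose of $\forall\forall\exists$ via the syntactic normal form (which is PSV), handle $\exists\exists\exists$ and $\forall\exists\exists$ via Theorem~\ref{existential_complete} and Lemma~\ref{fopaee} respectively after closing the chain PSV $\to$ Additive Sumset Approximation $\to$ $3$-SUM (using Theorem~\ref{sumsetapproxTHM}), and dispatch the remaining three patterns by negation. One small slip: with your map $a\mapsto(a,-a)$, $b\mapsto(b,-b)$, $c\mapsto(c,-c+t)$, the PSV condition $(a,-a)+(b,-b)\le(c,-c+t)$ expands to $a+b\le c$ and $-a-b\le -c+t$, i.e., $c-t\le a+b\le c$, whereas Additive Sumset Approximation asks for $c\le a+b\le c+t$; you should instead use, e.g., $a\mapsto(-a,a)$, $b\mapsto(-b,b)$, $c\mapsto(-c,c+t)$ (or equivalently pre-shift $C$ by $t$). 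This is a cosmetic fix and does not affect the argument.
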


Similarly, for quantifier structures ending in $\exists \forall \exists$ and $\forall \exists \forall$, we obtain the following completeness result.

\begin{restatable}[]{lemma}{hunthard}
	There is a function $\epsilon(d)>0$ such that 
	the Hausdorff Distance under $n$ Translations problem can be solved in time $O(n^{2-\epsilon(d)})$
	if and only if all problems $P$ in the classes
	\begin{itemize}
    \item $\mathsf{FOP}_{\mathbb{Z}}(Q_1\dots Q_{k-3}\exists \forall \exists), \mathsf{FOP}_{\mathbb{Z}}(Q_1\dots Q_{k-3}\forall \exists \forall),$
	\end{itemize}
	where $Q_1, \dots Q_{k-3} \in \{ \exists, \forall \}$ and $k\geq 3$,
	can be solved in time $O(n^{k-1-\epsilon_P})$ for an $\epsilon_P>0$.
	\label{hunthard}
	\end{restatable}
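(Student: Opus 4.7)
We establish both implications of the stated equivalence.

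\emph{Forward direction.} Hausdorff Distance under $n$ Translations already lives in $\FOPZ(\exists\forall\exists)$: writing the $L_\infty$ condition $\|b-(c+\tau)\|_\infty \le \gamma$ as the conjunction $\bigwedge_{i=1}^d (-\gamma \le b[i]-c[i]-\tau[i] \le \gamma)$, HD becomes an $\exists\forall\exists$-quantified conjunction of $2d$ linear inequalities. Hence any uniform $O(n^{k-1-\epsilon_P})$-time algorithm for all $\FOPZ(\exists\forall\exists)$ problems (the case $k=3$) in particular solves HD in $O(n^{2-\epsilon(d)})$ time in each constant dimension~$d$.

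\emph{Backward direction: setup.} Assume HD is solvable in $O(n^{2-\epsilon(d')})$ in every constant dimension $d'$. Fix $P \in \FOPZ(Q_1\cdots Q_{k-3}\exists\forall\exists)$; the suffix $\forall\exists\forall$ is handled by negation, which moves through the matrix by flipping atomic inequalities and yields an equivalent instance in $\FOPZ(\bar Q_1\cdots \bar Q_{k-3}\exists\forall\exists)$. We exhaustively iterate over the $O(n^{k-3})$ instantiations of the outer quantifiers: for each tuple $(\hat x_1,\dots,\hat x_{k-3})$, we obtain an inner three-quantifier subproblem $\phi_{\hat x}\in\FOPZ(\exists\forall\exists)$, and we aggregate the truth values according to the quantifier semantics of $Q_1,\dots,Q_{k-3}$ to decide $P$.

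\emph{Reducing a three-quantifier instance to HD.} Using the Syntactic Complete Problems Lemma, each $\phi_{\hat x}$ reduces in near-linear time to the canonical form
\[ \exists \tilde a_1\in\tilde A_1\;\forall \tilde a_2\in\tilde A_2\;\exists \tilde a_3\in\tilde A_3:\; \tilde a_1+\tilde a_2\le\tilde a_3, \]
with $|\tilde A_i|=O(n)$ and values in $[-U,U]$ for some $U\le n^{O(1)}$. Encode this canonical instance as a $1$-dimensional HD instance via $A':=\{-\tilde a_1:\tilde a_1\in\tilde A_1\}$, $B':=\tilde A_2$, $C':=\{\tilde a_3-3U:\tilde a_3\in\tilde A_3\}$, and $\gamma:=3U$. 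For any choice of $\tilde a_1,\tilde a_2,\tilde a_3$ we compute
\[ |\tilde a_2-(\tilde a_3-3U)-(-\tilde a_1)| = |\tilde a_1+\tilde a_2-\tilde a_3+3U| \le 3U \;\Longleftrightarrow\; -6U\le \tilde a_1+\tilde a_2-\tilde a_3 \le 0, \]
and the lower bound is automatic since $\tilde a_1+\tilde a_2-\tilde a_3\in[-3U,3U]$. Because the quantifier prefix of HD matches the canonical formula verbatim under this mapping, $\delta_{\overrightarrow H}^{T(A')}(B',C')\le\gamma$ iff the canonical formula holds, so $\phi_{\hat x}$ is decided in time $O(n^{2-\epsilon(1)})$. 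Summed over the outer brute force, the total runtime is $O(n^{k-3}\cdot n^{2-\epsilon(1)}\cdot\mathrm{polylog}\,n) = O(n^{k-1-\epsilon_P})$ for a suitable $\epsilon_P>0$.

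\emph{Main obstacle.} The crucial technical ingredient is the Syntactic Complete Problems Lemma powering the reduction to the single inequality $\tilde a_1+\tilde a_2\le\tilde a_3$: it must preserve the $\exists\forall\exists$ prefix exactly while introducing only polylogarithmic overhead. Mirroring the ingredients in the proof of Theorem~\ref{existential_complete}, this is obtained by bringing the matrix $\psi$ into conjunctive normal form, eliminating disjunctions under the inner $\exists$ by branching across clauses, and applying the bit-level trick of~\cite{DBLP:journals/siamcomp/WilliamsW13} to each linear inequality so that the outer $\exists$ and $\forall$ quantifiers remain intact. Once this normalization is in place, the geometric encoding into HD is short and uses only a single dimension, cleanly transferring subquadratic HD algorithms into subquadratic decidability of the $\exists\forall\exists$ (and, by negation, $\forall\exists\forall$) suffix fragments of $\FOPZ$.
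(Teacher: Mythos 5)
Your overall strategy is correct and matches the paper's: brute-force over the outer $k-3$ quantifiers, invoke the syntactic normal form lemma (Lemma~\ref{normalform}) to reduce the inner $\exists\forall\exists$ fragment to a single vector inequality, map that to an HD instance, and handle the $\forall\exists\forall$ suffix by complementation. The verification that $\delta_{\overrightarrow H}^{T(A')}(B',C')\le\gamma$ matches the canonical formula under your substitutions $\tau=-\tilde a_1$, $b=\tilde a_2$, $c=\tilde a_3-3U\vec 1$ is also correct, coordinate by coordinate.

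The genuine gap is the claim that the resulting HD instance is \emph{one-dimensional}. It is not. Lemma~\ref{normalform} reduces a $\FOPZ(Q_1Q_2\exists)$ formula to the shape $Q_1 a_1'\in A_1'\; Q_2 a_2'\in A_2'\; \exists a_3'\in A_3':\; a_1'+a_2'\le a_3'$, but here $a_1',a_2',a_3'$ are \emph{vectors} in $\mathbb{Z}^{d'}$ for some $d'$ that depends on the original formula (roughly, one coordinate per inequality after normalization of the matrix~$\psi$); the relation $\le$ is componentwise domination, as is standard throughout the paper (cf.\ the Pareto Sum setting). A one-dimensional single inequality $a_1+a_2\le a_3$ over integers is decidable in linear time, so no completeness claim could possibly rest on it. Consequently your HD instance lives in dimension $d'$, your absolute values $|\cdot|$ must be replaced by $\|\cdot\|_\infty$, and the runtime you should quote is $O(n^{2-\epsilon(d')})$, not $O(n^{2-\epsilon(1)})$. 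This is precisely why the lemma statement requires $\epsilon(\cdot)$ to be a function of $d$: as you reduce more expressive formulas, the dimension of the HD instance grows, and you need a subquadratic HD algorithm for \emph{every constant} dimension.

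A secondary inaccuracy: in the ``Main obstacle'' paragraph you attribute the normal form to the Williams--Williams bit-level trick. That trick converts inequalities into \emph{equalities} (via an existential guess of a bit position), which is the opposite direction of what you need here — the canonical form you invoke ends in an inequality, not an equality. The normal form lemma instead places each (normalized) inequality of $\psi$ into its own coordinate, much as the paper does explicitly with the sets $A',B',C'$ in the proof of Theorem~\ref{ineq3}. This does not invalidate your argument, since you use Lemma~\ref{normalform} as a black box, but it is the mechanism responsible for the dimensional blowup you later overlook.
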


The combination of Lemma \ref{verif-complete-three} and Lemma \ref{hunthard}, thus suffice to 
prove Theorem \ref{completenesswholeFOP3}.
\paragraph*{The $3$-SUM completeness of formulas with inequality dimension at most $3$}

As a first idea, one could try to solve problems of different quantifier structures
by just counting witnesses. Consider in the following the example $\FOPZ(\forall \forall \exists)$. 

Assume we are promised that the formula $\forall a\in A \forall b\in B \exists c\in C \psi(a,b,c)$ 
satisfies a kind of \emph{disjointness} property, specifically that for every $(a,b) \in A \times B$ there exists at most one $c \in C$ such that $\psi(a,b,c)$. Then satisfying the formula boils down to checking whether the number of witnesses $(a,b,c)$ satisfiying $\psi(a,b,c)$ equals to $|A|\cdot |B|$.

To create this \emph{disjointness} effect, we use the following geometric approach: 
We show that one can re-interpret the formula as $\forall a\in A \forall b\in B: a+b\in \bigcup_{c'\in C'} V(c')$, where $A,B,C' \subseteq \mathbb{Z}^3$, $C'$ is a set of size $O(n)$ and $V(c')$ is an orthant associated to $c'$. Using an adapted variant of~\cite{DBLP:journals/dcg/ChewDEK99}, we decompose this union of orthants in $\mathbb{R}^3$ (which we may equivalently view as sufficiently large congruent cubes) into a set $\mathcal{R}$ of $O(n)$ \emph{disjoint} boxes. Thus, it remains to notice that the resulting problem  -- i.e., for all $a\in A, b\in B$ is there a box $R\in \mathcal{R}$ such that $a+b$ is contained in $R$ -- is a $\FOPZ(\forall \forall \exists)$ formula with the desired disjointness property, which can be handled as argued above.  
For the class $\mathsf{FOP}_{\mathbb{Z}}(\exists \forall \exists)$, we perform a slightly more involved argument.
The classes $\mathsf{FOP}_{\mathbb{Z}}(\exists \exists \exists)$ and $ \mathsf{FOP}_{\mathbb{Z}}(\forall \exists \exists)$
reduce to $3$-SUM regardless of the inequality dimension due to Theorem \ref{existential_complete} and Lemma \ref{fopaee}.

\section{$k$-SUM is complete for existential $\FOPZ$ formulas}\label{sec:existentialpa}
We begin with a simple completeness theorem that $k$-SUM is complete for the class of problems $\mathsf{FOP}_{\mathbb{Z}}(\exists^k)$. Since $k$-SUM is indeed a $\FOPZ(\exists^k)$-formula, it remains to show a fine-grained reduction from any $\FOPZ(\exists^k)$ formula to $k$-SUM.
The proofs in this section are deferred to the full version.
As a first step towards this Theorem, we consider how to reduce a conjunction of $m$ linear inequalities
to a vector $k$-SUM instance.
\begin{lemma}
  Consider vectors $a_1 \in \{-U, \dots, U\}^{d_1}, \dots, a_k \in \{-U ,\dots,U\}^{d_k},$ integers $S_1, \dots, S_m \in \{-U, \dots, U\},$
  for each $i \in \{1,\dots,m\}, j \in \{1,\dots,k\}$, vectors $c_{i,j} \in \mathbb{Z}^{d_j}$, and a formula $$\psi:=\bigwedge_{i=1}^{m}\left( \sum_{j=1}^{k}c_{i,j}^T a_j \geq S_i\right). $$
  There exist  $O(1)$ time computable functions $f_1^{\ell,\psi} ,\dots ,f_k^{\ell,\psi} ,g^{\ell,\psi,W}$ such that the following statements are equivalent
  \begin{enumerate}
    \item The formula $\bigwedge_{i=1}^{m} \left(\sum_{j=1}^{k}c_{i,j}^T a_j \geq S_i \right)$ holds.
    \item There are $\ell \in \{1,\dots, \lceil \log_2(M)\rceil\}^m, W \in \{1,\dots ,k\}^m$ such that $f^{\ell,\psi}_1 (a_1)+\dots + f^{\ell,\psi}_k (a_k)=g^{\ell,\psi,W} (S_1,\dots,S_m).$
  \end{enumerate}
  Moreover, if the second item holds, there is a unique choice of such $\ell$ and $W$. 
      \label{Ineq_to_eq}
\end{lemma}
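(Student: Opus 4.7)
The plan is to adapt the bit-level trick of Vassilevska Williams and Williams~\cite{DBLP:journals/siamcomp/WilliamsW13} to convert each of the $m$ inequalities in $\psi$ into a single linear equality, parameterized by a bit-length guess $\ell_i$ and an auxiliary ``slack-absorbing'' index $W_i$. The main conceptual steps are (i) a single-inequality reduction, (ii) a slack-absorption construction, and (iii) a positional packing that combines the $m$ coordinate-wise equalities into one scalar equality.

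First, we would focus on a single inequality $\sum_j c_{i,j}^T a_j \ge S_i$ and the associated slack $D_i := \sum_j c_{i,j}^T a_j - S_i$. Since all inputs lie in $\{-U,\dots,U\}$ with $U \le n^c$, we have $|D_i| \le M$ for some $M = \mathrm{poly}(n)$. By the standard shift (adding $1$ to both sides), we may assume we are testing the strict inequality $D_i \ge 1$; any integer in $\{1,\dots,M\}$ has a unique bit length $\ell_i \in \{1,\dots,\lceil \log_2 M\rceil\}$ with $D_i \in [2^{\ell_i-1}, 2^{\ell_i})$, so guessing $\ell_i$ uniquely identifies the interval containing the slack. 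This yields the canonical decomposition $D_i = 2^{\ell_i-1} + r_i$ with $r_i \in [0, 2^{\ell_i-1})$.

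Next, we would turn the interval condition $D_i \in [2^{\ell_i-1}, 2^{\ell_i})$ into a linear equality. For this, we augment $f_j^{\ell,\psi}(a_j)$ with a ``slack coordinate'' per inequality $i$: only the variable $a_{W_i}$ contributes (the $\ell_i-1$ low bits of $c_{i,W_i}^T a_{W_i}$) to that coordinate, while the remaining $a_j$ contribute $0$. The target $g^{\ell,\psi,W}$ is constructed so that matching in the ``main'' coordinate of inequality $i$ forces $\sum_j c_{i,j}^T a_j = S_i + 2^{\ell_i-1} + r_i$, while matching in the ``slack'' coordinate forces $r_i$ to coincide with the designated low bits of $a_{W_i}$. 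Uniqueness of $W_i$ is enforced by a canonical tie-breaking rule (e.g.\ smallest index matching the required bit pattern). Finally, the $2m$ vector coordinates are packed into a single integer via a positional encoding in base $B > 2M$, giving the required scalar $f_j^{\ell,\psi}$ and $g^{\ell,\psi,W}$; these are $O(1)$-time computable since computing a low-bit extraction and a base-$B$ digit placement each take constant time. For correctness, in the forward direction, given a satisfying assignment, $\ell_i$ is uniquely determined by the bit length of $D_i$ and $W_i$ by the tie-breaking rule; in the backward direction, decoding the positional encoding forces $D_i \in [2^{\ell_i-1}, 2^{\ell_i})$ and hence $\sum_j c_{i,j}^T a_j \ge S_i$.

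The main obstacle, I expect, lies in the slack-absorption construction of step~(ii): the ``true'' slack $r_i = D_i - 2^{\ell_i-1}$ is a function of \emph{all} $a_j$'s, yet we insist that it be matched against a value that depends only on $a_{W_i}$. The resolution uses the linearity of $D_i$ together with the packed encoding: the ``main'' coordinate already forces an exact equality $\sum_j c_{i,j}^T a_j = S_i + 2^{\ell_i-1} + r_i$ (i.e., $r_i$ is globally determined by the $a_j$'s), so the ``slack'' coordinate only needs to verify that this globally determined $r_i$ happens to equal a function of a single $a_{W_i}$; the existential quantification over $W_i \in \{1,\dots,k\}$ over the $k$ choices guarantees that some choice of $W_i$ matches whenever $\psi$ holds. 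The polylogarithmic blow-up comes from the $O(\log^m M)$ guesses of $\ell$ and the $O(k^m)$ guesses of $W$, both of which are $\mathrm{polylog}(n)$ for constant $k$ and $m$.
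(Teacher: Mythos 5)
Your high-level framing (bit-level trick of Vassilevska Williams--Williams, guessing a logarithmic parameter per inequality, packing the $m$ coordinates together) matches the paper's approach, but the central mechanism you propose for the index $W_i$ is incorrect, and the resulting reduction is not sound.

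You interpret $W_i\in\{1,\dots,k\}$ as a ``slack-absorbing'' variable index: writing $D_i = 2^{\ell_i-1}+r_i$ with $r_i\in[0,2^{\ell_i-1})$, you propose to certify $r_i<2^{\ell_i-1}$ by insisting that $r_i$ equals the low $\ell_i-1$ bits of $c_{i,W_i}^T a_{W_i}$ for some $W_i$. This fails the forward direction (item~1 $\Rightarrow$ item~2): the global slack $r_i$ depends linearly on all of $a_1,\dots,a_k$ and has no structural reason to coincide with the low bits of any single term. Concretely, take $k=3$, $m=1$, $c_{1,j}=1$, $a_1=a_2=a_3=5$, $S_1=0$. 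Then (after your unit shift) the slack is $16$, $\ell_1=5$, $r_1=0$, while the low $\ell_1-1=4$ bits of every $c_{1,j}^T a_j=5$ equal $5\neq 0$. So no choice of $W_1$ makes your equality true, even though the inequality $5+5+5\ge 0$ plainly holds. There is also a structural problem: the statement requires $f_j^{\ell,\psi}$ to depend only on $\ell$ and $\psi$, so $f_j$ cannot ``know'' whether $j=W_i$; yet your slack coordinate needs exactly one $f_j$ to contribute and the others to vanish, which is impossible once you have formed the sum $\sum_j f_j^{\ell,\psi}(a_j)$. Your ``tie-breaking rule'' for uniqueness of $W_i$ likewise has no mechanism to act on an already-summed quantity.

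The role of $W_i$ is instead the \emph{carry} arising when one replaces the floor of a sum by the sum of floors. In the VWW trick applied to $\sum_j c_{i,j}^T a_j \ge S_i$, one guesses the most significant bit position $\ell_i$ at which the two sides differ; the resulting condition involves $\bigl\lfloor (\sum_j c_{i,j}^T a_j)/2^{\ell_i}\bigr\rfloor$, but the reduction can only form $\sum_j \lfloor c_{i,j}^T a_j / 2^{\ell_i}\rfloor$ (each summand depending on a single $a_j$). After shifting all terms to be nonnegative, these two quantities differ by a carry in $\{0,\dots,k-1\}$; this carry is exactly $W_i-1$ and is absorbed into $g^{\ell,\psi,W}$. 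This is the missing idea in your proposal: $W_i$ corrects an arithmetic discrepancy created by decomposing a floor across $k$ summands, rather than selecting a summand to absorb slack. Uniqueness of $\ell_i$ then comes from the usual parity-enforced characterization of the most significant differing bit, and uniqueness of $W_i$ from the carry being determined by $\ell_i$ and the $a_j$'s.
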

Essentially the above lemma enables a reduction from a conjunction of inequality checks to a conjunction of equality checks. We can now continue with our completeness theorem.
\threesumcompl*
Ater applying Lemma~\ref{Ineq_to_eq}, it remains to reduce a conjunction of equality checks to $k$-SUM. To do so, we interpret the conjunction of equalities as a Vector $k$-SUM problem, which can be reduced to $k$-SUM in a straightforward way~\cite{DBLP:journals/corr/AbboudLW13}.

\section{On counting witnesses in $\mathsf{FOP}_{\mathbb{Z}}$}\label{sec:countwitn}
  In this section, we show reductions from counting witnesses of $\FOPZ(\exists^k)$ formulas to $\#k$-SUM, specifically, we prove Theorem~\ref{thm:counting-witnesses}. To do so, we adapt the proof of Theorem~\ref{existential_complete} given in Section~\ref{sec:existentialpa} to a counting version. As discussed in Section~\ref{sec:TechnicalOverview}, this requires us to work with a multiset version of \#$k$-SUM. Handling multisets is thus the main challenge addressed in this section.
   Formally, we say that a multiset is a set $A$ together with a function $f:A\to \mathbb{N}$. For $a \in A$, we abbreviate $n_a:= f(a)$ as the multiplicity
   of $a$. To measure multiset sizes, we still think of each $a$ to have $n_a$ copies in the input, i.e. the size of $A$ is $\sum_{a \in A}n_a$.
   Almost all proofs in this section are deferred to the full version of the paper.
   \begin{definition}[$(U,d)$-vector Multiset $\#k$-SUM]
   Let $ X:=\{-U,\dots,U\}^d$. Given $k$ multisets $A_1,\dots,A_k \subseteq X$ and $t \in X$, we ask for the total number of $k$-SUM witnesses, that is 
   \begin{align*}
   \sum_{\substack{a_1+\dots +a_k=t, \\ a_1 \in A_1, \dots, a_k \in A_k}}\prod_{i=1}^k n_{a_i}.
   \end{align*}
   \end{definition}
   Furthermore, define Multiset $\#k$-SUM as $(U,1)$-vector Multiset $\# k$-SUM and $M$-multiplicity $\# k$-SUM as Multiset $\#k$-SUM with the additional restriction that the multiplicity of each element is limited, that is for all $a \in A_1 \cup \dots \cup A_k: n_a \leq M$ holds. 
   Lastly, $\# k$-SUM is defined as $1$-Multiplicity $\# k$-SUM and $(U,d)$-vector $\#k$-SUM is $(U,d)$-vector Multiset $\#k$-SUM where for all $a \in A_1 \cup \dots \cup A_k: n_a = 1$ holds.

   For the case of $\FOPZ^3$ we will also introduce the $\#$All-ints version of the above problems, which asks to determine, for each $a_1 \in A_1$, the number of witnesses involving $a_1$.

The (deferred) proof of the following lemma is analogous to the proof of Abboud et al.~\cite{DBLP:journals/corr/AbboudLW13} to reduce Vector $k$-SUM to $k$-SUM.
\begin{lemma}[$(U,d)$-vector Multiset $\#k$-SUM $\leq_{\lceil k/2 \rceil }$ Multiset $\#k$-SUM]
If Multiset $\#k$-SUM can be solved in time~$T(n)$ then $(U,d)$-vector Multiset $\#k$-SUM can be solved in time $O(nd \log(U)+T(n)).$
\label{vector_to_one_red}
\end{lemma}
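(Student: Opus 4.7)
The plan follows the standard coordinate-merging trick of Abboud, Lewi, and Williams for reducing Vector $k$-SUM to scalar $k$-SUM, lifted here to preserve multiplicity-weighted witness counts. The idea is to encode each $d$-dimensional vector as a single integer in a base $B$ chosen large enough that summing $k$ encoded values never produces carries across coordinate blocks. This turns the encoding into an additively faithful and injective map on the relevant domain, which is exactly what is needed to preserve both the existence and the weighted count of every witness.

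Concretely, I would first eliminate negative entries by the shift $a \mapsto a' := a + U \mathbf{1}$ applied to every $a \in A_1 \cup \dots \cup A_k$, and correspondingly replace the target by $t' := t + kU \mathbf{1}$. This is equivalence-preserving and places all shifted entries in $\{0,\dots,2U\}^d$. Setting $B := 2kU + 1$ and
\[
    \phi(x) \;:=\; \sum_{j=1}^d x[j] \cdot B^{j-1}, \qquad x \in \{0,\dots,2U\}^d,
\]
I would form the Multiset $\#k$-SUM instance with multisets $\tilde A_i := \{\phi(a') : a \in A_i\}$, where each element $\phi(a')$ inherits the multiplicity $n_a$, and with target $\tilde t := \phi(t')$. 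Because every coordinate in a $k$-fold sum of shifted vectors is at most $2kU < B$, no inter-block carry is produced, so $\phi(a_1') + \dots + \phi(a_k') = \phi(a_1' + \dots + a_k')$; combined with the injectivity of $\phi$ on $\{0,\dots,2U\}^d$, this yields a bijection between witnesses of the original vector instance and witnesses of the reduced scalar instance under which each witness contributes the identical product $\prod_i n_{a_i}$ of multiplicities. Consequently the oracle's answer is exactly the desired weighted count.

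For the running time, each encoded integer has magnitude at most $B^d = O((kU)^d)$ and hence bit length $O(d \log U)$; computing all $O(n)$ images via Horner's rule costs $O(nd \log U)$, after which a single call to the Multiset $\#k$-SUM oracle costs $T(n)$, giving the claimed $O(nd \log U + T(n))$ bound. The one delicate point, and the place where the counting version departs from the plain decision reduction, is ensuring that multiplicities are preserved exactly: this reduces to the injectivity of $\phi$ together with the no-carry property of the base-$B$ encoding, both of which are guaranteed by applying the shift uniformly and by the choice $B = 2kU + 1$. Everything else is mechanical.
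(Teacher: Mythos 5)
Your proof is correct and follows exactly the approach the paper indicates (the base-$B$ encoding with a uniform shift to avoid carries, lifted from Abboud, Lewi, and Williams' Vector $k$-SUM reduction to the multiset-counting setting, using injectivity of the encoding to preserve multiplicities). The argument matches the paper's stated intent, so nothing to add.
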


Next, we give a simple approach to solve Multiset $\# k$-SUM when all multiplicities are comparably small.

\begin{lemma}[$M$-multiplicity $\# k$-SUM$\leq_{\lceil k/2 \rceil}$ $\# k$-SUM]
  If $\#k$-SUM can be solved in time~$T(n)$, then $M$-multiplicity $\#k$-SUM can be solved in time $\Tilde{O}(T(nM^{k-1}))$.
\label{reduction_t}
\end{lemma}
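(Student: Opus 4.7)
The plan is to reduce every $M$-multiplicity $\#k$-SUM instance to $O(kM)$ queries of plain $\#k$-SUM via a base-$L$ encoding of multiplicities, with $L$ chosen just large enough to separate the value contribution from the multiplicity tag. Concretely, set $L := kM+1$ and, for each $i \in \{1,\dots,k\}$, form the set $\bar{A}_i := \{aL + j : a \in A_i,\ 1 \le j \le n_a\}$. Since $L > M$, distinct pairs $(a,j)$ produce distinct integers, so each $\bar{A}_i$ is a duplicate-free set of size $\sum_{a \in A_i} n_a \le n$, and hence a valid input for plain $\#k$-SUM. Then, for every $\tau \in \{k, k+1, \dots, kM\}$, I would invoke the $\#k$-SUM oracle on $(\bar{A}_1,\dots,\bar{A}_k)$ with target $tL + \tau$ and return the sum of the returned counts.

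For correctness, observe that the tag-sum $j_1 + \dots + j_k$ ranges in $[k, kM] \subset [0,L)$, so the value digits and the tag digits cannot interfere with one another. Consequently the equation $(a_1 L + j_1) + \dots + (a_k L + j_k) = tL + \tau$ holds if and only if $\sum_i a_i = t$ and $\sum_i j_i = \tau$. Summing over all admissible $\tau$, each original witness $(a_1, \dots, a_k)$ with $\sum_i a_i = t$ contributes exactly $|\{(j_1, \dots, j_k) : j_i \in [n_{a_i}]\}| = \prod_i n_{a_i}$ tagged witnesses, which is exactly the desired multiset count.

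The running time is $O(kM)$ oracle queries on instances of size at most $n$, plus $\tilde O(nM)$ preprocessing, for a total of $O(kM \cdot T(n)) + \tilde O(nM)$. Since $\#k$-SUM must at least read its input we may assume $T(n) = \Omega(n)$ and that $T$ is polynomial of degree at least one (otherwise the statement is essentially trivial), whence a quick calculation gives $M \cdot T(n) \le T(nM^{k-1})$ for $k \ge 2$, so the total time is $\tilde O(T(nM^{k-1}))$ as claimed. The only point of care is precisely this final comparison: the ``extra factor of $M$'' from enumerating the $O(M)$ targets is absorbed into the $M^{k-1}$ size blow-up already present in the stated bound, and beyond verifying this, the argument is a routine base-$L$ encoding.
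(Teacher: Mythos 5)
Your proof is correct. Choosing $L = kM+1$ guarantees both that $\bar{A}_i$ is duplicate-free (since tags lie in $[1,M]$ and $L > M$) and that the tag-sum $\sum_i j_i \in [k,kM]$ never carries into the value digit, so encoded witnesses for target $tL+\tau$ are in bijection with pairs of an original witness of value $t$ and a tag-tuple of sum $\tau$; summing over $\tau \in \{k,\dots,kM\}$ therefore recovers exactly $\sum_{\sum a_i = t}\prod_i n_{a_i}$. Each $\bar A_i$ has $\sum_{a\in A_i} n_a \le n$ elements, so the total time is $O(kM\cdot T(n) + n)$, which is in fact a somewhat sharper bound than the stated $\tilde O(T(nM^{k-1}))$; it is subsumed by the latter under the usual assumption that $T$ is monotone and at least linear polynomial, since then $M\,T(n) \le T(nM^{k-1})$ for all $k \ge 2$. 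The paper defers its proof to the full version, so I cannot compare mechanisms directly, but your base-encoding-plus-target-enumeration argument is self-contained and sound; if the paper instead places each coordinate's tag in its own digit slot (forcing enumeration over $\Theta(M^{k-1})$ targets or a corresponding instance blow-up), that would account for the looser stated bound, and your single-slot encoding with $O(kM)$ targets is a clean improvement.
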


For later purposes, we will need the following version of the above lemma.
\begin{observation}
If $\#$All-ints $3$-SUM can be solved in time $T(n)$,
then we can solve $\#$All-ints $M$-multiplicity $3$-SUM in time $\Tilde{O}(T(nM^{2})).$
\label{All-ints-count}
\end{observation}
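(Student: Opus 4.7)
The plan is to adapt the construction underlying Lemma~\ref{reduction_t} to the All-ints setting, taking care that the per-$a_1$ count structure is preserved through the tagged expansion.

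First, I would construct a $4$-dimensional vector $\#$All-ints $3$-SUM instance in which each original multiset element is expanded into several tagged copies, chosen so that the standard (non-multiplicity) count at each tagged copy in $A_1'$ aggregates to the desired multiplicity-weighted count for each $a \in A_1$. Concretely, writing $[k]$ for $\{1,\dots,k\}$, let
\begin{align*}
A_1' &= \{(a, p, 0, r_1) : a \in A_1,\, p \in [n_a],\, r_1 \in [M]\},\\
A_2' &= \{(b, 0, q, 0) : b \in A_2,\, q \in [n_b]\},\\
A_3' &= \{(c, -p', -q', -r) : c \in A_3,\, p', q' \in [M],\, r \in [n_c]\},
\end{align*}
with target $(t, 0, 0, 0)$. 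The tag coordinates cancel if and only if $p = p'$, $q = q'$, and $r_1 = r$, so the count at a fixed copy $(a, p, 0, r_1)$ is exactly $\sum_{b,c:\, a+b+c=t} n_b \cdot [r_1 \leq n_c]$; summing over its $n_a \cdot M$ copies in $A_1'$ yields $n_a \cdot \sum_{b,c:\, a+b+c=t} n_b n_c$, which is the desired count of witnesses involving $a$. Note that $|A_3'| = O(nM^2)$ dominates the instance size, while $|A_1'|, |A_2'| \leq nM$.

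Second, I would invoke the All-ints analogue of Lemma~\ref{vector_to_one_red} (proven by the same argument as in~\cite{DBLP:journals/corr/AbboudLW13}, via an injective scalar encoding of each vector that preserves per-$A_1$-element identity) to flatten this vector instance into a scalar $\#$All-ints $3$-SUM instance of size $N = O(nM^2)$. The assumed oracle then runs in time $T(N) = T(O(nM^2))$, and the final aggregation, summing oracle answers across the $O(nM)$ tagged copies of each $a \in A_1$, adds only a subsumed $\tOh(nM^2)$ overhead. Together these yield the claimed $\tOh(T(nM^2))$ bound.

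The main obstacle is keeping the blow-up tight at $M^{k-1} = M^2$ while simultaneously encoding three independent multiplicity factors $n_a$, $n_b$, $n_c$: a naive expansion that directly duplicates each $A_i$ according to its multiplicity would produce a $\Theta(nM^3)$ instance. The trick, visible in the construction above, is to realize the $n_c$-factor as a range-restriction on $A_3$ that is probed by a free $A_1$-side tag $r_1 \in [M]$, so that no single set carries more than an $O(M^2)$ tag overhead per element while still summing to the correct multiplicity-weighted count.
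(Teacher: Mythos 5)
Your construction is correct: the tags are chosen so that the number of surviving $A_3'$-elements per $(b,c)$-pair is $n_b$ (from the $q$-$q'$ match) and the number of $r_1$-values surviving the $r_1 = r$ match is $n_c$, so aggregating over the $n_a \cdot M$ copies of $a$ in $A_1'$ does recover $n_a \sum_{b,c:\,a+b+c=t} n_b n_c$, and the set sizes are as you claim with $|A_3'| = O(nM^2)$ dominating. This matches the paper's framework (vector-tagging plus the All-ints variant of the Abboud--Lewenstein--Williams flattening, Lemma~\ref{vector_to_one_red}), so it is essentially the same approach as the paper's deferred proof.
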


We can finally prove the main result of this section. 

\begin{lemma}
    For odd $k\geq 3$, if there exists an algorithm for the $\#k$-SUM problem running in time $O(n^{\lceil k/2 \rceil-\epsilon})$ for an $\epsilon>0$, then there exists an algorithm for the Multiset $\#$k-SUM problem running in time $O(n^{\lceil k/2 \rceil-\epsilon'})$ for an $\epsilon'>0$.
  \label{multisetToSet}  
  \end{lemma}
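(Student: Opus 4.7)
The plan is to reduce Multiset $\#k$-SUM to $\mathrm{polylog}(n)$ calls of ordinary $\#k$-SUM via a bit decomposition of the multiplicities. The obstacle compared to the existential case (Theorem~\ref{existential_complete}) is that a $\#k$-SUM algorithm only returns an unweighted count of tuples of distinct elements, whereas Multiset $\#k$-SUM weights each tuple by $\prod_i n_{a_i,i}$; the idea is to use the binary expansion of each $n_{a,i}$ to split this product into a polylogarithmic sum of $\{0,1\}$-valued products directly captured by $\#k$-SUM calls.

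Concretely, since $n_{a,i}\le n$, I would write
\[
 n_{a,i} \;=\; \sum_{\ell=0}^{L} b_{\ell,a,i}\,2^{\ell},\qquad L:=\lfloor\log_2 n\rfloor,\ b_{\ell,a,i}\in\{0,1\},
\]
and introduce the subsets $A_i^{\ell}:=\{a\in A_i : b_{\ell,a,i}=1\}$, each of distinct elements with $|A_i^{\ell}|\le|A_i|\le n$. Distributivity over the $k$ independent coordinates then gives
\[
 \sum_{\substack{a_i\in A_i\\ a_1+\cdots+a_k=t}} \prod_{i=1}^k n_{a_i,i}
 \;=\; \sum_{\vec\ell\in\{0,\dots,L\}^k} 2^{\ell_1+\cdots+\ell_k}\cdot N(\vec\ell),
\]
where $N(\vec\ell)$ is the number of witnesses of the ordinary $\#k$-SUM instance on $(A_1^{\ell_1},\dots,A_k^{\ell_k})$; hence each $N(\vec\ell)$ is computable by a single call to the assumed $\#k$-SUM algorithm in $T(n)=O(n^{\lceil k/2\rceil-\epsilon})$ time.

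Since there are $(L+1)^k=O(\log^k n)$ tuples $\vec\ell$, the total running time becomes $O(\log^k n)\cdot T(n) = \tOh(T(n))$, which absorbs into $O(n^{\lceil k/2\rceil-\epsilon'})$ for any $0<\epsilon'<\epsilon$, as required. The only bookkeeping left is that the $\#k$-SUM subroutine may formally require equal input-set sizes: I would pad each $A_i^{\ell}$ up to exactly $n$ using dummy values placed outside $[-kU,kU]$ so that no dummy can participate in a sum-$t$ witness, preserving $N(\vec\ell)$. I do not expect a genuine obstacle here; notably the argument never exploits oddness of $k$, which suggests that the hypothesis ``$k$ odd'' in the statement is imposed by the surrounding framework (in particular by Observation~\ref{All-ints-count} and the inclusion-exclusion step needed to lift this lemma to Theorem~\ref{thm:counting-witnesses}) rather than by this reduction itself.
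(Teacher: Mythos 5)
Your proof is correct, and it takes a genuinely different route from the paper's. The paper proves Lemma~\ref{multisetToSet} via a heavy-light split: witnesses where every multiplicity is at most $n^{\epsilon/c}$ are counted by invoking Lemma~\ref{reduction_t}, which blows the instance up to size $nM^{k-1}$ and feeds it to the $\#k$-SUM oracle; witnesses involving a heavy element (multiplicity above $n^{\epsilon/c}$, of which there are at most $n^{1-\epsilon/c}$) are handled by iterating over the heavy element and solving the residual $(k-1)$-SUM problem in time $\tOh(n^{(k-1)/2})$. The oddness of $k$ is genuinely used there: for even $k$ one has $n^{\lceil (k-1)/2\rceil} = n^{k/2} = n^{\lceil k/2\rceil}$, so the heavy branch alone already matches the target exponent with no slack. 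Your bit-decomposition of each multiplicity instead splits the weighted sum exactly into $(L+1)^k = O(\log^k n)$ unweighted $\#k$-SUM counts on plain sets of size at most $n$, which is correct and loses only polylogarithmic factors; it works uniformly for all $k\ge 3$. One small correction to your closing remark: the ``$k$ odd'' hypothesis is not merely inherited from the surrounding machinery (Observation~\ref{All-ints-count}, inclusion-exclusion) — it is used in the paper's own proof of \emph{this} lemma via the $(k-1)$-SUM subroutine. Your proof shows that restriction can be dropped from the lemma statement entirely.
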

    \begin{proof}
    We proceed with a heavy-light approach.
    Assume there exists an $O(n^{\lceil k/2 \rceil -\epsilon})$ algorithm for the $ \#k$-SUM problem.
    Set $c:=(k-1)(\lceil k/2 \rceil)$.
    Firstly, we count the number of solutions $(a_1,\dots,a_k) \in A_1 \times \dots \times A_k$, where $n_{a_1},\dots, n_{a_k}\leq n^{ \epsilon/c}$ using  Lemma \ref{reduction_t}. This takes time
    \begin{align*}
	    \Tilde{O}\left( (n \cdot (n^{ \epsilon/c})^{k-1})^{\lceil k/2 \rceil -\epsilon} \right)&=\Tilde{O} \left(\left(n^{1+\frac{\epsilon}{\lceil k/2 \rceil }} \right )^{ \lceil k/2 \rceil -\epsilon}\right) \\
                                                              &= \Tilde{O} \left(n^{ {\lceil k/2 \rceil -\epsilon} + \epsilon  -\frac{ \epsilon ^2}{\lceil k/2 \rceil } }  \right) \\
                                                              &= O \left(n^ {\lceil k/2 \rceil -\epsilon'} \right),
    \end{align*}
    where $\epsilon'>0$.
    It remains to calculate the number of witnesses $(a_1,\dots,a_k)$, where for at least one $i \in \{1,\dots,k\}$, we have high multiplicity, meaning $n_{a_i}>n^{\epsilon/c}$ holds.
    Consider the case that $a_1 \in A_1$ is a high-multiplicity number (the case where $a_i \in A_i$ with $i\neq 1$  is a high-multiplicity number is analogous).
    For each high-multiplicity number $a_1$ in $A_1$ we do the following.
    Solve the $(k-1)$-SUM instance with sets $A_2,\dots ,A_k$ and target $t-a_1$.
    There are at most $n^{1-(\epsilon/c)}$ many high-multiplicity numbers in $A_1$, and solving the $(k-1)$-SUM instance takes time $O(n^{(k-1)/2})$, since $k$ is odd.
    We get a total runtime of 
    \begin{align*}
      n^{1-\frac{\epsilon}{c}} \cdot \Tilde{O}(n^{(k-1)/2})&=\Tilde{O}(n^{1-(\epsilon/c) +(k-1)/2})\\
                    &=\Tilde{O}(n^{(k+1)/2 -(\epsilon/c)})\\
                    &=O(n^{\lceil k/2 \rceil -\epsilon''}),
    \end{align*}
    where $\epsilon''>0$, which concludes the proof.
    \end{proof}

    \countComplete*

By combining the subquadratic equivalence between $3$-SUM and $\#3$-SUM due to Chan et al.~\cite{DBLP:journals/corr/abs-2303-14572} and the above theorem, we obtain the following corollary.
\threecountcomplete*

The above proof can also be adapted for the special case $k=3$ to count for each $a_1 \in A_1$ the number of witnesses involving $a_1$, by plugging in the appropriate All-ints versions; see the full version of the paper for details.
Together with the equivalence between $\#$All-ints $3$-SUM and $3$-SUM of Chan et al. \cite{DBLP:journals/corr/abs-2303-14572}, we get
\begin{corollary}
For all problems $P$ in $\mathsf{FOP}_{\mathbb{Z}}(\exists^3)$, we are able to count for each $a_1 \in A_1$ the number of witnesses involving $a_1$ in randomized time $O(n^{2-\epsilon})$ for an $\epsilon>0$,
if $3$-SUM can be solved in randomized time $O(n^{2-\epsilon'})$ for an $\epsilon'>0$.
\label{count-all-ints}
\end{corollary}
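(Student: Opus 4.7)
The plan is to replay the proof of Theorem~\ref{thm:counting-witnesses} for the special case $k=3$, but with every intermediate counting problem replaced by its $\#$All-ints variant, and then invoke the subquadratic equivalence of $\#$All-ints $3$-SUM and $3$-SUM due to Chan et al.~\cite{DBLP:journals/corr/abs-2303-14572}. Concretely, I would fix an arbitrary $\FOPZ(\exists^3)$ formula $\phi = \exists a_1 \in A_1 \exists a_2 \in A_2 \exists a_3 \in A_3: \psi(a_1,a_2,a_3)$ and maintain, throughout the reduction chain, a counter $N_{a_1}$ for every $a_1\in A_1$ recording (partial) contributions to the number of witnesses involving $a_1$.

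First, rewrite $\psi$ as a Boolean combination of $m = O(1)$ linear inequalities and, by inclusion--exclusion over the $2^m$ sign patterns, reduce to the case where $\psi$ is a conjunction of inequalities; per-$a_1$ counts compose additively under inclusion--exclusion. Apply Lemma~\ref{Ineq_to_eq} to this conjunction, producing $O(\log^m n)$ sub-instances of Vector $3$-SUM (one per choice of $(\ell,W)$). The uniqueness clause of Lemma~\ref{Ineq_to_eq} guarantees that for each original witness $(a_1,a_2,a_3)$ there is exactly one sub-instance in which it appears as a Vector $3$-SUM solution, so summing the per-$a_1$ Vector $3$-SUM counts across sub-instances yields the desired per-$a_1$ count for $\phi$ without double-counting.

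Next, reduce each sub-instance of Vector Multiset $\#$All-ints $3$-SUM to scalar Multiset $\#$All-ints $3$-SUM via the natural All-ints analogue of Lemma~\ref{vector_to_one_red}; the reduction of~\cite{DBLP:journals/corr/AbboudLW13} is an element-wise hashing into a scalar universe and therefore preserves the identity of the $A_1$-element used, so per-$a_1$ counts carry over directly. It remains to handle the multisets; here I would adapt the heavy--light split from Lemma~\ref{multisetToSet}. In the light case (all multiplicities at most $n^{\epsilon/c}$ for $c:=2\lceil 3/2\rceil = 4$), invoke Observation~\ref{All-ints-count} to solve $\#$All-ints $M$-multiplicity $3$-SUM via a single call to $\#$All-ints $3$-SUM on a blown-up instance of size $\tilde O(n^{1+O(\epsilon)})$. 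In the heavy case, enumerate the at most $n^{1-\epsilon/c}$ heavy values: when the heavy variable lies in $A_1$, a $2$-SUM counting subroutine on $A_2,A_3$ directly updates $N_{a_1}$ for that heavy $a_1$ in time $\tilde O(n)$; when the heavy variable lies in $A_2$ or $A_3$, a single linear scan of $A_1$ combined with a hash table on the remaining set distributes the counts among the $a_1$'s in $\tilde O(n)$ time per heavy value.

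The main obstacle is ensuring injectivity when blowing up multiplicities into distinct copies: each original witness must be counted exactly once in the final $\#$All-ints $3$-SUM call, and the counter for the correct $a_1$ must be incremented. This is handled by the uniqueness guarantee of Lemma~\ref{Ineq_to_eq} combined with the per-copy labelling used in Observation~\ref{All-ints-count}, so that copies of the same $a_1$ are aggregated back into $N_{a_1}$ at the end. Summing the light-case bound $\tilde O(n^{(1+\epsilon/c)(2-\epsilon')})$ with the heavy-case bound $\tilde O(n^{1-\epsilon/c}\cdot n)$ and choosing $\epsilon$ sufficiently small relative to $\epsilon'$ gives an overall randomized runtime of $O(n^{2-\epsilon})$ for some $\epsilon > 0$, which is exactly the statement of Corollary~\ref{count-all-ints}.
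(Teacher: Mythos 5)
Your proposal follows the paper's own route: the paper obtains this corollary precisely by replaying the proof of Theorem~\ref{thm:counting-witnesses} for $k=3$ with all intermediate counting problems replaced by their $\#$All-ints variants (using Observation~\ref{All-ints-count} for the light case and near-linear $2$-SUM counting for the heavy case) and then invoking the Chan et al.\ equivalence between $\#$All-ints $3$-SUM and $3$-SUM. The sketch is correct in substance (the minor exponent slip $1+\epsilon/c$ versus $1+2\epsilon/c$ in the light-case bound does not affect the conclusion), so no further comparison is needed.
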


\section{Completeness Theorems for General Quantifier Structures}\label{sec:GeneralQuantifier}
As Theorem \ref{existential_complete} establishes $3$-SUM as the complete problem for the class $\mathsf{FOP}_{\mathbb{Z}}(\exists \exists \exists)$,
we would like to similarly explore complete problems for other quantifier structures. All proofs in this section are deferred to the full version.
Let us recall our main geometric problems.
    \begin{definition}[Verification of $d$-dimensional Pareto Sum]
        Given sets $A,B,C \subseteq \mathbb{Z}^d$. Does the set $C$ dominate $A+B$, that is does for all $a \in A ,b \in B$ exist a $c \in C$, with $c \geq a+b$ ?
      \end{definition}
It is easy to see that Verification of $d$-dimensional Pareto Sum is in $\mathsf{FOP}_{\mathbb{Z}}(\forall \forall \exists)$.

\begin{definition}[Hausdorff Distance under $n$ Translations]
  Given sets $A,B,C \subseteq \mathbb{Z}^d$ with at most $n$ elements and a $\gamma\in \mathbb{N}$, the Hausdorff distance under $n$ Translations problem asks whether the following holds:
   \[\delta_{\overrightarrow{H}}^{T(A)}(B,C) \coloneqq \min_{\tau \in A} \delta_{\overrightarrow{H}}(B,C+\{\tau\}) = \min_{\tau \in A} \max_{b\in B} \min_{c\in C} \|b-(c+\tau)\|_{\infty}\leq \gamma.\]
  \end{definition}

We show the following result firstly, which allows us to assume without loss of generality a certain normal form.
\begin{lemma}
A general $\FOPZ(Q_1Q_2\exists)$ formula, with input set $A_1 \subseteq \mathbb{Z}^{d_1}, A_2 \subseteq \mathbb{Z}^{d_2},A_3 \subseteq \mathbb{Z}^{d_3} $, where $|A_1|=|A_2|=|A_3|=n$,
can be reduced to the $\FOPZ(Q_1Q_2\exists)$ formula
$$Q_1 a_1'\in A_1' Q_2 a_2' \in A_2' \exists a_3' \in A_3':a_1'+a_2' \leq a_3'$$ in time $O(n)$, where
$|A_1'|=|A_2'|=n$ and $|A_3'|=O(n)$. 
\label{normalform}
\end{lemma}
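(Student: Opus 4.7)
The plan is to bring $\psi$ into disjunctive normal form of linear inequalities, absorb the disjunction into the innermost existential quantifier, and then pack the remaining conjunction of inequalities into a single componentwise vector inequality of the form $a_1'+a_2'\le a_3'$. This exploits the fact that, while a single (vector) inequality cannot directly express disjunction, the hypothesis that the innermost quantifier is $\exists$ lets us realize the disjunction as a disjoint union over copies of $A_3$.

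First I would normalize the quantifier-free matrix $\psi$ so that every atom has the shape $L(a_1,a_2,a_3)\le 0$: over the integers, strict inequalities $L<0$ become $L+1\le 0$, equalities split into two non-strict inequalities, and negations are handled the same way. Converting the resulting Boolean combination to DNF gives
\[ \psi \equiv \bigvee_{i=1}^{s}\bigwedge_{j=1}^{m_i}\bigl(L_{ij}(a_1,a_2,a_3)\le 0\bigr), \]
where $s$ and all $m_i$ are constants depending only on $\psi$, and the coefficients depend only on $\psi$ and the free variables $\hat{t}_1,\dots,\hat t_l$. Using that the innermost quantifier is $\exists$, this is equivalent to
\[ Q_1 a_1\in A_1\; Q_2 a_2\in A_2\; \exists (a_3,i)\in A_3\times [s]: \bigwedge_{j=1}^{m_i}\bigl(L_{ij}(a_1,a_2,a_3)\le 0\bigr), \]
so it suffices to encode, for each $i$, the conjunction $\phi_i$ by a single vector inequality.

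Decompose each $L_{ij}$ into its dependence on the three variables: $L_{ij}(a_1,a_2,a_3)=u_{ij}(a_1)+v_{ij}(a_2)+w_{ij}(a_3)-\beta_{ij}$, so that $L_{ij}\le 0$ rewrites as $u_{ij}(a_1)+v_{ij}(a_2)\le \beta_{ij}-w_{ij}(a_3)$. Set $M:=\sum_{i=1}^{s}m_i$ and index coordinates of $\mathbb{Z}^M$ by pairs $(i,j)$. Define $a_1':=(u_{ij}(a_1))_{i,j}\in\mathbb{Z}^M$ and $a_2':=(v_{ij}(a_2))_{i,j}\in\mathbb{Z}^M$, and for every pair $(a_3,i)\in A_3\times [s]$ a vector $a_3'\in\mathbb{Z}^M$ whose $(i,j)$-coordinate is $\beta_{ij}-w_{ij}(a_3)$ for $j\le m_i$ and whose $(i',j')$-coordinate for $i'\ne i$ is a (polynomial) upper bound $U^\star$ on all possible values of $u_{i'j'}(a_1)+v_{i'j'}(a_2)$ over the input universe. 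Componentwise $a_1'+a_2'\le a_3'$ then holds iff all inequalities of disjunct $i$ are satisfied (those of the other disjuncts are vacuously true by the choice of $U^\star$), which matches $\phi_i(a_1,a_2,a_3)$; hence $\exists a_3'\in A_3': a_1'+a_2'\le a_3'$ is equivalent to the original innermost existential.

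All three maps are $O(1)$-time computable (the dimension $M$, the coefficients, and the constant $U^\star$ depend only on $\psi$ and the polynomially bounded universe), yielding $|A_1'|\le n$, $|A_2'|\le n$, $|A_3'|\le sn=O(n)$, and total construction time $O(n)$. To enforce $|A_1'|=|A_2'|=n$ exactly, one can append a unique-identifier coordinate to $a_1'$ (and the corresponding canceling coordinate to $a_2'$ and to every $a_3'$) so that distinct inputs yield distinct images without altering the satisfaction of the inequality. The main conceptual obstacle is exactly the passage from the DNF to a single vector inequality: disjunctions cannot be captured by one inequality on their own, and the $\exists$-quantifier in the innermost position is the essential ingredient that allows us to shift the disjunction into the set $A_3'$ and neutralize the unused disjuncts via the dummy value $U^\star$.
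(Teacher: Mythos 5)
Your proof is correct and follows what is essentially the paper's intended route: normalize atoms to non-strict inequalities, take the DNF, fold the constant-size disjunction into the innermost existential by blowing $A_3$ up to $A_3\times[s]$ (which is exactly why the lemma allows $|A_3'|=O(n)$ while $|A_1'|=|A_2'|=n$), and stack the conjunctive inequalities into coordinates of a single vector inequality, padding the coordinates of unused disjuncts with a large dummy bound. The key observations — that the innermost $\exists$ absorbs the disjunction and that all data are polynomially bounded so the padding constant exists — are both present, so no gap remains.
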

The above lemma immediately gives us complete syntactic problems for our classes.
It remains to establish connections between the different quantifier structure classes, and explore natural variants of the syntactic problems.

The syntactic complete problem for the class $\mathsf{FOP}_{\mathbb{Z}}(\exists \forall \exists)$ turns out to be equivalent to Hausdorff Distance under $n$ Translations. We obtain:

\begin{restatable}[Hausdorff Distance under $n$ Translations is complete for $\mathsf{FOP}_{\mathbb{Z}}(\exists \forall \exists)$]{lemma}{hausdorffcompl}
	There is a function $\epsilon(d)>0$ such that Hausdorff Distance under $n$ Translations can be solved in time $O(n^{2-\epsilon(d)})$ if and only if all problems $P$ in 
	$\mathsf{FOP}_{\mathbb{Z}}(\exists \forall \exists)$ can be solved in time $O(n^{2-\epsilon_P})$ for an $\epsilon_P>0$.
	\label{Hausdorff-Completeness}
\end{restatable}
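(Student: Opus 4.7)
The plan is to prove both directions by (i) observing that Hausdorff Distance under $n$ Translations is itself naturally expressible in $\FOPZ(\exists\forall\exists)$ and (ii) giving a tight reduction from the syntactic complete problem supplied by Lemma~\ref{normalform} to Hausdorff Distance in a matching dimension.

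For the easy direction, I would unfold the HD specification directly. By definition, $\delta_{\overrightarrow{H}}^{T(A)}(B,C) \le \gamma$ is equivalent to $\exists \tau \in A\, \forall b \in B\, \exists c \in C: \bigwedge_{i=1}^{d} \bigl( -\gamma \le b[i] - c[i] - \tau[i] \le \gamma \bigr),$ which is manifestly a $\FOPZ(\exists\forall\exists)$ sentence over vectors in $\mathbb{Z}^{d}$ with $\gamma$ as a free integer parameter. Hence a subquadratic algorithm for every problem in $\FOPZ(\exists\forall\exists)$ immediately yields the claimed $O(n^{2-\epsilon(d)})$-time algorithm for HD in dimension $d$.

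For the hard direction, I would first invoke Lemma~\ref{normalform} to reduce an arbitrary problem in $\FOPZ(\exists\forall\exists)$ to the syntactic complete form $\exists a_1 \in A_1'\, \forall a_2 \in A_2'\, \exists a_3 \in A_3': a_1 + a_2 \le a_3,$ where $A_1', A_2', A_3' \subseteq \mathbb{Z}^{d}$ for some dimension $d$ and all coordinates lie in a polynomially bounded universe $[-U,U]$. I would then embed this formula into HD by choosing a threshold $\gamma$ large enough to trivialise the unwanted side of the $L_\infty$ sandwich. Concretely, set $\gamma := 2U$, $A := A_1'$, $B := \{ -a_2 - \gamma \cdot \mathbf{1} : a_2 \in A_2' \}$, and $C := \{ -a_3 : a_3 \in A_3' \}$, where $\mathbf{1}$ denotes the all-ones vector in $\mathbb{Z}^{d}$. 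A direct calculation gives $b - (c + \tau) = (a_3 - a_1 - a_2) - \gamma \cdot \mathbf{1},$ so the constraint $\|b - (c+\tau)\|_\infty \le \gamma$ is equivalent, coordinatewise, to $0 \le a_3[i] - a_1[i] - a_2[i] \le 2\gamma$. The lower half recovers exactly $a_1 + a_2 \le a_3$, while the upper half is automatic since $|a_3[i] - a_1[i] - a_2[i]| \le 3U < 2\gamma$. This proves that the HD instance is a yes-instance if and only if the syntactic formula is.

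The main subtlety — which I would flag as the key obstacle — is that HD intrinsically encodes a \emph{two-sided} $L_\infty$ constraint, whereas the syntactic complete problem is a \emph{one-sided} inequality; bridging this gap requires exploiting the polynomially bounded universe assumption on $\FOPZ$ inputs to pick $\gamma$ so that one half of the sandwich becomes vacuous. Apart from this design choice, the reduction is near-linear in the input size (Lemma~\ref{normalform} already preserves sizes up to a constant factor with $|A_3'| = O(n)$), and the induced dimension of the HD instance matches the dimension $d$ produced by the normal form. Consequently, the $\epsilon_P$ in the conclusion is obtained as the value $\epsilon(d)$ guaranteed for HD, and the subquadratic running time transfers cleanly.
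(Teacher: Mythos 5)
Your proof is correct and takes essentially the same approach as the paper: both directions are handled by observing that Hausdorff Distance under $n$ Translations unfolds directly into an $\exists\forall\exists$ sentence (the easy direction), and by composing the normal-form reduction of Lemma~\ref{normalform} with an affine embedding that shifts the target points by $\gamma\cdot\mathbf{1}$ to make the upper side of the $L_\infty$ constraint vacuous (the hard direction). The choice $\gamma=2U$, the sign flips on $A_2'$ and $A_3'$, and the observation that $|a_3[i]-a_1[i]-a_2[i]|\le 3U<2\gamma$ are exactly the ingredients needed, and the reduction is near-linear and dimension-preserving as required.
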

Similarly, the Verification of Pareto Sum problem is complete for the class $\mathsf{FOP}_{\mathbb{Z}}(\forall \forall \exists)$.
\begin{restatable}[Verification of Pareto Sum is complete for $\mathsf{FOP}_{\mathbb{Z}}(\forall \forall \exists)$]{lemma}{verifcompl}
	There is a function $\epsilon(d)>0$ such that Verification of Pareto Sum can be solved in time $O(n^{2-\epsilon(d)})$ if and only if all problems $P$ in 
	$\mathsf{FOP}_{\mathbb{Z}}(\forall \forall \exists)$ can be solved in time $O(n^{2-\epsilon_P})$ for an $\epsilon_P>0$.
	\label{Verif-Completeness}
  \end{restatable}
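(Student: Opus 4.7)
The equivalence has two directions, one of which is immediate. For the forward implication, I would use that Verification of Pareto Sum is itself syntactically a $\mathsf{FOP}_{\mathbb{Z}}(\forall\forall\exists)$ problem: its defining formula $\forall a \in A\, \forall b \in B\, \exists c \in C : a + b \leq c$ is already in prenex normal form with the prescribed quantifier structure, for every constant dimension $d$. Hence a uniformly subquadratic algorithm for the whole class, restricted to this single family of formulas, supplies the function $\epsilon(d) > 0$ required on the right-hand side of the statement.

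For the hard direction I would apply the syntactic normal-form Lemma \ref{normalform} to an arbitrary problem $P$ in $\mathsf{FOP}_{\mathbb{Z}}(\forall\forall\exists)$. Specializing $Q_1 = Q_2 = \forall$ in that lemma, one obtains, in $O(n)$ preprocessing time, an equivalent instance of
\[
\forall a_1' \in A_1'\, \forall a_2' \in A_2'\, \exists a_3' \in A_3' : a_1' + a_2' \leq a_3',
\]
with $A_1', A_2', A_3' \subseteq \mathbb{Z}^{d'}$ for some dimension $d'$ depending only on $P$, and with $|A_1'|=|A_2'|=n$, $|A_3'|=O(n)$. Coordinate-wise, this is precisely a Verification of Pareto Sum instance in dimension $d'$ on inputs of size $O(n)$. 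Plugging in the assumed $O(n^{2-\epsilon(d')})$-time algorithm for PSV then solves $P$ in time $O(n^{2-\epsilon_P})$, where $\epsilon_P := \epsilon(d') > 0$ is a constant once $P$ is fixed.

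The proof of the lemma above is thus essentially a syntactic rewriting, provided Lemma \ref{normalform} is in hand. The main obstacle therefore lies in the normal-form lemma itself, which must (i) convert an arbitrary quantifier-free Presburger matrix over $\mathbb{Z}$-vectors into a single inequality of the rigid shape $a_1' + a_2' \leq a_3'$, by absorbing arbitrary linear coefficients, multiple inequalities, and Boolean combinations into fresh coordinates (using the bit-level trick of Lemma \ref{Ineq_to_eq} to encode equalities and Boolean structure via inequalities on padded coordinates), and (ii) keep the dimension $d'$ a function of $P$ only and the size of $A_3'$ linear in $n$. Once this is available, combining both directions yields the claimed equivalence, taking $\epsilon_P := \epsilon(d')$ in the hard direction and $\epsilon(d)$ to be the value from the class-level bound in the easy direction.
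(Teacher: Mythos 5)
Your proof is correct and follows essentially the same route as the paper: observe that Verification of Pareto Sum in dimension $d$ is itself a $\mathsf{FOP}_{\mathbb{Z}}(\forall\forall\exists)$ formula (giving the easy direction), and for the hard direction invoke Lemma~\ref{normalform} with $Q_1=Q_2=\forall$ to convert any $\mathsf{FOP}_{\mathbb{Z}}(\forall\forall\exists)$ problem in $O(n)$ time into the syntactic form $\forall a_1'\,\forall a_2'\,\exists a_3': a_1'+a_2'\le a_3'$, which is precisely a Pareto Sum Verification instance in a fixed dimension $d'$. The paper likewise derives this lemma directly from the normal-form lemma and the membership observation, so your argument coincides with the paper's proof.
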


\subsection{$\mathsf{FOP}_{\mathbb{Z}}(\forall \exists \exists) \to \mathsf{FOP}_{\mathbb{Z}}(\exists \exists \exists)$ }

We continue with handling the class $\mathsf{FOP}_{\mathbb{Z}}(\forall \exists \exists)$.
By simply making use of Corollary \ref{count-all-ints}, one can easily prove that $3$-SUM is hard for the class $\mathsf{FOP}_{\mathbb{Z}}(\forall \exists \exists)$.
We can also show a deterministic proof, as Corollary \ref{count-all-ints} makes use of 
the subquadratic equivalence between $3$-SUM and $\#$All-ints $3$-SUM, which relies on randomization techniques.

\allintshard*


\subsection{$\mathsf{FOP}_{\mathbb{Z}}(\exists \exists \exists) \to \mathsf{FOP}_{\mathbb{Z}}(\forall \forall \exists)$}
We explore the connection between the problem Additive Sumset Approximation, which is a member of the class 
$\mathsf{FOP}_{\mathbb{Z}}(\forall \forall \exists)$, and the $3$-SUM problem.
The following theorem will play a key role to enable the discovery of the relationship between $3$-SUM and other quantifier structures.
\sumsetapproxchar*
The proof is deferred to the full version of the paper.
\subsection{Completeness results for the class $\mathsf{FOP}_{\mathbb{Z}}^k$}
We turn to combining the above insights to establish (a pair of) complete problems for the class $\mathsf{FOP}_{\mathbb{Z}}$. The proofs in this section are deferred to the full version of the paper.

\verifhard*

\hunthard*
We finally obtain  our completeness theorem for the whole class $\mathsf{FOP}_{\mathbb{Z}}^k$.
\completenessclass*

Essentially, these two problems capture the complexity of the class $\mathsf{FOP}_{\mathbb{Z}}^3$ and can be seen as the most important problems in $\mathsf{FOP}_{\mathbb{Z}}^{k}$. 


\section{The $3$-SUM problem is complete for $\mathsf{FOP}_{\mathbb{Z}}$ formulas with Inequality Dimension at most 3}\label{sec:IneqDimension}

In this section, we show that $3$-SUM problem captures an interesting subclass of $\FOPZ$ formulas with arbitrary quantifier structure, namely the formulas of sufficiently small \emph{inequality dimension}.  Let us recall the notion of inequality dimension.
\begin{definition}[Inequality Dimension of a Formula]
  Let $\phi = Q_1 x_1\in A_1, \dots, Q_k x_k\in A_k: \psi$ be a $\FOPZ$ formula with $A_i\subseteq \mathbb{Z}^{d_i}$.
  
  The \emph{inequality dimension} of $\phi$ is the smallest number $s$ such that there exists a Boolean function $\psi' :\{0,1\}^s \to \{0,1\}$ and (strict or non-strict)
  linear inequalities $L_1, \dots, L_s$ in the variables $\{x_i[j] : i\in \{1,\dots,k\} ,j\in \{1,\dots,d_i\} \}$ and 
  the free variables such that $\psi(x_1,\dots, x_k)$ is equivalent to $\psi'(L_1,\dots, L_s)$.  
\end{definition}
In the following, we look at the class of problems $\mathsf{FOP}_{\mathbb{Z}}^k$ with the restriction of 
inequality dimension at most $3$.
We use the following naming convention for boxes.
\begin{definition}
  A $d$-box in $\mathbb{R}^d$ is the cartesian product of $d$ proper intervals $s_1 \times \dots \times s_d$,
  where $s_i$ is an open, closed or half-open interval. We call a cartesian product of only closed intervals a closed box and
  a cartesian product of only open intervals an open box.
  \end{definition}
  Given a set $R$ of $n$ closed boxes (represented as $2d$ integer coordinates), and $d$-dimensional points $a \in A ,b \in B$, we can express in $\FOPZ(\exists \exists \exists)$ whether $a+b$ lies in one of the boxes as follows:
  $$ \exists a \in A \exists b \in B \exists r \in R: \bigwedge_{i=1}^{d} r[i] \leq a[i]+b[i] \land a[i]+b[i]  \leq r[d+i]. $$ 
  In fact, we are not limited to closed boxes, if a box is open or half open in a dimension, one can adjust the inequalities in this dimension appropriately.

In order to prove our main theorem in this section, we need to partition the union of $n$ unit cubes in $\mathbb{R}^3$ into pairwise interior- and exterior-disjoint boxes.
While Chew et al.~\cite{DBLP:journals/dcg/ChewDEK99} studied such a decomposition of unit cubes  with the requirement of only interior-disjoint boxes, we need an extension of their result to guarantee disjoint exteriors.

\begin{lemma}[Disjoint decomposition of the union of cubes in $\mathbb{R}^3$]
Let $\mathcal{C}$ be a set of $n$ axis-aligned congruent cubes in $\mathbb{R}^3$. The union of these cubes,
can be decomposed into $O(n)$ boxes whose interiors and exteriors are disjoint in time $O(n \log^2 n)$.
\label{Cubes_in_space}
\end{lemma}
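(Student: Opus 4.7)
The plan is to build on the decomposition of Chew, Dor, Efrat, and Kedem~\cite{DBLP:journals/dcg/ChewDEK99}, which produces a family $\mathcal{R}_0$ of $O(n)$ closed axis-parallel boxes whose union equals $\bigcup_{C \in \mathcal{C}} C$ and whose pairwise intersections have empty interior, in time $O(n \log^2 n)$. The goal is to convert $\mathcal{R}_0$ into an equally sized family of boxes that are pairwise disjoint as point sets, so that they share neither interior nor boundary; this is what the interior- and exterior-disjointness required by the lemma amount to. I would achieve this in two steps: first refine $\mathcal{R}_0$ so that adjacent boxes meet face-to-face, and then rewrite each box as a product of intervals that are half-open on the appropriate sides.

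For the refinement, I would, for each axis $i$ in turn, subdivide every box of $\mathcal{R}_0$ along each plane orthogonal to axis $i$ that is used by some opposite face of a neighbouring box of $\mathcal{R}_0$. After this refinement, each $2$-dimensional boundary piece of each box has a unique neighbour on the opposite side, eliminating T-junctions. This step is a standard plane sweep with segment-tree range queries and fits into $O(n \log^2 n)$ time; the congruence of the cubes will be used to argue that the total boundary complexity of $\mathcal{R}_0$ is only $O(n)$, so that the refined family $\mathcal{R}_1$ still has $O(n)$ boxes.

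Next I would apply a uniform tie-breaking rule. For every box $B = [l_1, r_1] \times [l_2, r_2] \times [l_3, r_3]$ in $\mathcal{R}_1$ and each axis $i$, the lower side $l_i$ stays closed; the upper side is made half-open, replacing $[l_i, r_i]$ by $[l_i, r_i)$, exactly when $r_i$ is an internal event plane of the union (some neighbour of $B$ lies on the opposite side of $r_i$), and remains closed when $r_i$ lies on $\partial\bigl( \bigcup \mathcal{C} \bigr)$. Every interior point of $\bigcup \mathcal{C}$ still lies in exactly one box; every internal shared face is now assigned to exactly one of its two incident boxes; and every outer boundary face of the union is still covered, because no outer side is ever opened. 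Hence the resulting family of $O(n)$ boxes partitions $\bigcup \mathcal{C}$ as point sets, which is precisely interior- and exterior-disjointness.

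The main obstacle I expect is justifying that the T-junction refinement does not blow the box count beyond $O(n)$. For arbitrary axis-parallel boxes, a single upper face can be crossed by $\Omega(n)$ neighbour planes, giving a quadratic blow-up; the congruence of the cubes is what rules this out, via essentially the same charging argument that Chew et al.\ use to bound $|\mathcal{R}_0|$: each refinement cut can be charged to a distinct feature of the boundary of $\bigcup \mathcal{C}$, whose total combinatorial complexity is linear for congruent cubes. Making this charging precise, while carrying along the bookkeeping needed to read off the half-open adjustment on the fly during the sweep, is the part of the argument that I expect to require the most care.
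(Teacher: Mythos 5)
Your high-level plan — start from the interior-disjoint box decomposition of Chew et al., refine it so that boundary/interior ambiguities disappear, and then resolve shared faces by a consistent half-open rule — is a reasonable angle of attack, and the half-open tie-breaking is exactly the kind of device one would use to upgrade interior-disjointness to a genuine partition. But the argument as written has a real gap at its center, and you have correctly located it: the claim that the T-junction refinement keeps the box count at $O(n)$ is asserted, not proved, and it is not obviously true.

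Two concrete concerns. First, the refinement rule you give (``subdivide every box of $\mathcal{R}_0$ along each plane orthogonal to axis $i$ that is used by some opposite face of a neighbouring box of $\mathcal{R}_0$'') is over-aggressive: it cuts a box whenever \emph{any} neighbouring face plane crosses it, including at purely interior T-junctions where the half-open rule would in fact work fine without refinement. A single box with $k$ neighbours on one side can pick up $\Theta(k)$ cutting planes in each of two perpendicular axes, producing $\Theta(k^2)$ pieces; bounding $\sum_B k_B^2$ (or $k_B^3$ in the worst case) by $O(n)$ is not a consequence of ``the union boundary of congruent cubes has complexity $O(n)$,'' because those cuts are not restricted to the boundary of $\bigcup\mathcal{C}$. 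You would need either a sharper, targeted refinement (only cut where a face of a box is \emph{partially} on $\partial(\bigcup\mathcal{C})$; those events genuinely can be charged to boundary features) or a charging scheme that also accounts for interior cuts — and neither is spelled out. Second, the half-open assignment step implicitly assumes that after refinement every upper face is either entirely interior or entirely on $\partial(\bigcup\mathcal{C})$, but it also needs a verification that no boundary point of $\bigcup\mathcal{C}$ is dropped and no interior point double-covered once half-open boundaries interact across several axes and several incident boxes; this is the kind of case analysis that looks routine but is exactly where such constructions usually go wrong, and it is not carried out. As you yourself flag, ``making this charging precise'' is the whole lemma; without it, the proposal is a plausible plan rather than a proof. (The paper defers its own proof of this lemma to the full version, so I cannot certify whether it follows the same refine-then-half-open route or instead modifies Chew et al.'s sweep to emit exterior-disjoint boxes directly; the latter would likely avoid the counting difficulty you are facing.)
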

The proof is deferred to the full version.
\begin{theorem}
  There is an algorithm deciding $3$-SUM in randomized time $O(n^{2-\epsilon})$ for an $\epsilon>0$ if and only if
  for each problem $P$ in the classes $\mathsf{FOP}_{\mathbb{Z}}(\forall \forall \exists )$ and $\mathsf{FOP}_{\mathbb{Z}}(\exists \forall \exists )$ of inequality dimension at most $3$ there exists some $\epsilon'>0$ such that we can solve $P$ in randomized time $O(n^{2-\epsilon'})$.
  \label{ineq3}
\end{theorem}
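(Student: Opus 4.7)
The plan is to establish both directions of the equivalence. The reverse implication follows immediately: the Additive Sumset Approximation problem $\forall a\in A\,\forall b\in B\,\exists c\in C:(c\le a+b)\wedge(a+b\le c+t)$ belongs to $\FOPZ(\forall\forall\exists)$ and has inequality dimension $2$, and by Theorem~\ref{sumsetapproxTHM} it is $3$-SUM-hard. Hence a randomized $O(n^{2-\epsilon})$ algorithm for every bounded-inequality-dimension problem in $\FOPZ(\forall\forall\exists)$ already yields one for $3$-SUM.

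For the forward direction, I concentrate on $\FOPZ(\forall\forall\exists)$; the case $\FOPZ(\exists\forall\exists)$ is analogous using All-ints counting. Consider $\phi = \forall a\in A\,\forall b\in B\,\exists c\in C:\psi'(L_1,L_2,L_3)$, where $\psi':\{0,1\}^3\to\{0,1\}$ is an arbitrary Boolean function and each $L_i$ is a linear inequality in $a,b,c$. Rewriting every $L_i$ in the form ``(linear in $a$) + (linear in $b$) $\sim$ (linear in $c$)'' and defining $\tilde a,\tilde b,\tilde c\in\mathbb Z^3$ by $\tilde a[i]=\alpha_i^a\cdot a$, $\tilde b[i]=\alpha_i^b\cdot b$, $\tilde c[i]=\alpha_i^c\cdot c$, the truth value of $\psi'(L_1,L_2,L_3)$ depends only on the coordinate-wise sign pattern of $\tilde a+\tilde b-\tilde c\in\mathbb Z^3$. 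So for each $c\in C$ the set $\{\tilde a+\tilde b:\psi(a,b,c)\}$ is $\tilde c$ plus a fixed union of at most $2^3=8$ orthants; over all $c\in C$ this gives a union $\mathcal U$ of $O(n)$ orthants, which I clip to congruent axis-aligned cubes of side $4U+1$ (harmless since all coordinates lie in $\{-U,\dots,U\}$ with $U=\mathrm{poly}(n)$). Applying Lemma~\ref{Cubes_in_space} produces a family $\mathcal R$ of $O(n)$ pairwise disjoint boxes with the same union as $\mathcal U$.

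The formula $\phi$ is now equivalent to $\forall\tilde a\in\tilde A\,\forall\tilde b\in\tilde B\,\exists R\in\mathcal R:\tilde a+\tilde b\in R$, and by disjointness every pair $(\tilde a,\tilde b)$ is witnessed by \emph{at most one} $R$. Consequently $\phi$ holds iff
\begin{equation*}
\#\bigl\{(\tilde a,\tilde b,R)\in\tilde A\times\tilde B\times\mathcal R:\tilde a+\tilde b\in R\bigr\}\;=\;|A|\cdot|B|,
\end{equation*}
where $\tilde A,\tilde B$ are counted as multisets so that the original cardinalities are preserved. This is a $\FOPZ(\exists\exists\exists)$ witness-counting task of inequality dimension $O(1)$, which by Theorem~\ref{thm:counting-witnesses} together with Corollary~\ref{Count3COMP} reduces to $3$-SUM in randomized time $O(n^{2-\epsilon'})$ for some $\epsilon'>0$. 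For $\FOPZ(\exists\forall\exists)$, the same geometric construction yields a set $\mathcal R$ that is \emph{independent of $a$}, and I need to check, for every $a\in A$ simultaneously, whether $\#\{(\tilde b,R):\tilde a+\tilde b\in R\}=|B|$; this is an All-ints witness-counting problem that Corollary~\ref{count-all-ints} solves in total randomized time $O(n^{2-\epsilon''})$.

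The main obstacle is the refined geometric decomposition of Lemma~\ref{Cubes_in_space}: Chew et al.'s classical cube-union decomposition only guarantees pairwise \emph{interior}-disjoint boxes, whereas the counting identity above crucially relies on an honest partition with both interiors \emph{and} exteriors disjoint, so that each integer point in $\mathcal U$ is covered by exactly one box. Strengthening the decomposition to this property while keeping the number of boxes at $O(n)$ is the real crux of the argument; a secondary subtlety, the possible non-injectivity of $a\mapsto\tilde a$, is absorbed into the multiset-counting extension built into Theorem~\ref{thm:counting-witnesses}.
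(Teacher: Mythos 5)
Your proposal is correct and follows essentially the same route as the paper: lift the atoms $L_1,L_2,L_3$ to $\mathbb{Z}^3$, view the satisfying region for each $c$ as a union of constantly many shifted orthants, clip to congruent cubes, apply the interior-and-exterior-disjoint decomposition of Lemma~\ref{Cubes_in_space}, and convert the $\forall\forall\exists$ (resp.\ $\exists\forall\exists$) check into a witness-counting (resp.\ All-ints witness-counting) condition reducible to $3$-SUM via Corollary~\ref{Count3COMP} (resp.\ Corollary~\ref{count-all-ints}). The only cosmetic difference is that you argue with the $2^3$ sign-pattern regions directly while the paper passes through a DNF of co-clauses, and you make explicit the multiset subtlety (non-injectivity of $a\mapsto\tilde a$) that the paper handles implicitly through the multiset counting machinery.
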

\begin{proof}
  For the first direction due to Theorem \ref{sumsetapproxTHM}, we can reduce $3$-SUM to an instance of Additive Sumset Approximation,
  $$ \forall a \in A \forall b \in B \exists c \in C: c \leq a+b \land a+b \leq c+t,$$
  which has inequality dimension 2. Let us continue with the other direction.
  Let $\phi:=Q_1 a \in A \forall b \in B \exists c \in C: \varphi$, where $Q_1 \in \{\exists,\forall \}$ and $\varphi$ is a quantifier free linear arithmetic formula with inequality dimension $3$.
  Let $L_1:=\alpha_{1}^T a + \beta_{1}^T b \leq \gamma_{1}^T c +S_{1}$,
   $L_2:= \alpha_{2}^T a + \beta_{2}^T b \leq \gamma_{2}^T c +S_{2}$ and
   $L_3:= \alpha_{3}^T a + \beta_{3}^T b \leq \gamma_{3}^T c +S_{3}$ after replacing the free variables.
  Assume that the formula $\varphi$ is given in DNF, thus each co-clause
  has at most $3$ atoms, chosen from $L_1,L_2,L_3$ and their negations. 
  Let 
  \begin{align*}
   A':= \left \{ \left( \begin{array}{cc}
    \alpha_{1}^T a \\
    \alpha_{2}^T a \\
    \alpha_{3}^T a
    \end{array}  \right): a \in A \right \},
    B':=\left \{ \left( \begin{array}{cc}
      \beta_{1}^T b \\
      \beta_{2}^T b \\
      \beta_{3}^T b 
      \end{array}  \right):b \in B \right \},
      C':=\left \{ \left( \begin{array}{cc}
        \gamma_{1}^T c +S_{1} \\
        \gamma_{2}^T c +S_{2}\\
        \gamma_{3}^T c+ S_{3}
        \end{array}  \right):c \in C \right \}
  \end{align*}
  Thus each co-clause consists of conjunctions of a subset of the following set
  \begin{align*}
    \{ & a'[0]+b'[0] \leq c'[0] ,a'[0]+b'[0] \geq c'[0]+1, 
                    a'[1]+b'[1] \leq c'[1], \\
                   & a'[1]+b'[1] \geq  c'[1]+1, 
                    a'[2]+b'[2] \leq c'[2], a'[2] +b'[2] \geq c'[2] +1
    \}.
  \end{align*}
  Let the co-clauses of $\varphi$ be $V_1, \dots ,V_h$. Thus, we aim to decide a formula of the form:
  \begin{equation}\label{eq:Dnfcurr}
  Q_1 a' \in A' \forall b' \in B' \exists c' \in C': \bigvee_{i=1}^{h} V_{i} 
  \end{equation} 
  For each co-clause $V_i$, $i \in \{1,\dots,h\} $ it holds that $V_i$ is of the form
  $$ \bigwedge_{k \in V_i^K} L_k  \land \bigwedge_{j \in V_i^J} \lnot L_j,$$
  for some $V_i^J,V_i^K \subseteq \{1,2,3\}$ and $V_i^J\cap V_i^K =\emptyset$. 

  Let us consider for each fixed $c' \in C'$ the following possibly empty orthant in $\mathbb{R}^3$.
  $$\mathcal{S}(V_{i},c'):=\{x \in \mathbb{R}^3 : \bigwedge_{k \in V_i^K}x[k] \leq c'[k] \land \bigwedge_{j \in V_i^J}x[j] \geq c'[j]+1   \}.$$

  By construction, it is immediate that for a fixed $c'$  and $(a',b') \in A' \times B'$ that $(a',b',c')$ fulfill the co-clause $V_i $ if and only if $a'+b' \in \mathcal{S}(V_{i},c')$.
  Thus, equivalently to \eqref{eq:Dnfcurr}, we ask
	\[ Q_1 a' \in A' \forall b' \in B' \exists c' \in C': \bigvee_{i=1}^{h} \left( a'+b' \in S(V_i,c') \right).\]
  Having a closer look, $\bigvee_{i=1}^{h} \left( a'+b' \in S(V_i,c') \right)$ is true if and only if $a'+b'$ lies in one of the orthants $S(V_i,c')$.

	We argue that we may represent the orthant $ \mathcal{S}(V_{i},c')$ as an appropriately chosen cube in $\mathbb{R}^3$. To this end, let $M:=2 \cdot \max \{\|a \|_1+ \|b \|_1+ \|c \|_1: a' \in A', b' \in B', c' \in C'\}$ be a sufficiently large number. We can interpret $\mathcal{S}(V_{i},c')$ as a cube of the type $\mathcal{C}_{i,c'} = [m_0,m'_0] \times [m_1,m_1'] \times [m_2,m'_2]$,
  where for $u \in \{0,1,2\}$, we define:
  \begin{align*}
  m_u:=\begin{cases} 
    -M & u \not \in V_i^K , u \not \in V_i^J,\\
    -2M+c[u] & u \in V_i^K, \\
    c[u]+1 & u \in V_i^J,
 \end{cases} \quad   m_{u}' := \begin{cases} 
  M & u \not \in V_i^K , u \not \in V_i^J,\\
  c[u] & u \in V_i^K, \\
  2M+c[u]+1 & u \in V_i^J. 
\end{cases}
\end{align*}
The cubes are axis-aligned and have side length $2M$. Due to the large size of the cube we get for fixed $c' \in C'$ that
$a'+b' \in \mathcal{S}(V_i,c')$ if and only if $a'+b'$ lies inside the cube $\mathcal{C}_{i,c'}$.

By Lemma \ref{Cubes_in_space}, we can decompose the collection of cubes $\mathcal{C}_{i,c'}$ for $i \in \{1,\dots,H \}, c' \in C'$ into $l = O(n)$ disjoint boxes $\mathcal{R}:=\{R_1, \dots, R_l\}$ in time $O( n \log^2 n)$.
Let us now go through a case distinction based on the first quantifier. 
  \begin{itemize}
  \item If $Q_1=\forall$, equivalent to $\phi$ we ask 
  $$\forall a' \in A' \forall b' \in B' \exists i \in \{1,\dots,l\}: a'+b' \text{ lies in } R_i.$$
  
		  By replacing each $i \in \{1,\dots,l\}$ by a 6-tuple denoting the dimensions of the box $R_i$, we can reduce counting the number of $(a',b',R_i)$ with $a'+b' \in R_i$ to 3-SUM using Corollary~\ref{Count3COMP}.
		  Due to the disjointness of the boxes $R_i$, we know that no $(a',b')$ can be in different boxes $R_i, R_{i'}$ with $i\ne i'$.

  Thus, we can decide our original question by checking whether the number of such witnesses equals $|A'| \cdot |B'|$, concluding the fine-grained reduction to 3-SUM.

  \item Assume now that $Q_1=\exists$. Thus, equivalently to $\phi$, we ask.
  $$\exists a' \in A' \forall b' \in B' \exists i \in \{1,\dots,l\}: a'+b' \text{ lies in } R_i.$$

		  We can now make use of Corollary~\ref{count-all-ints}. Count for each $a' \in A'$ the number of \emph{witnesses} $(a',b',R_i)$ with $a'+b'\in R'$. We claim that it remains to check
whether there is some $a'$ that is involved in $|B'|$ witnesses. 
		  To see this, note that due to the disjointness of the $R_{i}$'s, for any $a'\in A'$ we have that the number of $(b',R_i)$ with $a'+b'\in R_i$ is equal to the number of $b'$ such that there exists $R_i$ with $a'+b'\in R_i$. Again, the desired reduction to 3-SUM follows. \qedhere
  \end{itemize}  
\end{proof}

We remark that, by \cite{DBLP:journals/dcg/BoissonnatSTY98}, we know that the complexity of the union of orthants in $\mathbb{R}^d$ has worst case complexity $O(n^{\lfloor d/2 \rfloor})$.
Thus, the above proof does not seem directly generalizable to inequality dimensions larger than 3.
We can extend Theorem \ref{ineq3} to $k$-quantifiers by the following theorem.

\threesumineq*

  The above theorem gives us immediate reductions to $3$-SUM for many seemingly unrelated problems of different 
  quantifier structures and semantics.

  For instance, as a direct application of the above theorem we can conclude the equivalence of the 
  Additive Sumset Approximation problem to $3$-SUM, together with Theorem \ref{sumsetapproxTHM}.

  \begin{lemma}[Additive Sumset Approximation $\leq_2$ $3$-SUM]
    If the $3$-SUM problem can be solved in randomized time $O(n^{2-\epsilon})$ for an $\epsilon>0$
    then Additive Sumset Approximation problem can be solved in randomized time 
    $O(n^{2-\epsilon'})$ for an $\epsilon'>0$.
    \end{lemma}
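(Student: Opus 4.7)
The plan is to establish this lemma as an immediate corollary of Theorem~\ref{three-sum-Completeness-all-quantifer}, by verifying that Additive Sumset Approximation falls within the fragment of $\FOPZ$ captured by that theorem. The entire argument is essentially a syntactic check followed by invoking the prior completeness result, so I do not expect any technical obstacle.

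First I would write the Additive Sumset Approximation problem $A+B \subseteq C + \{0,\dots,t\}$ in its natural $\FOPZ$ form, namely
\[
\forall a \in A \; \forall b \in B \; \exists c \in C : (c \leq a+b) \wedge (a+b \leq c+t).
\]
This formula has exactly $k=3$ quantifiers, so it belongs to $\FOPZ^{k}$ with $k = 3$. Next I would observe that its quantifier-free part is a conjunction of two linear inequalities (namely $c \leq a+b$ and $a+b \leq c+t$, with $t$ a free variable), hence the inequality dimension of this formula is at most $2$, and in particular at most $3$.

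With the two conditions of Theorem~\ref{three-sum-Completeness-all-quantifer} verified, the conclusion is direct: assuming a randomized $O(n^{2-\epsilon})$-time algorithm for $3$-SUM with $\epsilon>0$, that theorem guarantees the existence of some $\epsilon'>0$ such that Additive Sumset Approximation, being a problem in $\FOPZ^{3}$ of inequality dimension at most $3$, admits a randomized algorithm running in time $O(n^{k-1-\epsilon'}) = O(n^{2-\epsilon'})$. This matches the statement of the lemma, so the proof is complete.

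I would end with the remark (already noted in the excerpt) that combining this lemma with the $3$-SUM-hardness direction from Theorem~\ref{sumsetapproxTHM} yields a genuine subquadratic \emph{equivalence} between $3$-SUM and Additive Sumset Approximation, despite their different quantifier structures. The main ``work'' lies entirely in Theorem~\ref{three-sum-Completeness-all-quantifer}; here, no further technical ideas are needed beyond counting inequalities and quantifiers.
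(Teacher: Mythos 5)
Your proof is correct and matches the paper's approach exactly: the paper also presents this lemma as an immediate consequence of Theorem~\ref{three-sum-Completeness-all-quantifer}, after observing that Additive Sumset Approximation is the $\FOPZ^3$ formula $\forall a\in A\,\forall b\in B\,\exists c\in C: (c\le a+b) \wedge (a+b \le c+t)$, which has inequality dimension $2 \le 3$. Nothing further is needed.
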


\section{Application:  A lower bound on the computation of Pareto Sums} \label{sec:applications} \label{sec:ParetoSum}
In the following, we explore how the $3$-SUM hardness of Verification of Pareto Sum
translates to a hardness result for the problem of computing Pareto Sums.
Let us first justify the naming of the Verification of Pareto Sum problem,
by showing it to be subquadratic equivalent to the more natural extended version of Verification of Pareto Sum.
Throughout this section, we consider dimensions $d\ge 2$.
\begin{definition}[Verification of Pareto Sum (Extended version)]
	Given sets $A,B,C \subseteq \mathbb{Z}^d$, do 
	the following properties hold simultaneously:
	\begin{itemize}
	\item (Inclusion): $C\subseteq A+B$,
	\item (Dominance): $C$ dominates $A+B$. More formally, for every $a \in A, b \in B$ there exists $c \in C$ with $c\geq a+b$.
	\item (Minimality): There are no $c,c' \in C$ with $c\neq c'$ and $c \leq c'$. 
	\end{itemize}
\end{definition}

We make use of the following lemma and its construction for the results in this section.

\begin{lemma}
	Given sets $A,B,C \subseteq \mathbb{Z}^d$ of size at most $n$, one can construct sets $\Tilde{A},\Tilde{B},\Tilde{C} \subseteq \mathbb{Z}^d$ of size $\Theta(n)$ 
	in time $\Tilde{O}(n)$ such that (1) $\Tilde{A}, \Tilde{B}, \Tilde{C}$ always satisfy the minimality and inclusion condition and (2)
	$\Tilde{A},\Tilde{B},\Tilde{C}$ fulfill the dominance condition if and only if $A,B,C$ fulfill 
	the dominance condition.  
	\label{annoying_construction}
	\end{lemma}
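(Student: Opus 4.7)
The plan is to build $\tilde A, \tilde B, \tilde C$ in two stages that address minimality and inclusion separately, each time taking care that the dominance answer is preserved.

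For minimality, I would let $\tilde C$ start from the Pareto front $C^\ast$ of $C$, which can be computed in $\tOh(n)$ time for constant $d$. Since every $c \in C$ is dominated by some $c^\ast \in C^\ast$, replacing $C$ by $C^\ast$ preserves exactly which points of $A+B$ are dominated and so does not affect the dominance question.

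For inclusion, the key idea is to introduce a sentinel vector $v \in \mathbb{Z}^d$ with mixed-sign coordinates, e.g.\ $v = (-M, M, \ldots, M)$ for $M$ chosen larger than every coordinate appearing in the input. I would set $\tilde A := A \cup \{v\}$ and $\tilde B := B \cup \{c - v : c \in C^\ast\}$, so that every $c \in C^\ast$ is realized as $c = v + (c-v) \in \tilde A + \tilde B$. The challenge is the two families of sentinel sums $v + b$ (for $b \in B$) and $a + (c - v)$ (for $a \in A, c \in C^\ast$) that are freshly introduced into $\tilde A + \tilde B$: they must be dominated by $\tilde C$ without violating minimality or destroying the equivalence of dominance answers.

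The mixed-sign choice of $v$ pushes each family of sentinel sums into an extreme corner of $\mathbb{Z}^d$: $v+b$ has coordinate $1$ deeply negative and the remaining coordinates deeply positive, while $a + (c-v)$ behaves symmetrically in the opposite direction. I would then add to $\tilde C$ a small set of dominators for these two families, each itself expressible as a sum already present in $\tilde A + \tilde B$ (for instance $v + b^\ast$ for $b^\ast$ on the Pareto front of $B$, and analogous representatives for the second family). Because such a new $\tilde c \in \tilde C$ has some coordinate much smaller than any coordinate of an ordinary sum $a+b$, or has one coordinate much larger while another is much smaller, the new dominators end up pairwise incomparable with each other and with every $c \in C^\ast$, preserving minimality. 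Conversely, the same extremal structure guarantees that no new $\tilde c \in \tilde C \setminus C^\ast$ can dominate an ordinary $a+b$; hence any $\tilde C$-dominance of $a+b$ must already come from $C^\ast$, yielding the backward direction of the equivalence.

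The main obstacle I anticipate is controlling $|\tilde C|$ when $d \ge 3$: naively dominating the family $a + (c-v)$ via the Pareto front of $A + C^\ast$ can blow $|\tilde C|$ up to $\Theta(n^2)$. The fix should exploit the one-sided extremal structure — most coordinates of sentinel sums are essentially pinned by $\pm M$, so effective variation lives in fewer coordinates — to design a dominator set of size $O(n)$. Any residual slack can be closed by padding: if $|\tilde A|, |\tilde B|$, or $|\tilde C|$ ends up below $\Omega(n)$, we add matched dummy pairs in $\tilde A \times \tilde B$ whose sums realize unused, pairwise incomparable elements of $\tilde C$. Since every ingredient — Pareto-front computation, sentinel additions, dominator construction, padding — runs in $\tOh(n)$ for constant $d$, the total construction time is $\tOh(n)$ as required.
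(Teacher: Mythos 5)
Your high-level plan (replace $C$ by its Pareto front $C^\ast$, add a mixed-sign sentinel $v$ to $\tilde{A}$ and the shifted copy $\{c-v : c\in C^\ast\}$ to $\tilde{B}$ so that each $c$ appears as $v+(c-v)$, then add incomparable dominators for the spurious sentinel sums) is on the right track, and your handling of the family $\{v+b : b\in B\}$ via $\{v+b^\ast : b^\ast \in \mathrm{PF}(B)\}$ is correct. The construction breaks, however, on the family $\{a+(c-v) : a\in A,\ c\in C^\ast\}$, and the patch you sketch does not close the gap. The only sums in $\tilde{A}+\tilde{B}$ that live in the ``first coordinate $\approx +M$, remaining coordinates $\approx -M$'' region are precisely the sums $a'+(c'-v)$ themselves; hence any dominator set drawn from $\tilde{A}+\tilde{B}$ must contain the full Pareto front of $A+C^\ast$ translated by $-v$. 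That Pareto front can have size $\Theta(n^2)$ already in $d=2$ (e.g.\ $A=\{(i,n-i)\}_{i\in[n]}$, $C^\ast=\{(jn,-jn)\}_{j\in[n]}$ makes every point of $A+C^\ast$ Pareto-maximal), so the blowup is not a $d\ge 3$ phenomenon, and it cannot even be \emph{computed} in $\tilde O(n)$ time. Your proposed fix — that ``effective variation lives in fewer coordinates'' because most coordinates are pinned near $\pm M$ — is not accurate: the $\pm M$ shift is additive and does not reduce the combinatorial variation of the Pareto front, so this intuition does not yield an $O(n)$-size dominator set.

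What is missing is an additional sentinel that makes a \emph{single} dominator realizable as a sum without reigniting the recursion. Concretely (for $d\ge 2$, using only two mixed-sign coordinates rather than $d-1$): put $\alpha = (-M,M,0,\dots,0)$ into $\tilde A$, put $\{c-\alpha : c\in C^\ast\}$ and one further point $\delta = (3M,-M,3U,\dots,3U)$ into $\tilde B$. Then $\alpha+\delta = (2M,0,3U,\dots,3U)$ is a \emph{single} element of $\tilde A + \tilde B$ that dominates every $a+(c-\alpha)$ (for $M$ large relative to the input range $U$) while being incomparable to $C^\ast$ and to every $a+b$. The new sums $a+\delta$ created by $\delta$ are then controlled by the $O(n)$ points $\{a^\ast + \delta : a^\ast \in \mathrm{PF}(A)\}$, which are sums of elements already in $\tilde A,\tilde B$; because $\delta$ is a single fixed vector, this does not cascade further. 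Combined with $\{\alpha+b^\ast : b^\ast\in\mathrm{PF}(B)\}$ for the $\alpha+b$ family, this gives $|\tilde C|=O(n)$, the antichain property, inclusion, and preservation of the dominance answer, all computable in $\tilde O(n)$ time. Without a sentinel of this kind, the size bound $|\tilde C|=\Theta(n)$ in the lemma is not achieved by your argument.
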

	Due to space constraints, the proof had to be deferred to the full version. Using this construction, it is not difficult to obtain the following equivalence.

	\begin{lemma}
		There is an $O(n^{2-\epsilon})$ time algorithm for an $\epsilon>0$ for Verification of Pareto Sum (Extended Version)
		if and only if there is an $O(n^{2-\epsilon'})$ time algorithm for an $\epsilon'>0$ for Verification of Pareto Sum.
		\end{lemma}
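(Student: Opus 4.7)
The plan is to establish both directions of the equivalence by leveraging Lemma~\ref{annoying_construction} for the easier direction and decomposing the extended version into three separately verifiable conditions for the harder direction.

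First, for the direction that a subquadratic algorithm for the extended version yields a subquadratic algorithm for (basic) Verification of Pareto Sum, the argument is immediate. Given an instance $(A, B, C)$ of Verification of Pareto Sum, I would apply Lemma~\ref{annoying_construction} in $\tilde{O}(n)$ time to obtain $(\tilde{A}, \tilde{B}, \tilde{C})$ of size $\Theta(n)$ that always satisfies the inclusion and minimality conditions, and satisfies dominance if and only if $(A, B, C)$ does. Running the extended-version algorithm on $(\tilde{A}, \tilde{B}, \tilde{C})$ therefore returns true if and only if $(A, B, C)$ satisfies dominance, giving an $O(n^{2-\epsilon})$-time algorithm for the basic version (after absorbing the $\tilde{O}(n)$ preprocessing).

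For the converse, I would show that a subquadratic algorithm for Verification of Pareto Sum permits checking all three conditions of the extended version in subquadratic time. Given $(A, B, C)$, I would check each condition separately: (1) \emph{Minimality}---that no distinct $c, c' \in C$ satisfy $c \leq c'$---is a classical antichain/maxima problem, solvable in $\tilde{O}(n)$ time in any fixed dimension $d$ via orthogonal range techniques or a dimension sweep. (2) \emph{Dominance} is precisely Verification of Pareto Sum, solvable in $O(n^{2-\epsilon})$ time by assumption. (3) \emph{Inclusion}, i.e., $C \subseteq A + B$, is captured by the $\FOPZ(\forall \exists \exists)$ sentence $\forall c \in C \, \exists a \in A \, \exists b \in B: a + b = c$, and by Lemma~\ref{verif-complete-three} a subquadratic algorithm for Verification of Pareto Sum yields a subquadratic algorithm for every problem in $\FOPZ(\forall \exists \exists)$, including inclusion. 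Conjoining the three checks gives an $O(n^{2-\epsilon'})$-time algorithm for the extended version.

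The main subtlety is the inclusion check in the forward direction: since inclusion has the quantifier structure $\forall \exists \exists$ rather than the $\forall \forall \exists$ structure of dominance, there is no obvious way to encode it as a single Verification of Pareto Sum instance. The step that bridges this gap is the completeness result Lemma~\ref{verif-complete-three}, which asserts that Verification of Pareto Sum is complete for both quantifier classes simultaneously and thereby transfers a subquadratic bound across this boundary. The remaining checks---minimality via maxima computation and dominance via direct invocation---are handled by standard machinery, so I expect no further obstacles beyond carefully bookkeeping the exponents $\epsilon$ produced by the three components.
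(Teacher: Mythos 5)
Your proof follows the same overall route as the paper's: the direction from the extended version to the basic version is exactly the intended application of Lemma~\ref{annoying_construction}, and the converse direction decomposes the extended version into a near-linear minimality check, a direct dominance check, and an inclusion check that must be routed through the completeness machinery because of its $\forall\exists\exists$ quantifier structure. The one step to tighten is the citation of Lemma~\ref{verif-complete-three} for the inclusion check: that lemma's hypothesis is that Pareto Sum Verification is subquadratic in \emph{every} dimension (a function $\epsilon(d)>0$), whereas the statement you are proving supplies a subquadratic algorithm only for the one fixed dimension $d$ at hand, so the citation does not literally apply. The gap is repairable inside the paper's toolbox: a subquadratic algorithm for $d$-dimensional Pareto Sum Verification with $d\ge 2$ already solves Additive Sumset Approximation subquadratically (it is a two-inequality $\forall\forall\exists$ formula, hence after the normal-form transformation an instance of $2$-dimensional Pareto Sum Verification, which embeds into dimension $d$ by padding), this refutes the $3$-SUM hypothesis by the hardness direction of Theorem~\ref{sumsetapproxTHM}, and then Lemma~\ref{fopaee} yields the subquadratic algorithm for the $\forall\exists\exists$ inclusion sentence $\forall c\,\exists a\,\exists b: a+b=c$. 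With that chain substituted for the appeal to Lemma~\ref{verif-complete-three}, your argument is complete; the remaining bookkeeping of exponents is routine as you say.
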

	The proof can be found in the full version.
Thus, for subquadratic reductions, we can restrict ourselves to the Verification of Pareto Sum problem, which essentially
only checks the dominance condition.

Let us now consider the natural problem of computing the Pareto Sum.
\begin{definition}[Pareto Sum]
	Given sets $A,B \subseteq \mathbb{Z}^d$, compute a set $C \subseteq \mathbb{Z}$,
	such that $A,B,C$ satisfy the Inclusion, Dominance and Minimality condition.
\end{definition}

In the following, we argue why the lower bounds to Verification of Pareto Sum translate to lower bounds 
to Computation of the Pareto Sum. Formally, we prove:

\begin{lemma}
If there is an algorithm to compute the Pareto Sum $C$ of sets $A,B \subseteq \mathbb{Z}^d$ in time 
	$O(n^{2-\epsilon})$ for an $\epsilon>0$ even when $C=\Theta(n)$,
then one can also decide Verification of Pareto Sum of sets $A,B,C$ in time $O(n^{2-\epsilon'})$ for an
$\epsilon'>0$. 
\label{Verif to Comp}
\end{lemma}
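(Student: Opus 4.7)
The plan is to reduce Verification of Pareto Sum to Pareto Sum Computation by exploiting the extended form of Verification. First, I apply the equivalence between the basic and the Extended version of Verification of Pareto Sum (established in the preceding lemma) to produce, in time $\tilde O(n)$, sets $\tilde A, \tilde B, \tilde C \subseteq \mathbb{Z}^d$ of size $\Theta(n)$ such that $\tilde C \subseteq \tilde A + \tilde B$, $\tilde C$ is an antichain (minimality), and $\tilde C$ dominates $\tilde A + \tilde B$ if and only if $C$ dominates $A+B$.

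Next, I establish the key claim: $\tilde C$ dominates $\tilde A + \tilde B$ if and only if $\tilde C$ equals the Pareto front $\tilde P$ of $\tilde A + \tilde B$. The ``if'' direction is immediate. For ``only if'', any Pareto optimal $p \in \tilde P$ lies in $\tilde A + \tilde B$, so some $c \in \tilde C \subseteq \tilde A + \tilde B$ satisfies $c \ge p$; by Pareto optimality of $p$ in $\tilde A + \tilde B$ and $c \in \tilde A + \tilde B$, we must have $c = p$, yielding $\tilde P \subseteq \tilde C$. Conversely, each $c \in \tilde C$ is Pareto optimal in $\tilde A + \tilde B$: otherwise some $p \in \tilde P$ satisfies $p \ge c$ with $p \ne c$, and since $p \in \tilde P \subseteq \tilde C$ together with $c \in \tilde C$, this contradicts the antichain property. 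Hence $\tilde C = \tilde P$.

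The algorithm now follows: invoke the hypothesized Pareto Sum Computation algorithm on $(\tilde A, \tilde B)$ with a time budget of $c \cdot n^{2 - \epsilon(d)}$ for a suitable constant $c$; if it completes within budget, it returns the correct Pareto sum $\tilde P$, and we test $\tilde P = \tilde C$ by lexicographic sorting and comparison in time $\tilde O(n d)$, returning Yes iff equal; if it exceeds the budget, we return No. Correctness follows directly from the key claim: in the Yes case, $|\tilde P| = |\tilde C| = \Theta(n)$, so the hypothesized runtime bound applies and the algorithm completes within budget with $\tilde P = \tilde C$; in the No case, either the equality test fails or the algorithm exceeds its budget, both yielding No. The total running time is $O(n^{2 - \epsilon(d)}) + \tilde O(n d) = O(n^{2 - \epsilon'})$ for some $\epsilon' > 0$.

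The main conceptual step is the key claim, which collapses the dominance question to a set-equality check by crucially exploiting both the inclusion $\tilde C \subseteq \tilde A + \tilde B$ and the antichain property supplied by the extended form; without these features, one would be forced into a more involved padding argument designed to guarantee that the Pareto sum of the computed instance has size $\Theta(n)$, which is delicate in low dimensions since naive augmentations tend to either destroy the original Pareto front or proliferate cross-combinations between original and dummy inputs.
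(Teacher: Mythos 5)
Your proof is correct and follows essentially the same approach as the paper: after normalizing via Lemma~\ref{annoying_construction} so that inclusion and minimality hold, dominance becomes equivalent to $\tilde C$ being the unique Pareto sum of $\tilde A,\tilde B$, and you run the hypothesized output-sensitive algorithm under a time cutoff and compare the result to $\tilde C$. The cutoff is exactly the right device to handle No-instances whose Pareto sum may be superlinear, since the running-time guarantee only applies when the output has size $\Theta(n)$.
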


We conclude this section with our resulting hardness results for computing Pareto Sums.
\lowerboundPS*

\section{Future Work}\label{sec::ende}
While we exhibit a pair of problems that is complete for the class $\mathsf{FOP}_{\mathbb{Z}}$,
one could still ask whether there is a subquadratic reduction from Hausdorff distance under $n$ Translations to Verification of Pareto Sum.
As a result there would be a single complete problem (or rather the canonical multidimensional family of a single geometric problem) for $\mathsf{FOP}_{\mathbb{Z}}$.
\begin{center}
    \emph{ Is Verification of Pareto Sum complete for the class $\mathsf{FOP}_{\mathbb{Z}}$?}
    \end{center}

Interestingly, previous completeness theorems \cite{GaoIKW19} were able to establish a problem of quantifier structure $\forall \forall \exists$ (the Orthogonal Vectors problem) as complete
by making use of a technique in~\cite{DBLP:conf/focs/WilliamsW10} that was originally used to show subcubic equivalence
between All-Pairs Negative Triangle and Negative Triangle.
However, a major problem we encounter is that while the third quantifier in the Orthogonal Vectors problem ranges over a sparse (intuitively: subpolynomially sized) domain (i.e., the dimensions of the vectors),
the third quantifier in Pareto Sum Verification ranges over a linearly sized domain (i.e., the set $C$).

Finally, we ask if our $3$-SUM completeness result for arbitrary quantifier structures can be  improved upon.
\begin{center}
\emph{Can we establish a $d>3$ such that $3$-SUM is complete for 
$\mathsf{FOP}_{\mathbb{Z}}$ formulas of inequality dimension at most $d$?}
\end{center}



\bibliography{descrip3sum}
\appendix

\end{document}